%% file: main_19_polished_-_Copy.tex
\documentclass[11pt]{article}
\usepackage{amsmath,amsfonts,amsthm,mathtools}
 
\usepackage{amssymb}
\usepackage{bbm}
\usepackage{booktabs}
\usepackage{array}
\usepackage{enumitem}
\usepackage{algorithm,algpseudocode}
\usepackage{comment}
\usepackage{xcolor}

\usepackage[top=1in, bottom=1in, left=1.25in, right=1in]{geometry}
\usepackage[authoryear]{natbib}
\usepackage[hidelinks]{hyperref} 
\hypersetup{
  colorlinks=true,
  linkcolor=blue,
  citecolor=blue,
  urlcolor=blue
}
\numberwithin{equation}{section}
\theoremstyle{plain}
\newtheorem{theorem}{Theorem}[section]
\newtheorem{lemma}[theorem]{Lemma}
\newtheorem{proposition}[theorem]{Proposition}
\newtheorem{corollary}[theorem]{Corollary}

\theoremstyle{definition}
\newtheorem{definition}[theorem]{Definition}
\newtheorem{assumption}[theorem]{Assumption}
\newtheorem{remark}[theorem]{Remark}

\theoremstyle{remark}

\newcommand{\EE}{\mathbb{E}} 

\newcommand{\PP}{\mathbb{P}} 
\newcommand{\Var}{\mathrm{Var}}

\newcommand{\R}{\mathbb{R}}
\newcommand{\Ncal}{\mathcal{N}}

\newcommand{\Ecal}{\mathcal{E}}
\newcommand{\Gcal}{\mathcal{G}}

\newcommand{\Hcal}{\mathcal{H}}

\title{Leave-One-Out Neighborhood Smoothing for Graphons:\\
Berry--Esseen Bounds, Confidence Intervals, and {Edge-Wise No-Leakage Tuning\color{black}}}

\author{Behzad Aalipur\thanks{Department of Mathematical Sciences, University of Cincinnati.
Correspondence: \texttt{aalipubd@mail.uc.edu}}
\and
Rachel Kilby\thanks{Department of Mathematical Sciences, University of Cincinnati.
Correspondence: \texttt{kilbyre@mail.uc.edu}}
}

\begin{document}
\maketitle
\author{}

\date{}

\begin{abstract}
We investigate entrywise uncertainty quantification in graphon models,
where existing methods can achieve strong estimation guarantees but often
lack tractable distributional theory for inference on individual edge
probabilities. We address this difficulty by introducing a leave-one-out
(LOO) neighborhood smoother that constructs each neighborhood without using
the corresponding target-column edges. Conditional on the selected
neighborhood and the latent probability matrix, the observations being
averaged are therefore independent Bernoulli variables whose success
probabilities may differ. This
separation yields finite-sample concentration bounds and a Berry--Esseen
normal-approximation bound for each fixed target pair, with a numerical
constant independent of that pair. 
Under a hybrid
blockwise model with smooth blocks, blocks having identical probability
profiles, and quantitative two-hop identifiability, we obtain an explicit
bias bound that separates geometric localization error from stochastic error
in the empirical two-hop distance. The resulting conditional risk upper
bound balances at $h_n\asymp n^{2/3}$, whereas centered Gaussian inference
requires a smaller neighborhood size. Bias control and Gaussian confidence
intervals for $P_{ij}$ require the smoothing vertex to lie in an unobserved
trimmed interior region; the observable empirical Bernstein interval instead
targets the selected-neighborhood mean because bias-aware coverage of
$P_{ij}$ requires an unknown model-dependent constant.
Finally, the same leave-one-out structure yields an edge-wise no-leakage
cross-validation identity: the held-out edge is not used to construct its own
predictor.
\color{black}
\end{abstract}
\textbf{Keywords:} graphon estimation, leave-one-out smoothing, Berry--Esseen theorem, empirical Bernstein, no-leakage cross-validation

\section{Introduction}
\label{sec:intro}

Graphon models provide a flexible nonparametric framework for representing
large, dense exchangeable networks. Conditional on latent variables, the
observed adjacency matrix is generated by independent Bernoulli edges whose
success probabilities are determined by a symmetric measurable kernel
\citep{borgs2008convergent,lovasz2012large}. A substantial literature studies
estimation of either the realized probability matrix or the underlying graphon
from a single observed adjacency matrix. Proposed approaches include
stochastic block model approximations, histogram and least-squares graphon
estimators, and spectral procedures such as Universal Singular Value
Thresholding (USVT)
\citep{airoldi2013stochastic,chan2014,chatterjee2015matrix,gao2015rate,xu2018rates}.
For specified graphon classes and loss criteria, some of these procedures attain
minimax or near-minimax rates. The corresponding statistical guarantees and
computational requirements depend materially on the structural class and loss
criterion, and computational lower bounds establish statistical--computational
gaps for certain graphon-estimation problems
\citep{chatterjee2015matrix,gao2015rate,Luo2024lowerbound}.

Among these approaches, neighborhood-based smoothing offers an attractive
combination of flexibility and local adaptivity. A leading example is the
procedure of \citet*{zhang2017estimating}, hereafter abbreviated ZLZ. Under
piecewise-Lipschitz graphons,
Theorem~1 gives a high-probability upper bound of order
$(\log n/n)^{1/2}$ for the squared normalized $(2,\infty)$ loss
$d_{2,\infty}^{2}(\widetilde P,P)$. Their Theorem~2 establishes a minimax
expected squared-loss lower bound of order $(n\log n)^{-1/2}$ under the same
criterion. Their estimator is therefore nearly rate-optimal under this
criterion, up to a factor of order $\log n$. We review the construction and
its dependence structure in Appendix~\ref{app:zlz-comparison}.
\color{black}

Rigorous uncertainty quantification directly targeted at an individual latent
edge probability $P_{ij}$ is comparatively less developed, particularly for
nonparametric probability-matrix estimators based on data-dependent
neighborhoods. To our knowledge, the graphon-specific literature instead
includes uncertainty quantification for graphon estimation
\citep{su2022uncertainty}, fluctuation theory and confidence procedures for
subgraph densities or network moments
\citep{Bhattacharya_Chatterjee_Janson_2023,ChatterjeeDanBhattacharya2024},
and graphon-based methods for network comparison and testing
\citep{Paulin2023GraphonClustering}. 
Related bootstrap and resampling methods
address motif densities and other count functionals
\citep{green2022bootstrapping,zu2025local}, degree-distribution functionals
\citep{gel2017bootstrap}, broader network functionals
\citep{nowakowicz2025thesis}, and spectral quantities
\citep{aalipur2026bootstrap}.
\color{black}
Parametric bootstrap and conformal methods,
respectively, consider network statistics under non-exchangeable models and
node-label prediction
\citep{shao2024parametric,Zargar2023conformal}. 
Complementary theory develops
empirical-process tools for exchangeable arrays
\citep{davezies2021empirical}, jackknife empirical likelihood for sparse
networks \citep{matsushita2021jackknife}, minimax and uniform-rate theory for
nonparametric dyadic regression \citep{graham2021minimax}, and uniform
inference for dyadic kernel-density estimation
\citep{cattaneo2024uniform}. These contributions concern different statistical
targets and do not directly provide a bias-aware, finite-sample confidence
interval for an individual $P_{ij}$ estimated by neighborhood smoothing.
\color{black}

Motivated by this gap, we propose a leave-one-out (LOO) neighborhood smoother
that excludes the target column from neighborhood construction. The
construction and its consequences are summarized by four contributions:
\begin{enumerate}
    \item We introduce a leave-one-out neighborhood-smoothing framework that
    separates neighborhood selection from the target-column observations.
    Conditional on the selected neighborhood and $P$, the averaged
    target-column observations are independent Bernoulli variables whose
    success probabilities may differ, thereby enabling entrywise uncertainty
    quantification.

    \item Under a hybrid blockwise model comprising smooth blocks and blocks
    with identical probability profiles, together with quantitative two-hop
    identifiability, we establish a neighborhood-size-dependent bias bound
    comprising the geometric term $h_n/n$ and the stochastic two-hop
    approximation term $(\log n/n)^{1/2}$; the notation is defined in
    \eqref{eq:bandwidth-scales}. We also derive the corresponding conditional
    risk bound. The neighborhood-size-dependent terms balance at
    $h_n\asymp n^{2/3}$.

    \item We derive finite-sample, variance-adaptive concentration bounds,
    bias-aware confidence guarantees, and a pointwise Berry--Esseen bound for
    the centered stochastic fluctuation, with a constant that is uniform over
    target pairs. The analysis uses classical results for independent bounded
    variables and triangular arrays
    \citep{maurer2009empirical,Petrov1975}. The observable
    empirical-Bernstein interval targets the selected-neighborhood mean;
    coverage of $P_{ij}$ requires an additional, generally unknown bias
    margin. Centered Gaussian inference for the full estimation error is
    conditional on the trimmed interior region and requires
    $h_n=o(n^{2/3})$. Thus, the inference neighborhood must be smaller than the
    $n^{2/3}$ scale that balances the conditional risk bound.

    \item We propose an edge-wise leave-one-out cross-validation criterion for
    neighborhood-size selection. The held-out edge is not used to construct
    its own predictor, and the criterion satisfies a conditional risk
    identity. Oracle and post-selection guarantees for the data-selected
    neighborhood size are left for future work.
\color{black}
\end{enumerate}

Section~\ref{sec:loo} introduces the model, assumptions, and LOO estimator,
followed by a comparison of the relevant dependence structures and consolidated
notation tables. Section~\ref{sec:main_results} presents the risk and
inferential results, while Sections~\ref{sec:bias-geometry} and
\ref{sec:tuning} develop the bias geometry and the no-leakage tuning criterion,
respectively. Section~\ref{sec:sim} reports the simulation results.
Appendix~\ref{app:zlz-comparison} provides the detailed comparison with the ZLZ
procedure, and the remaining appendices contain the proofs.
\section{Model and leave-one-out neighborhood smoothing}
\label{sec:loo}
\subsection{Model setup}
We consider a single undirected simple graph on $n$ vertices, with vertex
set $[n]:=\{1,\ldots,n\}$ and adjacency matrix
$A\in\{0,1\}^{n\times n}$. The graph has no self-loops, so $A_{ii}=0$ for
all $i$. The observed edges are treated as realizations of underlying latent
probabilities. Conditional on the latent probability matrix $P$, the
upper-triangular edge variables
\[
\{A_{ij}:1\le i<j\le n\}
\]
are mutually independent and satisfy
\begin{equation}
\label{eq:model}
A_{ij}\mid P
\sim
\operatorname{Bernoulli}(P_{ij}),
\qquad
1\le i<j\le n.
\end{equation}
We set $A_{ji}=A_{ij}$ and $A_{ii}=0$. Under the graphon parametrization \eqref{eq:graphon-probability} below, the conditional distribution of $A$ depends on the latent variables only through $P$.

To move into a continuous nonparametric setting, we adopt the standard
graphon parametrization. The graphon $f:[0,1]^2\to[0,1]$ is a symmetric,
measurable kernel. Each vertex $i$ has an independent
$\operatorname{Uniform}[0,1]$ latent position $\xi_i$. For $i\neq j$,
\begin{equation}
\label{eq:graphon-probability}
P_{ij}=f(\xi_i,\xi_j),
\qquad i\neq j.
\end{equation}
Only off-diagonal entries of $P$ are statistical parameters. Its diagonal
does not affect the law of the loop-free graph and is set to zero in the
simulation matrices.

For a latent position $u\in[0,1]$, the graphon slice
$f(u,\cdot):v\mapsto f(u,v)$ is its connectivity profile: it gives the edge
probability from a vertex at position $u$ to every possible partner position
$v$. Thus two vertices have similar connection behavior when their slices
are close. The assumptions below first describe the graphon class and then
state the identifiability conditions needed for inference.

\subsection{Graphon regularity assumptions}
\label{subsec:regularity-assumptions}

\begin{assumption}[Hybrid blockwise graphon with smooth and identical-profile blocks]
\label{as:hybrid}
There exists a fixed integer $K\ge1$, independent of $n$, and a fixed
partition
\[
0=b_0<b_1<\cdots<b_K=1,
\qquad
I_m:=(b_m,b_{m+1}),
\quad
m=0,\ldots,K-1.
\]
The partition endpoints do not depend on $n$. Since the latent variables
have a continuous distribution,
\[
\PP\!\left(
\xi_i\in\{b_0,\ldots,b_K\}
\text{ for some }i
\right)
=0.
\]
All subsequent block-membership statements are therefore understood on
this probability-one event. For each $m=0,\ldots,K-1$, exactly one of the
following two conditions holds.

\begin{enumerate}
    \item \textbf{Smooth block.}
    There exist constants $c_m>0$ and $L_m>0$ such that
    \[
    c_m|u-u'|
    \le
    \|f(u,\cdot)-f(u',\cdot)\|_{L^2},
    \qquad
    u,u'\in I_m,
    \]
    and
    \[
    |f(u,v)-f(u',v)|
    \le
    L_m|u-u'|,
    \qquad
    u,u'\in I_m,\quad v\in[0,1].
    \]
    The pointwise Lipschitz condition implies the corresponding upper
    $L^2$ bound
    \[
    \|f(u,\cdot)-f(u',\cdot)\|_{L^2}
    \le
    L_m|u-u'|,
    \qquad u,u'\in I_m,
    \]
    so no separate upper $L^2$ constant is needed.

    \item \textbf{Block with identical probability profiles.}
    For all $u,u'\in I_m$ and all $v\in[0,1]$,
    \[
    f(u,v)=f(u',v).
    \]
    Thus every latent position in $I_m$ has the same probability profile
    $f(u,\cdot)$. We use ``identical-profile block'' as shorthand for this
    property.
\end{enumerate}
The integer $K$ and all block-specific constants appearing above are fixed
and do not depend on $n$.

\end{assumption}

\begin{assumption}[Non-degeneracy]
\label{as:nondeg}
There exists a constant \(\delta \in (0,1/2)\), independent of $n$, such that the graphon satisfies
\begin{equation}
\label{eq:non-deg}
\delta \le f(u,v) \le 1 - \delta \qquad \text{for all } (u,v) \in [0,1]^2.
\end{equation}
For every $i\neq j$, this condition implies
\[
\delta(1-\delta)
\le
P_{ij}(1-P_{ij})
\le
\frac14,
\]
and therefore ensures uniformly non-vanishing conditional edge variances.
\end{assumption}

Assumption~\ref{as:nondeg} is needed for the Berry--Esseen bound and for
consistent variance estimation, but not for the empirical Bernstein
inequality. It excludes sparse graphon sequences for which the edge
probabilities approach zero with $n$.

These regularity conditions apply to a fixed representation of the graphon.
Conditions based on $|u-u'|$ need not be preserved by an arbitrary
measure-preserving change of coordinates. The regularity class includes
dense stochastic block models whose block probabilities are uniformly
bounded away from zero and one, as well as graphons with finitely many
regions satisfying the stated smooth-block conditions. The complete
theoretical framework additionally requires the two-hop identifiability and
across-block separation conditions in Assumption~\ref{as:nondeg-block};
consequently, not every dense stochastic block model or piecewise-smooth
graphon is covered by the inferential results.
\color{black}

\subsection{Identifiability assumptions}
\label{subsec:identifiability-assumptions}

\begin{assumption}[Quantitative two-hop identifiability and separation]
\label{as:nondeg-block}
For
\[
u,u'\in\bigcup_{m=0}^{K-1} I_m,
\]
define the population two-hop discrepancy
\begin{equation}
\label{eq:D2-definition}
D_2(u,u')
:=
\operatorname*{ess\,sup}_{w\in[0,1]}
\left|
\int_0^1
\{f(u,x)-f(u',x)\}f(x,w)\,dx
\right|.
\end{equation}
By symmetry, Assumption~\ref{as:hybrid} makes the integrand continuous in
$w$ on each smooth block and constant in $w$ on each identical-profile
block. Because the partition has only finitely many boundaries, equivalently
\begin{equation}
\label{eq:D2-blockwise}
D_2(u,u')
=
\max_{0\le m\le K-1}
\sup_{w\in I_m}
\left|
\int_0^1
\{f(u,x)-f(u',x)\}f(x,w)\,dx
\right|.
\end{equation}
Thus, for latent positions lying in the block interiors, $D_2$ is
insensitive to the values of the $w$-section at the finitely many block
boundaries. Since the latent variables have a continuous distribution,
these boundaries are encountered with probability zero.

The following conditions hold.
\begin{enumerate}
    \item \textbf{Within-block quantitative identifiability.}
    For each smooth block $I_m$ there exists $\kappa_m>0$ such that
    \[
    \sup_{w\in I_m}
    \left|
    \int_0^1 \{f(u,x)-f(u',x)\}f(x,w)\,dx
    \right|
    \ge
    \kappa_m\,
    \|f(u,\cdot)-f(u',\cdot)\|_{L^2}
    \]
    for all $u,u'\in I_m$.

    \item \textbf{Across-block two-hop separation.}
    There exists $\eta_2>0$ such that
    \[
    D_2(u,u')\ge \eta_2
    \]
    whenever $u\in I_m$, $u'\in I_{m'}$, and $m\neq m'$.
\end{enumerate}
The constants $\kappa_m$ and $\eta_2$ are fixed and do not depend on $n$.
\end{assumption}

\paragraph{A sufficient condition for identifiability.}
For a smooth block $I_m$, define
\[
\mathcal H_m
:=
\operatorname{span}\bigl\{
f(u,\cdot)-f(u',\cdot):u,u'\in I_m
\bigr\}
\subset L^2[0,1],
\]
and let
\[
(T_f g)(w):=\int_0^1 g(x)f(x,w)\,dx.
\]

Suppose that $\mathcal H_m$ is finite dimensional and that
$g\mapsto (T_f g)|_{I_m}$ is injective on $\mathcal H_m$. Then there
exists $\kappa_m>0$ such that
\[
\sup_{w\in I_m}|(T_f g)(w)|
\ge
\kappa_m\|g\|_{L^2}
\qquad
\text{for every }g\in\mathcal H_m.
\]
To justify the compactness argument, first observe that symmetry and
Assumption~\ref{as:hybrid} imply that $T_fg$ is continuous on each block.
Moreover, because $0\le f\le1$, for all $g,h\in L^2[0,1]$,
\[
\sup_{w\in I_m}
\left|
T_f(g-h)(w)
\right|
\le
\|g-h\|_{L^2}.
\]
Hence
\(
g\longmapsto\sup_{w\in I_m}|(T_fg)(w)|
\)
is continuous in the $L^2$ norm. It is strictly positive on the
$L^2$-unit sphere of $\mathcal H_m$ by injectivity. Since this unit sphere
is compact, the displayed functional has a strictly positive minimum.
Consequently, Assumption~\ref{as:nondeg-block}(1) holds on $I_m$.
This gives, for example, a convenient verification route for finite-rank
graphons whose within-block slice differences span a finite-dimensional
space.

The first identifiability condition converts a small two-hop discrepancy
into a small $L^2$ distance between graphon slices. The second prevents
vertices from different blocks from being treated as neighbors. These
conditions are needed for the bias analysis, not for the conditional
independence result. They do not follow from Lipschitz continuity alone. In
an infinite-dimensional class, injectivity of the two-hop operator may hold
without a positive uniform lower bound, so the condition must be checked for
the graphon class under study.
\color{black}

\subsection{Geometric interior condition}
\label{subsec:interior-condition}
The bias proof requires the smoothing vertex to remain a fixed distance from
the boundary of a smooth block. The following assumption defines that
trimmed region.

\begin{assumption}[Interior region for inference]
\label{as:interior}
Fix $\varepsilon>0$ sufficiently small that
\[
0<\varepsilon<\frac12\min_{m:\,I_m\text{ smooth}}|I_m|,
\]
whenever at least one smooth block is present. Define
\begin{equation}
\label{eq:interior-sets}
\begin{aligned}
I_\varepsilon
&:=
\left(\bigcup_{m:\,I_m\text{ smooth}}[b_m+\varepsilon,b_{m+1}-\varepsilon]\right)
\cup
\left(\bigcup_{m:\,I_m\text{ has identical profiles}}I_m\right),\\
\Gcal_\varepsilon
&:=\{i\in[n]:\xi_i\in I_\varepsilon\},\\
\Ecal_\varepsilon
&:=\{(i,j):i\in\Gcal_\varepsilon,\ j\neq i\}.
\end{aligned}
\end{equation}
For deterministic $i$, define also
\begin{equation}
\label{eq:G-interior}
G_{i,\varepsilon}:=\{\xi_i\in I_\varepsilon\}.
\end{equation}
Only the smoothing index $i$ must lie in the trimmed region for the
one-sided estimator; for the symmetrized estimator, both one-sided
directions require the geometric condition.
\end{assumption}

The estimator is defined for every pair $(i,j)$, but the uniform bias and
coverage results apply only when the smoothing vertex lies in
$I_\varepsilon$. Because the latent positions are unobserved, this condition
cannot generally be checked from the observed graph. Under the $\operatorname{Uniform}[0,1]$ design, the strong law of large
numbers gives
\(
\frac{|\Gcal_\varepsilon|}{n}
\xrightarrow{\mathrm{a.s.}}
\lambda(I_\varepsilon).
\)
Because
\(
|\Ecal_\varepsilon|
=
|\Gcal_\varepsilon|(n-1),
\)
the realized fraction of eligible ordered pairs for the one-sided estimator
satisfies
\[
\frac{|\Ecal_\varepsilon|}{n(n-1)}
=
\frac{|\Gcal_\varepsilon|}{n}
\xrightarrow{\mathrm{a.s.}}
\lambda(I_\varepsilon).
\]
When both smoothing directions of the symmetrized estimator must satisfy the
interior condition, the corresponding realized fraction is
\[
\frac{
|\Gcal_\varepsilon|
\bigl(|\Gcal_\varepsilon|-1\bigr)
}{
n(n-1)
}
\xrightarrow{\mathrm{a.s.}}
\lambda(I_\varepsilon)^2.
\]
These limiting fractions need not be close to one.
\color{black}

\subsection{LOO neighborhoods and estimator}
\label{subsec:loo}
Let $V:=[n]$ denote the vertex set. Fix $j\in[n]$ and let
$A^{(-j)}$ be the $(n-1)\times(n-1)$ adjacency matrix obtained by deleting
row and column $j$ from $A$. Define the two-hop matrices
\begin{equation}
\label{eq:twohop-matrices}
M := \frac{1}{n}A^2,
\qquad
M^{(-j)} := \frac{1}{n-1}\big(A^{(-j)}\big)^2.
\end{equation}
The normalization factors $n$ and $n-1$ are the respective dimensions of
the full and reduced adjacency matrices and place the corresponding
two-hop counts on comparable average scales. The two matrices nevertheless
differ through both the deletion of vertex $j$ and the associated change in
normalization.
For distinct $i,k\in V\setminus\{j\}$, define the leave-one-out (LOO) distance
\begin{equation}
  \label{eq:d-j}
  d^{(-j)}(i,k)
  :=
  \max_{\ell \in V \setminus \{i,k,j\}} |M^{(-j)}_{i\ell} - M^{(-j)}_{k\ell}|.
\end{equation}
Let
\[
d^{(-j)}_{i,(1)} \le \cdots \le d^{(-j)}_{i,(n-2)}
\]
denote the order statistics of the set
\[
\bigl\{ d^{(-j)}(i,k) : k \in V\setminus\{i,j\} \bigr\}.
\]
Fix an integer $h_n$ satisfying $1\le h_n\le n-2$. Define the leave-one-out
neighborhood of vertex $i$ by selecting the $h_n$ nearest neighbors under
$d^{(-j)}$. To fix its cardinality, we break distance
ties by increasing vertex label. This convention makes the neighborhood
measurable with cardinality $h_n$. None of the subsequent rate
arguments depends on the particular deterministic rule used to resolve
ties. Formally, let $\pi$ be the permutation of $V\setminus\{i,j\}$ that
orders the pairs $(d^{(-j)}(i,k),k)$ lexicographically. We define
\begin{equation}
  \label{eq:neighborhood}
 \Ncal_i^{(-j)} = \big\{ \pi(1), \pi(2), \dots, \pi(h_n) \big\}.
\end{equation}

Here and throughout, $h_n$ is the deterministic integer neighborhood size,
not a neighborhood proportion. Define
\begin{equation}
\label{eq:bandwidth-scales}
\rho_n:=\frac{h_n}{n},
\qquad
r_n:=\left(\frac{\log n}{n}\right)^{1/2},
\qquad
b_n(h_n):=\rho_n+r_n=\frac{h_n}{n}+r_n,
\end{equation}
and assume the geometric regime
\begin{equation}
\label{eq:geometric-regime}
C_0\log n\le h_n=o(n)
\end{equation}
for a sufficiently large constant $C_0>0$.
The lower bound provides enough nearby candidate vertices uniformly over
all target pairs, while the upper bound makes the localization radius
$h_n/n$ tend to zero.
The dimensionless quantity $\rho_n=h_n/n$ is the latent localization
proportion required, up to constants, to collect $h_n$ nearby latent
vertices, whereas $r_n$ is the uniform empirical two-hop approximation
error.
\color{black}

\begin{definition}[LOO neighborhood smoother]
\label{def:loo-smoother}
Given \(\Ncal_i^{(-j)}\), define the one-sided LOO smoother and its
symmetrized version by
\begin{equation}
\label{eq:estimator}
\tilde P_{ij}\equiv\tilde P_{ij}^{(-j)}(h_n)
:=\frac{1}{h_n}\sum_{k\in\Ncal_i^{(-j)}}A_{kj},\qquad
\widehat P_{ij}:=\frac{\tilde P_{ij}+\tilde P_{ji}}{2}.
\end{equation}
We call $\tilde P_{ij}$ one-sided because it smooths only in the first
index: it estimates the $i$th probability profile at column $j$ by averaging
$A_{kj}$ over neighbors $k$ of $i$. The reverse estimate $\tilde P_{ji}$
smooths in the other direction, and $\widehat P_{ij}$ averages the two
directions to respect the symmetry of the undirected graph.
\end{definition}

For any $i\neq j$, the error has two parts: random variation from the
selected target-column edges and bias from averaging probabilities that may
differ from $P_{ij}$. The probability bounds apply to the first part, while
the geometric analysis controls the second. We write
\begin{equation}
\label{eq:error-decomposition}
\tilde P_{ij}-P_{ij}=U_{ij}^{(-j)}+B_{ij}^{(-j)}.
\end{equation}
Here the stochastic component is defined as
\begin{equation}
\label{eq:Uij_def}
    U_{ij}^{(-j)} 
    := \frac{1}{h_n}\sum_{k \in \Ncal_i^{(-j)}} (A_{kj} - P_{kj}),
\end{equation}
and the conditional bias component is defined as
\begin{equation}
\label{eq:Bij_def}
    B_{ij}^{(-j)} 
    := \frac{1}{h_n}\sum_{k \in \Ncal_i^{(-j)}} P_{kj} - P_{ij}.
\end{equation}
Appendix~\ref{app:zlz-comparison} gives the detailed construction and
identifies the relevant equations in the main article and Supplementary
Material of \citet{zhang2017estimating}. The key difference is that ZLZ uses the full adjacency matrix to select
neighbors. Its selected neighborhood can therefore depend on the
target-column edges that are subsequently averaged. The LOO method removes
the target vertex before selecting neighbors. Consequently,
$\Ncal_i^{(-j)}$ is measurable with respect to the sigma-field generated by
the edges not incident to $j$. Conditional on $P$, this sigma-field is
independent of the sigma-field generated by the target-column edges
$\{A_{kj}:k\neq j\}$. Therefore, conditional on
$(\Ncal_i^{(-j)},P)$, the selected target-column edges are mutually
independent Bernoulli variables.
\color{black}

For point estimation, it is natural to symmetrize the estimator through
\[
\widehat P_{ij}
=
\frac{\tilde P_{ij}+\tilde P_{ji}}{2}
\]
to respect the symmetry of the undirected graph. Using $P_{ji}=P_{ij}$ and
the convexity of $x\mapsto x^2$ gives
\[
(\widehat P_{ij}-P_{ij})^2
\le
\frac12(\tilde P_{ij}-P_{ij})^2
+
\frac12(\tilde P_{ji}-P_{ij})^2.
\]
Thus, an entrywise squared-error upper bound established for both one-sided
directions transfers to the symmetrized estimator. This deterministic
inequality does not transfer the one-sided conditional concentration or
normal-approximation results: the estimators $\tilde P_{ij}$ and
$\tilde P_{ji}$ are generally dependent through their selected
neighborhoods and shared edge variables. Accordingly, all inferential
results in this paper are developed for the one-sided estimator
$\tilde P_{ij}$. Its conditional bias is controlled separately by the
geometric analysis in Section~\ref{sec:bias-geometry}.

\subsection{Notation and conventions}
\label{subsec:notation}
Write $\mathbb R$ for the real line and $\mathbb R_+:=[0,\infty)$. For a
finite set $S$, $|S|$ denotes its cardinality, $S\setminus T$ denotes set
difference, and $\mathbbm{1}\{E\}$ is the indicator of an event $E$. The
symbol $\log$ denotes the natural logarithm. We use $\PP$, $\EE$, $\Var$,
and $\mathcal L$ for probability, expectation, variance, and law,
respectively; a vertical bar specifies conditioning. Thus
$\mathcal L(X\mid\mathcal F)$ is the conditional law of $X$ given
$\mathcal F$. The symbols $P$ and $\PP$ have different meanings: $P$ is
the latent probability matrix, while $\PP$ is the probability operator.
For any matrix $B$, $B_{i\cdot}$ denotes its $i$th row. The notation
$N(0,1)$ denotes the standard normal law,
$\xrightarrow{d}$ denotes convergence in distribution, and
$\xrightarrow{\PP}$ denotes convergence in probability.

For deterministic positive sequences $a_n$ and $b_n$, the relations
$a_n=O(b_n)$ and $a_n=o(b_n)$ have their usual deterministic meanings,
$a_n\asymp b_n$ means that their ratio is bounded above and below by
positive constants for all sufficiently large $n$, and $a_n\ll b_n$ is
synonymous with $a_n=o(b_n)$. The notation
$X_n=\mathcal O_{\PP}(b_n)$ means that $X_n/b_n$ is bounded in
probability, and $X_n=o_{\PP}(b_n)$ means that $X_n/b_n$ converges to zero
in probability. Generic positive constants $C$ and $c$ may change from line to
line unless indexed. An index tuple is called admissible when it satisfies
the distinctness and exclusion conditions displayed in the relevant
definition; in particular, all sets over which maxima or averages are
taken are nonempty. A superscript such as $(-j)$ indicates that vertex $j$
and all incident edges are removed before the corresponding object is
constructed.

Throughout, $\lambda$ denotes Lebesgue measure on $[0,1]$, $\Phi$ denotes
the standard normal distribution function, and $z_p:=\Phi^{-1}(p)$ denotes
its $p$-quantile. For $g_1,g_2\in L^2([0,1],\lambda)$, define
\begin{equation}
\label{eq:L2-notation}
\langle g_1,g_2\rangle_{L^2}
:=
\int_0^1 g_1(x)g_2(x)\,dx,
\qquad
\|g_1\|_{L^2}
:=
\left(\int_0^1 |g_1(x)|^2\,dx\right)^{1/2}.
\end{equation}
For subsets $E,F\subset\mathbb R$, their Minkowski sum is
$E\oplus F:=\{e+f:e\in E,\ f\in F\}$. Tables~\ref{tab:notation-model}
and~\ref{tab:notation-estimation} collect the recurring notation; symbols
used only locally are defined where they first appear.
\color{black}

\begin{table}[H]
\centering
\caption{Model and geometric notation.}
\label{tab:notation-model}
\small
\begin{tabular}{>{\raggedright\arraybackslash}p{0.25\textwidth}>{\raggedright\arraybackslash}p{0.66\textwidth}}
\toprule
Symbol & Meaning \\
\midrule
$[n]=V$ & Vertex set $\{1,\ldots,n\}$. \\
$A$ & Observed symmetric adjacency matrix, with $A_{ii}=0$. \\
$P$ & Latent edge-probability matrix; only its off-diagonal entries are statistical parameters. \\
$f$; $\xi_i$ & Graphon and the $\operatorname{Uniform}[0,1]$ latent position of vertex $i$, with $P_{ij}=f(\xi_i,\xi_j)$. \\
$K$; $b_m$; $I_m$ & Number of fixed latent blocks, their endpoints, and $I_m=(b_m,b_{m+1})$ for $m=0,\ldots,K-1$. \\
$\lambda$; $\langle\cdot,\cdot\rangle_{L^2}$; $\|\cdot\|_{L^2}$ & Lebesgue measure, inner product, and norm on $L^2([0,1],\lambda)$. \\
$D_2(u,u')$ & Population two-hop discrepancy in \eqref{eq:D2-definition}. \\
$I_\varepsilon$; $\Gcal_\varepsilon$; $\Ecal_\varepsilon$ & Trimmed latent region, vertices whose latent positions lie in it, and admissible ordered target pairs; see \eqref{eq:interior-sets}. \\
$G_{i,\varepsilon}$ & Event that the smoothing vertex $i$ lies in $I_\varepsilon$; see \eqref{eq:G-interior}. \\
$\Ecal_0$; $\Ecal_{\mathrm{master}}$ & Preliminary and master high-probability events used in the geometric analysis; see \eqref{eq:master-event}. \\
\bottomrule
\end{tabular}
\end{table}

\begin{table}[H]
\centering
\caption{Smoothing, inference, and tuning notation.}
\label{tab:notation-estimation}
\small
\begin{tabular}{>{\raggedright\arraybackslash}p{0.27\textwidth}>{\raggedright\arraybackslash}p{0.64\textwidth}}
\toprule
Symbol & Meaning \\
\midrule
$A^{(-j)}$; $M$; $M^{(-j)}$ & Reduced adjacency matrix and the full and reduced normalized two-hop matrices in \eqref{eq:twohop-matrices}. \\
$d^{(-j)}(i,k)$; $\Ncal_i^{(-j)}$ & LOO two-hop distance and the resulting fixed-cardinality neighborhood of the smoothing vertex $i$. \\
$h_n$ & Deterministic integer neighborhood size used in the theory; it is a count, not a proportion. \\
$\rho_n=h_n/n$ & Dimensionless LOO localization proportion. \\
$r_n$; $b_n(h_n)$ & Uniform two-hop approximation scale $r_n=(\log n/n)^{1/2}$ and combined bias scale $b_n(h_n)=h_n/n+r_n$. \\
$\tilde P_{ij}$; $\widehat P_{ij}$ & One-sided LOO estimator and its symmetrized version. \\
$U_{ij}^{(-j)}$; $B_{ij}^{(-j)}$ & Centered stochastic fluctuation and conditional smoothing bias in \eqref{eq:error-decomposition}. \\
$V_{ij}$; $s_{ij}^2$; $\widehat V_{ij}^{\mathrm{emp}}$ & Oracle conditional variance, empirical sample variance, and feasible estimator variance. \\
$d_{\mathrm{ZLZ}}$; $\tau_n$; $Q_i^{\mathrm{ZLZ}}(\tau_n)$ & Classical ZLZ distance, dimensionless neighborhood-quantile proportion, and the corresponding quantile cutoff; see Appendix~\ref{app:zlz-comparison}. \\
$\Ncal_i^{\mathrm{ZLZ}}$; $\widetilde P_{ij}^{\mathrm{ZLZ}}$ & Classical ZLZ neighborhood and one-sided smoother; see Appendix~\ref{app:zlz-comparison}. \\
$h$; $\Hcal$; $\bar h_n$ & Generic candidate integer size, finite tuning grid, and deterministic upper bound for an undersmoothing grid. \\
$\widehat R_i(h)$; $R_i(h)$; $\widehat h_i$ & Observable CV score, row-averaged conditional MSE, and selected row-specific neighborhood size in Section~\ref{sec:tuning}. \\
\bottomrule
\end{tabular}
\end{table}

\section{Main results: entrywise uncertainty quantification}
\label{sec:main_results}
This section presents the main results for the one-sided LOO estimator.
Subsection~3.1 gives conditional independence, the MSE bound, and the bias
bound. Subsection~3.2 gives the empirical Bernstein interval.
Subsection~3.3 develops the Berry--Esseen bound, asymptotic normality, and
feasible variance estimation. The section ends with implementation and
computational cost.

\subsection{Conditional risk and uniform bias control}
The following proposition states the conditional independence property used
in the stochastic analysis.
\begin{proposition}[Conditional independence]
\label{prop:independence}
Conditional on $\Ncal_i^{(-j)}$ and $P$, the selected target-column
variables are mutually independent and, for every
$k\in\Ncal_i^{(-j)}$,
\[
A_{kj}\mid(\Ncal_i^{(-j)},P)
\sim
\operatorname{Bernoulli}(P_{kj}).
\]
Equivalently, for every realization $S$ of $\Ncal_i^{(-j)}$ with positive
conditional probability given $P$,
\[
\mathcal L\!\left(
(A_{kj})_{k\in S}
\,\middle|\,
\Ncal_i^{(-j)}=S,P
\right)
=
\bigotimes_{k\in S}\operatorname{Bernoulli}(P_{kj}).
\]
Consequently,
$\{A_{kj}-P_{kj}:k\in\Ncal_i^{(-j)}\}$ are conditionally independent,
mean zero, and supported in $[-1,1]$.
\end{proposition}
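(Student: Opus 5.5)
The argument is a measurability-plus-independence argument carried out conditionally on $P$; no probabilistic estimates are needed.

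\emph{Measurability.} Define the two $\sigma$-fields
\[
\mathcal F^{(-j)}:=\sigma\bigl(A_{k\ell}:k,\ell\neq j\bigr),
\qquad
\mathcal F_j:=\sigma\bigl(A_{kj}:k\neq j\bigr).
\]
By construction, $A^{(-j)}$ is a function of the entries $A_{k\ell}$ with $k,\ell\neq j$ only. Hence $M^{(-j)}$ is such a function, and so is every distance $d^{(-j)}(i,k)$ in \eqref{eq:d-j}. The ordering permutation $\pi$ is also such a function, because ties are broken by the deterministic label rule. It follows that $\Ncal_i^{(-j)}$ is $\mathcal F^{(-j)}$-measurable and takes values in the finite collection of $h_n$-subsets of $V\setminus\{i,j\}$. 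In particular, $\{\Ncal_i^{(-j)}=S\}\in\mathcal F^{(-j)}$ for every such $S$.

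\emph{Independence.} Under \eqref{eq:model}, conditional on $P$, the upper-triangular edge variables are mutually independent. The generators of $\mathcal F^{(-j)}$ and of $\mathcal F_j$ correspond to disjoint sets of unordered pairs: pairs not containing $j$ versus pairs containing $j$. Using $A_{kj}=A_{jk}$ and grouping these disjoint blocks of independent variables, $\mathcal F^{(-j)}$ and $\mathcal F_j$ are conditionally independent given $P$. Moreover, the $A_{kj}$, $k\neq j$, are themselves mutually independent given $P$, with laws $\operatorname{Bernoulli}(P_{kj})$.

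\emph{Conditioning.} Fix $S$ with $\PP(\Ncal_i^{(-j)}=S\mid P)>0$ and any $a\in\{0,1\}^S$. Then
\[
\PP\bigl((A_{kj})_{k\in S}=a,\ \Ncal_i^{(-j)}=S\,\big|\,P\bigr)
=\PP\bigl((A_{kj})_{k\in S}=a\,\big|\,P\bigr)\,\PP\bigl(\Ncal_i^{(-j)}=S\,\big|\,P\bigr)
\]
by the independence step. Dividing by the second factor gives
\[
\prod_{k\in S}P_{kj}^{a_k}(1-P_{kj})^{1-a_k},
\]
which is the stated product law. The marginal statement $A_{kj}\mid(\Ncal_i^{(-j)},P)\sim\operatorname{Bernoulli}(P_{kj})$ follows immediately. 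Finally, given $(\Ncal_i^{(-j)},P)$ the $P_{kj}$ are fixed constants, so the centered variables $A_{kj}-P_{kj}$ are conditionally independent and have mean zero. Since $A_{kj}\in\{0,1\}$ and $P_{kj}\in[0,1]$, each lies in $[-1,1]$.

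\emph{Main obstacle.} The only delicate point is making the conditioning on $P$ rigorous, since $P$ is random through $\xi$. I would work with regular conditional distributions given $\xi$, or equivalently treat $P$ as fixed. Because the law of $A$ given $\xi$ depends only on $P$, the argument above applies pathwise for almost every realization of $\xi$.
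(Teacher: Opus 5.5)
Your proposal is correct and follows essentially the same route as the paper: both define the $\sigma$-field of edges not incident to $j$ and the target-column $\sigma$-field, use the deterministic tie-breaking to make $\Ncal_i^{(-j)}$ measurable with respect to the former, and obtain the product Bernoulli law from the conditional independence, given $P$, of these disjoint edge collections. Your explicit ratio computation for a fixed realization $S$ and your remark on conditioning given $\xi$ only make explicit details that the paper's proof leaves implicit.
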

The proof is given in the Appendix. Conditional on $P$, neighborhood
selection and target-column averaging use disjoint edge sets, yielding the
following exact variance--bias decomposition.
\begin{theorem}[Conditional mean squared error bound]
\label{thm:loo-mse}
Conditional on the latent probability matrix $P$ and the LOO neighborhood
$\Ncal_i^{(-j)}$ of size $h_n$, the MSE of the LOO estimator satisfies
\begin{equation}
\label{eq:risk-bound}
\begin{aligned}
\EE\left[
    (\tilde{P}_{ij}-P_{ij})^2
    \,\middle|\,
    \Ncal_i^{(-j)},P
\right]
&=
\frac{1}{h_n^2}
\sum_{k\in\Ncal_i^{(-j)}}P_{kj}(1-P_{kj})
+
\bigl\{B_{ij}^{(-j)}\bigr\}^2
\\
&\le
\frac{1}{4h_n}
+
\max_{k\in\Ncal_i^{(-j)}}
(P_{kj}-P_{ij})^2.
\end{aligned}
\end{equation}
\end{theorem}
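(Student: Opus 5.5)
The plan is to use the decomposition \eqref{eq:error-decomposition} together with Proposition~\ref{prop:independence}. Conditional on $(\Ncal_i^{(-j)},P)$, the bias term $B_{ij}^{(-j)}$ in \eqref{eq:Bij_def} is a deterministic function of the conditioning variables. The stochastic term $U_{ij}^{(-j)}$ in \eqref{eq:Uij_def} is an average of conditionally independent, mean-zero variables $A_{kj}-P_{kj}$.

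\emph{Step 1: squared-error decomposition.} Expanding
\[
(\tilde P_{ij}-P_{ij})^2=\bigl(U_{ij}^{(-j)}\bigr)^2+2U_{ij}^{(-j)}B_{ij}^{(-j)}+\bigl(B_{ij}^{(-j)}\bigr)^2
\]
and taking the conditional expectation, the cross term vanishes. Indeed, $B_{ij}^{(-j)}$ is measurable with respect to the conditioning, and $\EE[U_{ij}^{(-j)}\mid\Ncal_i^{(-j)},P]=0$ by Proposition~\ref{prop:independence}. The squared bias passes through unchanged.

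\emph{Step 2: the variance term.} I would compute $\EE[(U_{ij}^{(-j)})^2\mid\Ncal_i^{(-j)},P]$ by conditioning on each realization $S$ of positive probability. By the product-law statement of Proposition~\ref{prop:independence}, the summands are independent with variances $P_{kj}(1-P_{kj})$, so the off-diagonal covariances vanish and the expectation equals $h_n^{-2}\sum_{k\in S}P_{kj}(1-P_{kj})$. This gives the equality in \eqref{eq:risk-bound}.

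One point to make explicit is that $j\notin S$. This holds because $S\subset V\setminus\{i,j\}$, so no self-loop term $A_{jj}$ appears. Also, the $A_{kj}$ for distinct $k$ are distinct upper-triangular edge variables, so their conditional independence given $P$ is not in doubt.

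\emph{Step 3: the inequality.} Since $p(1-p)\le 1/4$ for $p\in[0,1]$ and $|\Ncal_i^{(-j)}|=h_n$ by the tie-breaking convention, the variance term is at most $h_n\cdot\frac14/h_n^2=\frac{1}{4h_n}$. For the bias, write $B_{ij}^{(-j)}=h_n^{-1}\sum_{k}(P_{kj}-P_{ij})$. Jensen's inequality (or simply $|\text{average}|\le\max|\cdot|$) then gives $(B_{ij}^{(-j)})^2\le\max_{k\in\Ncal_i^{(-j)}}(P_{kj}-P_{ij})^2$.

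The only delicate point, which I do not expect to be a real obstacle, is the measure-theoretic justification that conditioning on the random set $\Ncal_i^{(-j)}$ does not distort the law of the target-column edges. This is exactly what Proposition~\ref{prop:independence} supplies, since $\Ncal_i^{(-j)}$ is measurable with respect to the edges not incident to $j$. I will work realization-by-realization over the finitely many values $S$, so no regular conditional distributions are needed.
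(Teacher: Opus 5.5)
Your proposal is correct and matches the paper's proof step for step: expand the square via \eqref{eq:error-decomposition}, use Proposition~\ref{prop:independence} to remove the cross term and compute the conditional variance, then bound $p(1-p)\le 1/4$ and bound the average bias by its maximum. Your additional remarks (that $j\notin S$, and that you can argue separately for each of the finitely many realizations of $\Ncal_i^{(-j)}$) only add detail to the same argument.
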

The first term in the exact decomposition is the conditional stochastic
variance, whereas the second is the squared smoothing bias, which is
controlled below through the neighborhood geometry.

The interior sets are defined in Assumption~\ref{as:interior}. The
master event $\Ecal_{\mathrm{master}}$ in \eqref{eq:master-event} collects
the uniform concentration, integral-to-sum, max-to-supremum, and
local-candidate conditions used for geometric localization. By
Remark~\ref{rem:master-event}, it has probability at least $1-Cn^{-c}$ for
all sufficiently large $n$. Moreover, for any deterministic $i$ such that
the event $G_{i,\varepsilon}$ defined in \eqref{eq:G-interior} has positive
probability,
\[
\PP\!\left(
    \Ecal_{\mathrm{master}}^c
    \,\middle|\,
    G_{i,\varepsilon}
\right)
\le
\frac{\PP(\Ecal_{\mathrm{master}}^c)}
     {\PP(G_{i,\varepsilon})}
\le
\frac{C}{\lambda(I_\varepsilon)}\,n^{-c}
=
O(n^{-c}).
\]
Section~\ref{sec:bias-geometry} proves the following uniform bias bound on
this event.

\begin{theorem}[Bias control uniform over admissible pairs]
\label{thm:bias-goodset}
Suppose Assumptions~\ref{as:hybrid} and~\ref{as:nondeg-block} hold, fix
$\varepsilon$ as in Assumption~\ref{as:interior}, and assume the geometric
regime \eqref{eq:geometric-regime}. Then there exist
constants $C,c>0$ such that, for all sufficiently large $n$, with probability
at least $1-Cn^{-c}$, the following holds uniformly for all
$(i,j)\in\Ecal_\varepsilon$.

Fix $(i,j)\in\Ecal_\varepsilon$ and suppose $\xi_i\in I_m$.

\begin{enumerate}
    \item If $I_m$ is an identical-profile block, then every selected vertex satisfies
    $\xi_k\in I_m$ for $k\in\Ncal_i^{(-j)}$, and hence
    \[
    B_{ij}^{(-j)}=0.
    \]

    \item If $I_m$ is a smooth block satisfying
    Assumption~\ref{as:hybrid}(1), then
    \[
    |B_{ij}^{(-j)}|
    \le
    C b_n(h_n),
    \]
    where $b_n(h_n)$ is defined in \eqref{eq:bandwidth-scales}.
\end{enumerate}

Consequently,
\begin{equation}
\label{eq:uniform-bias-bound}
\sup_{(i,j)\in\Ecal_\varepsilon}|B_{ij}^{(-j)}|
\le C b_n(h_n).
\end{equation}
\end{theorem}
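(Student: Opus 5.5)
The proof reduces the bias to geometric localization in two-hop distance, using only a high-probability event on which the empirical distances track their population versions. Define the population two-hop distance between vertices by $D_2(\xi_i,\xi_k)$. On the master event I would establish
\begin{equation}
\label{eq:plan-unif}
\sup_{j}\ \sup_{i\neq k,\ i,k\neq j}
\bigl|d^{(-j)}(i,k)-D_2(\xi_i,\xi_k)\bigr|\le C r_n .
\end{equation}
This holds with probability at least $1-Cn^{-c}$. The argument has three parts:
\begin{enumerate}[label=(\roman*)]
\item Each entry $M^{(-j)}_{i\ell}=(n-1)^{-1}\sum_{x\neq i,\ell,j}A_{ix}A_{x\ell}$ is, conditional on $P$, an average of independent bounded terms. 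Bernstein/Hoeffding plus a union bound over the $O(n^3)$ triples $(i,\ell,j)$ gives deviation $O(r_n)$ from $(n-1)^{-1}\sum_x P_{ix}P_{x\ell}$.
\item A Hoeffding step over the i.i.d.\ $\xi_x$ ("integral-to-sum") replaces that sum by $\int f(\xi_i,x)f(x,\xi_\ell)\,dx$ up to $O(r_n)$. Removing the single vertex $j$ and changing the normalization costs only $O(1/n)$.
\item The maximum over $\ell$ is compared with the essential supremum over $w$ ("max-to-supremum"). Every block has positive length, and the integrand is Lipschitz in $w$ on smooth blocks and constant on identical-profile blocks. Hence on the event that every interval of length $\asymp\log n/n$ in each block contains a sample $\xi_\ell$, the maximum is within $O(\log n/n)+O(r_n)$ of $D_2$.
\end{enumerate}
I take this event, together with the local-candidate condition below, as the content of $\Ecal_{\mathrm{master}}$ from Remark~\ref{rem:master-event}.

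The neighborhood is then controlled through a threshold. A local-candidate count (Chernoff plus a union bound, using $h_n\ge C_0\log n$) gives the following: for every $i$ with $\xi_i\in I_\varepsilon$ in a smooth block, at least $h_n+1$ vertices $k$ satisfy $|\xi_k-\xi_i|\le C_1\rho_n$. Because $\xi_i$ is at distance $\varepsilon$ from the block boundary and $\rho_n\to0$, these vertices lie in the same block $I_m$. By the Lipschitz condition and $|D_2|\le\|f(u,\cdot)-f(u',\cdot)\|_{L^2}$, they have $D_2(\xi_i,\xi_k)\le L_mC_1\rho_n$. Excluding $j$ still leaves $h_n$ of them, so by \eqref{eq:plan-unif} the $h_n$-th order statistic satisfies $d^{(-j)}_{i,(h_n)}\le C\rho_n+Cr_n=Cb_n(h_n)$. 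Consequently every selected $k\in\Ncal_i^{(-j)}$ has $D_2(\xi_i,\xi_k)\le C b_n(h_n)$, regardless of how ties are broken.

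This bound on $D_2$ is converted into a bias bound separately for each block type:
\begin{enumerate}[label=(\alph*)]
\item Since $b_n(h_n)\to0$, for large $n$ we have $Cb_n(h_n)<\eta_2$. Across-block separation in Assumption~\ref{as:nondeg-block}(2) then forces $\xi_k\in I_m$.
\item In an identical-profile block, the threshold argument is applied with the candidate set consisting of all vertices of $I_m$, of which there are $\gtrsim n\ge h_n$ with high probability. These candidates have $D_2=0$, so the threshold is $Cr_n$. Part (a) then gives $\xi_k\in I_m$, hence $P_{kj}=f(\xi_k,\xi_j)=f(\xi_i,\xi_j)=P_{ij}$ and $B_{ij}^{(-j)}=0$.
\item In a smooth block, within-block identifiability gives $\|f(\xi_i,\cdot)-f(\xi_k,\cdot)\|_{L^2}\le\kappa_m^{-1}Cb_n(h_n)$, and the lower $L^2$ condition then gives $|\xi_i-\xi_k|\le (c_m\kappa_m)^{-1}Cb_n(h_n)$. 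The pointwise Lipschitz bound yields $|P_{kj}-P_{ij}|\le L_m|\xi_k-\xi_i|\le Cb_n(h_n)$, and averaging over $\Ncal_i^{(-j)}$ gives the bound on $|B_{ij}^{(-j)}|$.
\end{enumerate}
All constants are uniform over $(i,j)\in\Ecal_\varepsilon$, which gives \eqref{eq:uniform-bias-bound}.

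The main obstacle is \eqref{eq:plan-unif}, and specifically part (iii). The empirical maximum over $\ell$ must dominate the essential supremum uniformly over all pairs. This needs Lipschitz continuity in $w$ of $w\mapsto\int\{f(u,x)-f(u',x)\}f(x,w)\,dx$ on each block, which follows from symmetry of $f$ and the Lipschitz condition in the first argument. It also needs a covering argument ensuring that the sample $\{\xi_\ell\}_{\ell\neq i,k,j}$ is $O(\log n/n)$-dense in every block simultaneously. A further subtlety is that $A_{i\ell}$ also appears inside the two-hop sum, so the diagonal terms $x\in\{i,\ell\}$ must be removed at cost $O(1/n)$ before the conditional-independence concentration in part (i) applies.
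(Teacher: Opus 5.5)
Your proposal is correct and follows essentially the same route as the paper. You obtain the uniform approximation $|d^{(-j)}(i,k)-D_2(\xi_i,\xi_k)|\le Cr_n$ by concentration, integral-to-sum and max-to-supremum (Lemmas~\ref{lem:population-twohop}--\ref{lem:distance-approximation}), then a Chernoff-based local candidate pool bounding the $h_n$-th order statistic by $Cb_n(h_n)$, then across-block separation for containment, and finally the $\kappa_m^{-1}$ and $c_m^{-1}$ inversion plus the pointwise Lipschitz bound on smooth blocks. The only difference is cosmetic: you get containment directly from $D_2(\xi_i,\xi_k)\le Cb_n(h_n)<\eta_2$ for each selected $k$, whereas the paper compares the empirical distances of cross-block vertices with those of the candidates, and the two arguments are equivalent.
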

The bias bound contains two distinct contributions. The term $h_n/n$
is the geometric localization error required to collect $h_n$ neighboring
latent positions, whereas $(\log n/n)^{1/2}$ is the uniform stochastic error
in the empirical two-hop distance. No particular inference neighborhood size is
singled out by this decomposition: the admissible inference regime is
determined by the undersmoothing condition in
Theorem~\ref{thm:full-clt-undersmooth}.

For comparison, Theorem~1 and equation~(9) in the main article of
\citet{zhang2017estimating} control the squared normalized $(2,\infty)$
loss $d_{2,\infty}^{2}(\widetilde P,P)$ at order
$(\log n/n)^{1/2}$. This corresponds to a maximum normalized row-$L^2$
scale of order $(\log n/n)^{1/4}$. That criterion is not identical to the
entrywise error criterion considered here. Our entrywise bias control
relies on the stronger quantitative two-hop identifiability and
within-block slice-comparability assumptions, which permit direct
inversion of two-hop discrepancy into slice and latent-coordinate
proximity.
\begin{corollary}[Neighborhood-size-dependent conditional risk bound]
\label{cor:bandwidth-risk}
Under the assumptions of Theorem~\ref{thm:bias-goodset}, for every
$(i,j)\in\Ecal_\varepsilon$,
\begin{equation}
\label{eq:bandwidth-risk-bound}
\EE\!\left[
(\tilde P_{ij}-P_{ij})^2
\,\middle|\,
\Ncal_i^{(-j)},P
\right]
\le
\frac{1}{4h_n}+C b_n(h_n)^2
\end{equation}
on $\Ecal_{\mathrm{master}}$, where $b_n(h_n)$ is defined in
\eqref{eq:bandwidth-scales}.

Consequently, the variance and neighborhood-size-dependent squared-bias terms in
\eqref{eq:bandwidth-risk-bound} balance at
\begin{equation}
\label{eq:balance-bandwidth}
h_n\asymp n^{2/3}.
\end{equation}
Indeed, at the scale \eqref{eq:balance-bandwidth},
\[
\frac{h_n}{n}\asymp n^{-1/3},
\qquad
r_n=\sqrt{\frac{\log n}{n}}=o(n^{-1/3}),
\]
so $b_n(h_n)\asymp h_n/n$. Hence the terms involving $r_n$ are lower
order at this choice of $h_n$, and the right-hand side of
\eqref{eq:bandwidth-risk-bound} is $\mathcal O(n^{-2/3})$.
This statement concerns the balance point of the derived risk upper bound;
we do not claim a matching minimax lower bound for the present graphon
class.
\end{corollary}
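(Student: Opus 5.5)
The plan is to combine the exact decomposition of Theorem~\ref{thm:loo-mse} with the uniform bias bound of Theorem~\ref{thm:bias-goodset}. I will use the equality form of \eqref{eq:risk-bound}, not the max bound. Since $P_{kj}(1-P_{kj})\le 1/4$ for every $k$, the variance term is at most $h_n/(4h_n^2)=1/(4h_n)$. The squared-bias term is $\{B_{ij}^{(-j)}\}^2$. On $\Ecal_{\mathrm{master}}$, Theorem~\ref{thm:bias-goodset} gives $|B_{ij}^{(-j)}|\le C b_n(h_n)$ simultaneously for all $(i,j)\in\Ecal_\varepsilon$; in the identical-profile case the bias is zero. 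Squaring yields \eqref{eq:bandwidth-risk-bound} with the constant $C^2$, renamed $C$.

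The one point needing care is measurability. The conditional expectation is taken given $(\Ncal_i^{(-j)},P)$, while $\Ecal_{\mathrm{master}}$ may depend on more of the data. This causes no problem: the identity in Theorem~\ref{thm:loo-mse} holds pointwise in the realization, and $B_{ij}^{(-j)}$ is a function of $(\Ncal_i^{(-j)},P)$. So on $\Ecal_{\mathrm{master}}$ we may bound the realized value of the conditional MSE directly. I would state explicitly that the bound is understood realization-wise on that event. This uses the fact, from Section~\ref{sec:bias-geometry}, that the bias bound is proved deterministically on $\Ecal_{\mathrm{master}}$ rather than merely with high probability for each fixed pair.

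For the balance statement, the plan is elementary calculus on $g(h)=1/(4h)+C(h/n+r_n)^2$.
\begin{itemize}
\item Expanding, $(h/n+r_n)^2\le 2h^2/n^2+2r_n^2$. The $h$-dependent part is therefore $1/(4h)+2Ch^2/n^2$, which is minimized, up to constants, when $1/h\asymp h^2/n^2$, that is, $h\asymp n^{2/3}$.
\item At $h_n\asymp n^{2/3}$, we have $h_n/n\asymp n^{-1/3}$, while $r_n=(\log n)^{1/2}n^{-1/2}=o(n^{-1/3})$ because $(\log n)^{1/2}n^{-1/6}\to0$. Hence $b_n\asymp h_n/n$.
\item The bound is then $O(n^{-2/3})+O(n^{-2/3})+O(\log n/n)=O(n^{-2/3})$.
\item The choice $h_n\asymp n^{2/3}$ satisfies the regime \eqref{eq:geometric-regime}, so Theorem~\ref{thm:bias-goodset} does apply at this neighborhood size.
\end{itemize}

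Given the earlier results, there is no real obstacle. The only delicate point is the realization-wise interpretation on $\Ecal_{\mathrm{master}}$ just described.
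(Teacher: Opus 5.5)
Your proposal is correct and follows essentially the argument the paper intends, which the paper does not write out separately. You combine the decomposition of Theorem~\ref{thm:loo-mse}, whose variance term is at most $1/(4h_n)$, with the uniform bias bound of Theorem~\ref{thm:bias-goodset} on $\Ecal_{\mathrm{master}}$, and then balance $1/(4h_n)$ against $(h_n/n)^2$. Your realization-wise reading on $\Ecal_{\mathrm{master}}$ and your check that $h_n\asymp n^{2/3}$ lies in the geometric regime \eqref{eq:geometric-regime} are sensible details the paper leaves implicit.
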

\subsection{Finite-sample confidence intervals}
\label{subsec:ci_empirical_bernstein}

Fix a nominal error probability $\alpha\in(0,1)$. To construct
nonasymptotic confidence intervals for the stochastic fluctuation, we
utilize the empirical Bernstein bound of \citet{maurer2009empirical}. Because the observations $A_{kj}$ are binary indicators taking values in $\{0,1\}$, define their empirical sample variance by
\begin{equation}
\label{eq:empirical-sample-variance}
s_{ij}^2
:=
\frac{1}{h_n-1}\sum_{k\in\Ncal_i^{(-j)}}(A_{kj}-\tilde P_{ij})^2
=
\frac{h_n}{h_n-1}\tilde P_{ij}(1-\tilde P_{ij}).
\end{equation}
The second equality in \eqref{eq:empirical-sample-variance} is an exact algebraic identity for any binary sample and does not require identical distributions. Because the selected Bernoulli variables may have different success probabilities, $s_{ij}^2$ is not exactly equal to the average conditional variance
\[
\frac1{h_n}\sum_{k\in\Ncal_i^{(-j)}}P_{kj}(1-P_{kj}).
\]
Its role here is as the observable empirical variance appearing directly
in the empirical Bernstein inequality for independent bounded variables.
We therefore define the variance-adaptive half-width by
\begin{equation}
\label{eq:bernstein-width}
    w_{ij}^{(B)}(\alpha) 
    := 
    \sqrt{\frac{2 s_{ij}^2 \log(4/\alpha)}{h_n}} 
    + 
    \frac{7 \log(4/\alpha)}{3(h_n - 1)}.
\end{equation}
We use the empirical Bernstein inequality for independent, not
necessarily identically distributed, bounded variables from
\citet[Theorem~11, Section~2, p.~4]{maurer2009empirical}. The two-sided
version below follows by applying the one-sided result separately to
$X_k$ and $1-X_k$ and combining the resulting bounds by a union argument.
This yields a nonasymptotic interval for the conditional neighborhood mean
and, after adding the bias bound, a finite-sample guarantee for $P_{ij}$.

\begin{lemma}[Two-sided empirical Bernstein inequality for independent bounded variables]
\label{lem:empirical-bernstein}
Let $m\ge2$ and let $X_1,\dots,X_m$ be independent random variables taking values in $[0,1]$, not necessarily identically distributed. Define
\[
\bar{X} := \frac{1}{m}\sum_{k=1}^m X_k,
\qquad
\mu := \frac{1}{m}\sum_{k=1}^m \EE[X_k],
\]
and the empirical variance
\[
s^2 := \frac{1}{m-1}\sum_{k=1}^m (X_k - \bar{X})^2.
\]
Then for any $\alpha \in (0,1)$,
\begin{equation}
\label{eq:empirical-bernstein-two-sided}
\PP\!\left(
|\bar{X} - \mu|
\le
\sqrt{\frac{2 s^2 \log(4/\alpha)}{m}}
+
\frac{7 \log(4/\alpha)}{3(m-1)}
\right)
\ge
1 - \alpha.
\end{equation}
\end{lemma}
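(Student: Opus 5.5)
The plan is to reduce the two-sided statement to two applications of the one-sided empirical Bernstein inequality of \citet[Theorem~11]{maurer2009empirical}, combined by a union bound. In the form we use, that theorem says the following. Let $Z_1,\dots,Z_m$ be independent with values in $[0,1]$, let $m\ge2$, and let $\delta'\in(0,1)$. Then, with probability at least $1-\delta'$,
\[
\frac1m\sum_{k=1}^m \EE[Z_k]-\bar Z
\le
\sqrt{\frac{2 s_Z^2\log(2/\delta')}{m}}+\frac{7\log(2/\delta')}{3(m-1)},
\]
where $s_Z^2$ is the unbiased sample variance of $Z_1,\dots,Z_m$. The first thing to check is that Maurer--Pontil's theorem is genuinely stated for independent, not necessarily identically distributed, variables, with $\mu$ the average of the individual means. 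As far as I recall it is: their proof combines Bennett's inequality for $\bar Z$ with a concentration bound for the square root of the sample variance. Both steps are valid for non-identical independent $[0,1]$ variables, provided one uses the average of the individual variances, $\frac1m\sum_k\Var(Z_k)$. In the non-identical case the sample variance has expectation at least this average variance, so the self-bounding argument goes through. I will take the theorem as given in this form and simply record that its hypotheses match.

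\emph{Upper deviation.} Apply the one-sided bound with $Z_k=X_k$ and $\delta'=\alpha/2$. Since $\log(2/\delta')=\log(4/\alpha)$, this gives
\[
\mu-\bar X\le w
\quad\text{with probability at least }1-\alpha/2,
\qquad
w:=\sqrt{\frac{2s^2\log(4/\alpha)}{m}}+\frac{7\log(4/\alpha)}{3(m-1)}.
\]

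\emph{Lower deviation.} Apply the same bound to $Z_k=1-X_k$, again with $\delta'=\alpha/2$. These variables are independent and take values in $[0,1]$. Moreover, $\bar Z=1-\bar X$ and the mean of the $Z_k$ is $1-\mu$, so the left-hand side becomes $(1-\mu)-(1-\bar X)=\bar X-\mu$. The sample variance is invariant under the map $x\mapsto 1-x$, so $s_Z^2=s^2$ and the width is the same $w$. Hence $\bar X-\mu\le w$ with probability at least $1-\alpha/2$.

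\emph{Combining.} A union bound over the two failure events shows that $|\bar X-\mu|\le w$ with probability at least $1-\alpha$, which is exactly \eqref{eq:empirical-bernstein-two-sided}.

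The only step that is not pure algebra is the first one: confirming that the cited one-sided theorem covers non-identically distributed variables with exactly these constants ($2$ and $7/3$, with the $m-1$ in the second term). If the published statement were restricted to the i.i.d.\ case, the fallback would be to rerun their two-step proof. The first step is Bennett's inequality with the average variance $\bar\sigma^2$. The second is the bound $\sqrt{\bar\sigma^2}\le s+\sqrt{2\log(2/\delta')/(m-1)}$, which follows from the self-bounding property of the sample variance and $\EE s^2\ge\bar\sigma^2$. Allocating $\delta'/2$ to each step then reproduces the stated constants.
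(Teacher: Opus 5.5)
Your proposal is correct and matches the paper's proof: apply Theorem~11 of \citet{maurer2009empirical}, which is stated for independent, not necessarily identically distributed $[0,1]$-valued variables, at failure level $\alpha/2$ to both $X_k$ and $1-X_k$, note that the two collections share the sample variance $s^2$, and combine the two one-sided bounds with a union bound. Your fallback sketch, using Bennett's inequality together with $\EE s^2\ge \frac1m\sum_k\Var(X_k)$, is also accurate, but it is not needed because the cited theorem already covers the non-identically distributed case.
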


\begin{remark}[Two-sided empirical Bernstein inequality]
\label{rem:EB-two-sided}
Lemma~\ref{lem:empirical-bernstein} is stated as a two-sided bound.
Apply Theorem~11 of \citet{maurer2009empirical} with failure probability
$\alpha/2$ to the variables $X_k$ to obtain
\[
\mu-\bar X
\le
\sqrt{\frac{2s^2\log(4/\alpha)}{m}}
+
\frac{7\log(4/\alpha)}{3(m-1)}
\]
with probability at least $1-\alpha/2$. Applying the same theorem, with the
same failure probability, to $1-X_k$, whose empirical sample variance equals
$s^2$, gives
\[
\bar X-\mu
\le
\sqrt{\frac{2s^2\log(4/\alpha)}{m}}
+
\frac{7\log(4/\alpha)}{3(m-1)}.
\]
A union bound gives the two-sided inequality
\eqref{eq:empirical-bernstein-two-sided}. This is the form used in the
construction of the empirical Bernstein interval in
Theorem~\ref{thm:finite-ci}.
\end{remark}

\begin{theorem}[Bias-aware finite-sample guarantee for $P_{ij}$]
\label{thm:finite-ci}
Suppose Assumptions~\ref{as:hybrid} and~\ref{as:nondeg-block} hold, fix
$\varepsilon$ as in Assumption~\ref{as:interior}, and assume the geometric
regime \eqref{eq:geometric-regime}. Fix deterministic indices $i\neq j$ such
that the event $G_{i,\varepsilon}$ in \eqref{eq:G-interior} has positive
probability.
Conditional on $(\Ncal_i^{(-j)},\xi_1,\ldots,\xi_n)$, the variables
$\{A_{kj}:k\in\Ncal_i^{(-j)}\}$ are independent Bernoulli variables with
fixed means $\{P_{kj}:k\in\Ncal_i^{(-j)}\}$. Applying
Lemma~\ref{lem:empirical-bernstein} conditionally, define
\begin{equation}
\label{eq:bernstein-ci}
\mathrm{CI}^{(B)}_{ij}(\alpha)
:=
\big[ \tilde{P}_{ij} - w_{ij}^{(B)}(\alpha),\; \tilde{P}_{ij} + w_{ij}^{(B)}(\alpha) \big] \cap [0,1].
\end{equation}
Combining the conditional deviation bound with the uniform bias bound of
Theorem~\ref{thm:bias-goodset}, there exist constants $C,c>0$ such that,
for all sufficiently large $n$,
\begin{equation}
\label{eq:bernstein-coverage}
\PP\!\left(
P_{ij}
\in
\left\{
\mathrm{CI}^{(B)}_{ij}(\alpha)
\oplus
[-Cb_n(h_n),Cb_n(h_n)]
\right\}
\cap[0,1]
\,\middle|\,
G_{i,\varepsilon}
\right)
\ge
1-\alpha-Cn^{-c}.
\end{equation}
\end{theorem}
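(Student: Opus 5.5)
The argument decomposes the error as in \eqref{eq:error-decomposition}. It then controls the stochastic part $U_{ij}^{(-j)}$ conditionally through Lemma~\ref{lem:empirical-bernstein}, and the bias part $B_{ij}^{(-j)}$ on the master event through Theorem~\ref{thm:bias-goodset}. Everything is finally combined under the conditional law given $G_{i,\varepsilon}$.

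\textbf{Step 1: conditional deviation bound.} Let $\mathcal F:=\sigma(\Ncal_i^{(-j)},\xi_1,\ldots,\xi_n)$. I would first upgrade Proposition~\ref{prop:independence} to conditioning on $\mathcal F$. The latent vector determines $P$. Given the latent vector, the neighborhood is a function of the edges not incident to $j$, and these are independent of the column $\{A_{kj}\}_{k\ne j}$. Hence, given $\mathcal F$, the selected variables are independent $\operatorname{Bernoulli}(P_{kj})$ with $\mathcal F$-measurable means.

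Now apply Lemma~\ref{lem:empirical-bernstein} conditionally with $m=h_n\ge 2$, which holds by \eqref{eq:geometric-regime}, and $X_k=A_{kj}$. In this setting $\bar X=\tilde P_{ij}$ and $\mu=\tilde P_{ij}-U_{ij}^{(-j)}=P_{ij}+B_{ij}^{(-j)}$, and $s^2=s_{ij}^2$. Let
\[
E_1:=\{|U^{(-j)}_{ij}|\le w^{(B)}_{ij}(\alpha)\}.
\]
The lemma gives $\PP(E_1\mid\mathcal F)\ge 1-\alpha$ almost surely. Since $G_{i,\varepsilon}=\{\xi_i\in I_\varepsilon\}$ is $\mathcal F$-measurable, the tower property gives $\PP(E_1\mid G_{i,\varepsilon})\ge1-\alpha$. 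On $E_1$, the neighborhood mean $\mu$ lies in $[\tilde P_{ij}-w,\tilde P_{ij}+w]$. It also lies in $[0,1]$, because it is an average of probabilities, so $\mu\in\mathrm{CI}^{(B)}_{ij}(\alpha)$ and the intersection with $[0,1]$ loses nothing.

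\textbf{Step 2: bias on the good event.} Let $E_2$ be the probability-$(1-Cn^{-c})$ event of Theorem~\ref{thm:bias-goodset}, which I take to be contained in or equal to $\Ecal_{\mathrm{master}}$. On $E_2\cap G_{i,\varepsilon}$, the pair $(i,j)$ belongs to $\Ecal_\varepsilon$, so $|B_{ij}^{(-j)}|\le Cb_n(h_n)$. The display preceding Theorem~\ref{thm:bias-goodset} then bounds
\[
\PP(E_2^c\mid G_{i,\varepsilon})\le \frac{C}{\lambda(I_\varepsilon)}\,n^{-c},
\]
and this constant is absorbed into $C$.

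\textbf{Step 3: combine.} On $E_1\cap E_2\cap G_{i,\varepsilon}$ we can write $P_{ij}=\mu-B_{ij}^{(-j)}$ with $\mu\in\mathrm{CI}^{(B)}_{ij}(\alpha)$ and $-B^{(-j)}_{ij}\in[-Cb_n,Cb_n]$. This places $P_{ij}$ in the Minkowski sum, and $P_{ij}\in[0,1]$ trivially. A union bound gives conditional coverage at least $1-\alpha-Cn^{-c}$.

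\textbf{Main obstacle.} The delicate step is the measurability in Step 1. The event $E_2$ depends on the whole graph, including column $j$, so it is not $\mathcal F$-measurable. This is why I would avoid conditioning jointly on $E_2$ and instead apply the union bound to the complements. The remaining check is that Theorem~\ref{thm:bias-goodset}'s high-probability event is really the $\Ecal_{\mathrm{master}}$ event whose conditional failure probability is controlled. If the theorem only supplies an unnamed event, the same ratio argument, $\PP(E^c)/\PP(G_{i,\varepsilon})$, still applies directly.
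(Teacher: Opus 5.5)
Your proposal is correct and follows essentially the same route as the paper: a conditional empirical-Bernstein bound for $U_{ij}^{(-j)}$, lifted to the law given $G_{i,\varepsilon}$ by the tower property; the bias bound on $\Ecal_{\mathrm{master}}$, with conditional failure probability at most $\PP(\Ecal_{\mathrm{master}}^c)/\PP(G_{i,\varepsilon})$; and a final union bound. Your handling of the truncation is a slightly cleaner version of the paper's step from $|\tilde P_{ij}-P_{ij}|\le w_{ij}^{(B)}(\alpha)+Cb_n(h_n)$ to membership in the Minkowski sum: you first place the neighborhood mean $\mu=P_{ij}+B_{ij}^{(-j)}\in[0,1]$ inside $\mathrm{CI}^{(B)}_{ij}(\alpha)$ and only then add the bias margin.
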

\begin{remark}[The unknown bias constant]
\label{rem:bias-constant}
The empirical Bernstein half-width $w_{ij}^{(B)}(\alpha)$ is observable,
but the constant $C$ in the additional bias margin
$Cb_n(h_n)$ depends on the constants in
Assumptions~\ref{as:hybrid} and~\ref{as:nondeg-block}. Unless such constants
are known or bounded in advance, the interval in
Theorem~\ref{thm:finite-ci} cannot be computed from the observed graph
alone. In the simulations below, we therefore
report empirical Bernstein coverage for the conditional neighborhood mean,
for which no unknown bias constant is needed, and use the undersmoothed
Gaussian interval for feasible inference on $P_{ij}$.
\end{remark}

\begin{remark}[Complementary roles of Bernstein and Gaussian inference]
\label{rem:bernstein-vs-gaussian}
The empirical Bernstein and Gaussian intervals address different aspects of
uncertainty quantification. The Bernstein inequality gives a finite-sample
deviation guarantee for the stochastic neighborhood average without a
normal approximation, but it can be conservative; after truncation to the
parameter space $[0,1]$, an interval of length one is exactly $[0,1]$ and
therefore carries no localization information. The Gaussian interval is
asymptotic rather than finite-sample exact, but under undersmoothing its
width is of the standard-error order $h_n^{-1/2}$ and can be substantially
shorter. The Berry--Esseen bound provides the bridge between these two
regimes by quantifying the normal-approximation error and the contribution
of the conditional smoothing bias.
\end{remark}
\color{black}

\subsection{Asymptotic normality under undersmoothing}
\label{subsec:ci_normal}
Define the conditional variance
\begin{equation}
\label{eq:oracle-variance}
V_{ij}
:=
\Var\!\big(U_{ij}^{(-j)} \mid \Ncal_i^{(-j)}, P\big)
=
\frac{1}{h_n^2}\sum_{k\in\Ncal_i^{(-j)}} P_{kj}(1-P_{kj}).
\end{equation}
Here $V_{ij}$ is the scalar conditional variance of the estimator; it is
distinct from the vertex-set notation $V=[n]$ used above.
By \eqref{eq:variance-bounds} below, $V_{ij}\asymp h_n^{-1}$ uniformly. To place the statistic in the standard triangular-array form, define
\begin{equation}
\label{eq:standardization-identity}
S_{ij,n}:=\sum_{k\in\Ncal_i^{(-j)}}(A_{kj}-P_{kj}),
\qquad
\sigma_{ij,n}^2:=\sum_{k\in\Ncal_i^{(-j)}}\Var(A_{kj}-P_{kj}\mid\Ncal_i^{(-j)},P),
\qquad
\frac{U_{ij}^{(-j)}}{\sqrt{V_{ij}}}=\frac{S_{ij,n}}{\sigma_{ij,n}}.
\end{equation}
This identity is the standard triangular-array normalization used in the Berry--Esseen theorem.
We will later define a master high-probability event $\Ecal_{\text{master}}$ in Section~\ref{sec:bias-geometry} under which all geometric approximations hold simultaneously.

We now proceed to asymptotic normality. The conditional independence
from Proposition~\ref{prop:independence} allows us to apply classical triangular-array
Berry--Esseen theory to the standardized fluctuation
$U_{ij}^{(-j)}/\sqrt{V_{ij}}$.  The main technical step is to verify
the bounded third-moment and variance conditions required for the
Berry--Esseen bound.  The following lemma establishes a uniform lower
bound on the conditional variance.
\begin{lemma}[Lower variance bound]
\label{lem:var-lower}
Under Assumption~\ref{as:nondeg}, the conditional variance in \eqref{eq:oracle-variance} satisfies
\begin{equation}
\label{eq:variance-bounds}
\frac{\delta(1-\delta)}{h_n}\le V_{ij}\le\frac{1}{4h_n}.
\end{equation}
Thus $V_{ij}\asymp h_n^{-1}$ uniformly in $(i,j)$.
\end{lemma}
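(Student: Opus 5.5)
The argument is a direct termwise bound on the explicit formula \eqref{eq:oracle-variance}. It does not need the geometry or any probabilistic estimate.

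Since $\Ncal_i^{(-j)}$ is selected among $V\setminus\{i,j\}$, every $k\in\Ncal_i^{(-j)}$ satisfies $k\neq j$. The corresponding $P_{kj}=f(\xi_k,\xi_j)$ is therefore an off-diagonal entry, and the diagonal convention plays no role. By Assumption~\ref{as:nondeg}, each such entry lies in $[\delta,1-\delta]$.

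The function $g(p)=p(1-p)$ is concave and symmetric about $1/2$. On the interval $[\delta,1-\delta]$ its minimum is attained at the endpoints, where it equals $\delta(1-\delta)$. Its maximum is $1/4$, attained at $p=1/2$. This gives the termwise bounds
\[
\delta(1-\delta)\le P_{kj}(1-P_{kj})\le \tfrac14
\qquad\text{for every }k\in\Ncal_i^{(-j)},
\]
which is exactly the consequence already recorded after \eqref{eq:non-deg}.

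Next, sum over the neighborhood. It has exactly $h_n$ elements because of the tie-breaking convention in \eqref{eq:neighborhood}. Summing the termwise bounds therefore yields
\[
h_n\,\delta(1-\delta)\le\sum_{k\in\Ncal_i^{(-j)}}P_{kj}(1-P_{kj})\le h_n/4.
\]
Dividing by $h_n^2$ gives \eqref{eq:variance-bounds}.

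The bounds hold deterministically, for every realization of the neighborhood and of $P$. The constants $\delta(1-\delta)$ and $1/4$ do not depend on $(i,j)$ or $n$. Hence $V_{ij}\asymp h_n^{-1}$ uniformly in $(i,j)$.

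There is no real obstacle in this argument. The only points that need checking are that the cardinality is exactly $h_n$ and that the index $j$ is excluded from the neighborhood, so that no diagonal entry $P_{jj}$ enters the sum.
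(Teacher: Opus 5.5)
Your proof is correct and is essentially the paper's argument. The paper gives no separate proof of this lemma; it records the termwise bound $\delta(1-\delta)\le P_{ij}(1-P_{ij})\le 1/4$ right after Assumption~\ref{as:nondeg} and sums it over the $h_n$ selected indices inside the proofs of Theorems~\ref{thm:loo-mse} and~\ref{thm:be}, exactly as you do, and your checks that $j\notin\Ncal_i^{(-j)}$ and $|\Ncal_i^{(-j)}|=h_n$ are the right details to confirm.
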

With the variance lower bound in place and the boundedness of the
summands, the triangular-array Berry--Esseen inequality yields a
uniform normal approximation for the standardized stochastic term,
conditional on the selected neighborhood.  The next theorem states
this result precisely.
\begin{theorem}[Pointwise Berry--Esseen bound with pair-uniform constant]
\label{thm:be}
Suppose Assumption~\ref{as:nondeg} holds and $h_n\to\infty$. Then there
exists a constant $C>0$, depending only on $\delta$ and the universal
Berry--Esseen constant, such that, uniformly over all distinct pairs
$i\neq j$, conditional on $(\Ncal_i^{(-j)},P)$,
\begin{equation}
\label{eq:stochastic-be-bound}
\sup_{x\in\R}
\left|
\PP\!\left(
\frac{U_{ij}^{(-j)}}{\sqrt{V_{ij}}}\le x \,\middle|\, \Ncal_i^{(-j)},P
\right)
-\Phi(x)
\right|
\le
C h_n^{-1/2}.
\end{equation}
Because the conditional bound holds almost surely with a deterministic
constant uniform in the realized neighborhood and probability matrix,
integrating the conditional distribution function over
$(\Ncal_i^{(-j)},P)$ yields the same $Ch_n^{-1/2}$ Kolmogorov bound for
the corresponding unconditional distribution.
\end{theorem}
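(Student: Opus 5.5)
The plan is to prove the conditional bound first. We condition on $(\Ncal_i^{(-j)},P)$ and fix a realization $S$ of the neighborhood with $|S|=h_n$. By Proposition~\ref{prop:independence}, the summands $X_k:=A_{kj}-P_{kj}$, $k\in S$, are then independent and mean zero, with $X_k\in[-1,1]$. Their variances are $\sigma_k^2=P_{kj}(1-P_{kj})$. The standardized statistic is $S_{ij,n}/\sigma_{ij,n}$, by the identity \eqref{eq:standardization-identity}. This is exactly the setting of the Berry--Esseen theorem for independent, non-identically distributed summands (e.g.\ \citet{Petrov1975}), which gives
\[
\sup_x\bigl|\PP(S_{ij,n}/\sigma_{ij,n}\le x\mid S,P)-\Phi(x)\bigr|\le C_0\,\frac{\sum_{k\in S}\EE|X_k|^3}{\sigma_{ij,n}^3},
\]
where $C_0$ is the universal constant.

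Next we bound the ratio using only Assumption~\ref{as:nondeg}. Because $|X_k|\le1$, we have $\EE|X_k|^3\le\EE X_k^2=\sigma_k^2$. Sharper, $\EE|X_k|^3=p(1-p)\{p^2+(1-p)^2\}\le p(1-p)$ with $p=P_{kj}$. So the numerator is at most $\sigma_{ij,n}^2$, and the ratio is at most $1/\sigma_{ij,n}$. Lemma~\ref{lem:var-lower} gives $\sigma_{ij,n}^2=h_n^2V_{ij}\ge\delta(1-\delta)h_n$. Hence the bound is at most $C_0\{\delta(1-\delta)\}^{-1/2}h_n^{-1/2}$, and we set $C:=C_0\{\delta(1-\delta)\}^{-1/2}$.

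This constant depends neither on $S$, nor on $P$, nor on $(i,j)$. The hypothesis $h_n\to\infty$ is needed only to make the bound informative, not for its validity. The bound holds for every realization, so it holds almost surely for the conditional distribution function.

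For the unconditional statement, we write
\[
\PP\bigl(U_{ij}^{(-j)}/\sqrt{V_{ij}}\le x\bigr)-\Phi(x)=\EE\bigl[\PP(\cdot\le x\mid\Ncal_i^{(-j)},P)-\Phi(x)\bigr].
\]
Jensen's inequality then bounds its absolute value by the expectation of the conditional discrepancy, which is at most $Ch_n^{-1/2}$ for each $x$. Taking the supremum over $x$ gives the same Kolmogorov bound.

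The main point requiring care is measurability and regular conditional distributions. Conditioning on the pair $(\Ncal_i^{(-j)},P)$ involves a discrete variable and a possibly continuous one. We handle this by working conditionally on $\xi=(\xi_1,\dots,\xi_n)$ and the edges not incident to $j$. Given these, $P$ and $\Ncal_i^{(-j)}$ are fixed, and the target-column edges remain independent Bernoulli variables. The bound then holds pointwise, and conditioning down to $(\Ncal_i^{(-j)},P)$ preserves it, again by Jensen's inequality.

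A second, minor point is that $V_{ij}$ is itself $(\Ncal_i^{(-j)},P)$-measurable. The normalization is therefore deterministic under the conditioning, and no Slutsky-type argument is needed.
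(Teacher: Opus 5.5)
Your proposal is correct and follows essentially the same route as the paper. You apply the non-identically-distributed Berry--Esseen theorem conditionally on $(\Ncal_i^{(-j)},P)$, via Proposition~\ref{prop:independence} and the variance bound $\sigma_{ij,n}^2\ge h_n\delta(1-\delta)$, and then pass to the unconditional law by the tower property and the triangle inequality. Your one refinement, $\EE|X_k|^3\le P_{kj}(1-P_{kj})$ in place of the paper's cruder $\EE|X_k|^3\le 1$, improves the constant from order $\{\delta(1-\delta)\}^{-3/2}$ to $\{\delta(1-\delta)\}^{-1/2}$.
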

\begin{remark}[Pointwise uniformity is not simultaneity]
The word ``uniform'' in Theorem~\ref{thm:be} means that the same pointwise
Kolmogorov bound holds for every deterministic ordered pair. It does not bound
the joint maximum over the $n(n-1)$ standardized errors and therefore does
not supply simultaneous confidence coverage for a row or the full matrix.
\end{remark}
The proof follows from a triangular-array Berry--Esseen bound
(e.g., \citet{Petrov1975}) applied to the conditionally independent
summands with uniformly bounded third moments. The resulting bound for the
stochastic term does not require the geometric assumptions or the master
event.

\begin{corollary}[Berry--Esseen bound for the full LOO estimation error]
\label{cor:full-be}
Suppose Assumptions~\ref{as:hybrid}, \ref{as:nondeg}, and
\ref{as:nondeg-block} hold, fix $\varepsilon$ as in
Assumption~\ref{as:interior}, and assume the geometric regime
\eqref{eq:geometric-regime}. Fix deterministic indices $i\neq j$ such that
$\PP(G_{i,\varepsilon})=\lambda(I_\varepsilon)>0$. Then there exist
constants $C,c>0$ such that
\begin{equation}
\label{eq:full-be-bound}
\sup_{x\in\mathbb R}
\left|
\PP\!\left(
\frac{\tilde P_{ij}-P_{ij}}{\sqrt{V_{ij}}}
\le x
\,\middle|\,
G_{i,\varepsilon}
\right)
-
\Phi(x)
\right|
\le
C\left\{
h_n^{-1/2}
+
\sqrt{h_n}\,b_n(h_n)
+
n^{-c}
\right\}.
\end{equation}
\end{corollary}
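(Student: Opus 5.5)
We decompose the standardized error as
\[
\frac{\tilde P_{ij}-P_{ij}}{\sqrt{V_{ij}}}
=\frac{U_{ij}^{(-j)}}{\sqrt{V_{ij}}}+\frac{B_{ij}^{(-j)}}{\sqrt{V_{ij}}}
\]
via \eqref{eq:error-decomposition}. The plan is to treat the second term as a random but conditionally fixed shift, and to control it on the master event.

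\textbf{Step 1: the shift.} By Lemma~\ref{lem:var-lower}, $V_{ij}^{-1/2}\le \sqrt{h_n/\{\delta(1-\delta)\}}$. On $\Ecal_{\mathrm{master}}\cap G_{i,\varepsilon}$, Theorem~\ref{thm:bias-goodset} gives $|B_{ij}^{(-j)}|\le C b_n(h_n)$, since $(i,j)\in\Ecal_\varepsilon$ there. Hence the shift $\Delta:=B_{ij}^{(-j)}/\sqrt{V_{ij}}$ satisfies $|\Delta|\le C'\sqrt{h_n}\,b_n(h_n)=:t_n$.

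Both $\Delta$ and $V_{ij}$ are measurable with respect to $\sigma(\Ncal_i^{(-j)},P)$. Conditionally on this sigma-field, only $U_{ij}^{(-j)}$ is random. Moreover $G_{i,\varepsilon}$ is $\sigma(\xi)$-measurable. The event $\Ecal_{\mathrm{master}}$ is presumably measurable with respect to the latent variables and the edges not incident to $j$; if it also involves column-$j$ edges, we replace it by its leave-one-out analogue or use a union bound as in Step 3.

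\textbf{Step 2: conditional Kolmogorov bound.} Fix $x$. Then
\[
\PP\bigl(U/\sqrt V+\Delta\le x\mid \Ncal_i^{(-j)},P\bigr)=\PP\bigl(U/\sqrt V\le x-\Delta\mid\cdot\bigr).
\]
By Theorem~\ref{thm:be}, which holds almost surely and uniformly over pairs with a deterministic constant, this differs from $\Phi(x-\Delta)$ by at most $Ch_n^{-1/2}$. Next, $|\Phi(x-\Delta)-\Phi(x)|\le |\Delta|/\sqrt{2\pi}\le t_n$ on the good event. So on $\Ecal_{\mathrm{master}}\cap G_{i,\varepsilon}$ the conditional distribution function is within $C(h_n^{-1/2}+t_n)$ of $\Phi(x)$, uniformly in $x$.

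\textbf{Step 3: de-conditioning.} We write
\[
\PP\bigl(\cdot\le x\mid G_{i,\varepsilon}\bigr)
=\EE\Bigl[\PP(\cdot\le x\mid\Ncal_i^{(-j)},P,\xi)\,\mathbbm 1_{\Ecal_{\mathrm{master}}}\,\Big|\,G_{i,\varepsilon}\Bigr]+R,
\qquad |R|\le \PP(\Ecal_{\mathrm{master}}^c\mid G_{i,\varepsilon})\le Cn^{-c},
\]
using the bound displayed before Theorem~\ref{thm:bias-goodset}. The conditioning should be done on $(\Ncal_i^{(-j)},\xi)$, as in Theorem~\ref{thm:finite-ci}. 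Theorem~\ref{thm:be} still applies there, because Proposition~\ref{prop:independence} holds equally given $\xi$.

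Combining the two pieces, together with an extra $Cn^{-c}$ from replacing $\EE[\Phi(x)\mathbbm 1_{\Ecal_{\mathrm{master}}}\mid G]$ by $\Phi(x)$, gives \eqref{eq:full-be-bound}.

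\textbf{Main obstacle.} The delicate point is measurability. $\Ecal_{\mathrm{master}}$ may depend on target-column edges through the two-hop matrices, so conditioning on it could distort the Bernoulli law of $\{A_{kj}\}$. The plan avoids conditioning on it. Instead we bound the pointwise difference by
\[
\EE\bigl[\,|\text{cond.\ cdf}-\Phi(x)|\,\mathbbm 1_{\Ecal_{\mathrm{master}}}\bigr]+2\PP(\Ecal_{\mathrm{master}}^c\mid G).
\]
This requires the bias bound to hold pointwise on $\Ecal_{\mathrm{master}}$. The shift bound must then be applied as $|\Delta|\,\mathbbm 1_{\Ecal_{\mathrm{master}}}$, and integrated against a conditional distribution function computed given $(\Ncal,\xi)$ only. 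That is legitimate provided $\Delta$ is a function of $(\Ncal,\xi)$, which it is, together with an event on which its bound holds. If $\Ecal_{\mathrm{master}}$ is not $(\Ncal,\xi)$-measurable, I would first try to show that the bias bound holds on a $\sigma(\xi,A^{(-j)})$-measurable event of comparable probability.
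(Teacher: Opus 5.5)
Your proposal is correct and follows essentially the same route as the paper: a conditional Berry--Esseen bound for $U_{ij}^{(-j)}/\sqrt{V_{ij}}$, a shift costing at most $(2\pi)^{-1/2}|B_{ij}^{(-j)}|/\sqrt{V_{ij}}\le C\sqrt{h_n}\,b_n(h_n)$ on the good event, and averaging under the conditional law given $G_{i,\varepsilon}$ with an $O(n^{-c})$ loss on the complement. Your measurability concern is harmless, because you only multiply the conditional distribution function by $\mathbbm{1}_{\Ecal_{\mathrm{master}}}$ and never condition on that event; the paper instead uses the fallback you suggest, namely the $(\Ncal_i^{(-j)},P)$-measurable event $E_{B,ij}=\{|B_{ij}^{(-j)}|\le Cb_n(h_n)\}$, which contains $\Ecal_{\mathrm{master}}\cap G_{i,\varepsilon}$.
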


The proof is given in Appendix~\ref{app:main-proofs}.

To obtain a central limit theorem centered at $P_{ij}$, the conditional
smoothing bias $B_{ij}^{(-j)}$ must be negligible relative to the random
fluctuation. Here undersmoothing means choosing $h_n$ below the
$n^{2/3}$ scale that balances the risk bound. By
Theorem~\ref{thm:bias-goodset} and \eqref{eq:uniform-bias-bound}, the bias is
uniformly of order $b_n(h_n)$ on the trimmed interior region. Since
$V_{ij}\asymp h_n^{-1}$ by \eqref{eq:variance-bounds}, the required
condition is
\begin{equation}
\label{eq:undersmoothing}
\sqrt{h_n}\,b_n(h_n)
=
\frac{h_n^{3/2}}{n}
+
\sqrt{\frac{h_n\log n}{n}}
\longrightarrow0.
\end{equation}

For a random data-driven neighborhood size, one must also verify that the
selected value remains in this undersmoothing regime; such an automatic
selection result is not proved here.
\begin{theorem}[Asymptotic normality under undersmoothing]
\label{thm:full-clt-undersmooth}
Suppose Assumptions~\ref{as:hybrid}, \ref{as:nondeg}, and
\ref{as:nondeg-block} hold, and fix $\varepsilon$ as in
Assumption~\ref{as:interior}. Fix deterministic indices $i\neq j$ and
consider the conditional law given $G_{i,\varepsilon}$, with
$\PP(G_{i,\varepsilon})=\lambda(I_\varepsilon)>0$. Assume the geometric
regime \eqref{eq:geometric-regime} and the undersmoothing condition
\eqref{eq:undersmoothing}. Under \eqref{eq:geometric-regime}, condition
\eqref{eq:undersmoothing} is equivalent to
\begin{equation}
\label{eq:inference-bandwidth-regime}
h_n=o(n^{2/3}).
\end{equation}
Indeed, the first term in \eqref{eq:undersmoothing} forces $h_n=o(n^{2/3})$, and this condition also implies $h_n\log n/n\to0$, so the second term vanishes automatically.
Then, as $n\to\infty$,
\[
\sup_{x\in\mathbb R}
\left|
\PP\!\left(
\frac{\tilde P_{ij}-P_{ij}}{\sqrt{V_{ij}}}
\le x
\,\middle|\,
G_{i,\varepsilon}
\right)
-
\Phi(x)
\right|
\longrightarrow0.
\]
Equivalently, conditional on the smoothing vertex lying in the trimmed
interior region, the standardized estimation error converges in distribution to
$N(0,1)$.
\end{theorem}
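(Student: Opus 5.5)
The plan is to deduce Theorem~\ref{thm:full-clt-undersmooth} directly from Corollary~\ref{cor:full-be}. That corollary already gives the Kolmogorov bound
\[
C\{h_n^{-1/2}+\sqrt{h_n}\,b_n(h_n)+n^{-c}\},
\]
conditional on $G_{i,\varepsilon}$, under exactly the assumptions listed here. It therefore suffices to show that each of the three terms tends to zero.

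The first step is the equivalence between \eqref{eq:undersmoothing} and \eqref{eq:inference-bandwidth-regime} in the geometric regime. Since $\sqrt{h_n}\,b_n(h_n)=h_n^{3/2}/n+\sqrt{h_n\log n/n}$ is a sum of nonnegative terms, its convergence to zero forces $h_n^{3/2}/n\to0$, that is, $h_n=o(n^{2/3})$. Conversely, if $h_n=o(n^{2/3})$, then
\[
\frac{h_n\log n}{n}=o\!\left(n^{-1/3}\log n\right)\to 0,
\]
so both terms vanish.

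Next, \eqref{eq:geometric-regime} gives $h_n\ge C_0\log n\to\infty$. Hence $h_n^{-1/2}\to0$, and the hypothesis $h_n\to\infty$ of Theorem~\ref{thm:be}, which underlies the corollary, is met. The term $n^{-c}\to0$ trivially. Combining the three limits, the supremum in the statement tends to zero.

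Finally, I would note that uniform convergence of the conditional distribution functions to $\Phi$ is equivalent to convergence in distribution to $N(0,1)$ under $\PP(\cdot\mid G_{i,\varepsilon})$. This holds because $\Phi$ is continuous, by P\'olya's theorem in one direction and trivially in the other.

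There is essentially no obstacle beyond what Corollary~\ref{cor:full-be} already supplies. The only point needing care is that the constants $C,c$ in that corollary must not depend on $n$, so that the bound is genuinely asymptotic. They do not, since they come from the fixed model constants $\delta,K,L_m,c_m,\kappa_m,\eta_2,\varepsilon$ and from $\lambda(I_\varepsilon)$.

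If one preferred to avoid relying on the corollary, the fallback argument would run as follows. Write
\[
\frac{\tilde P_{ij}-P_{ij}}{\sqrt{V_{ij}}}=\frac{U_{ij}^{(-j)}}{\sqrt{V_{ij}}}+\frac{B_{ij}^{(-j)}}{\sqrt{V_{ij}}}.
\]
The bias term is $\mathcal F$-measurable, where $\mathcal F=\sigma(\Ncal_i^{(-j)},P)$, and on $\Ecal_{\mathrm{master}}\cap G_{i,\varepsilon}$ it is bounded by $C\sqrt{h_n}\,b_n(h_n)$, using Theorem~\ref{thm:bias-goodset} and Lemma~\ref{lem:var-lower}. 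Conditioning on $\mathcal F$ and applying Theorem~\ref{thm:be} at the shifted point $x-B_{ij}^{(-j)}/\sqrt{V_{ij}}$, together with the Lipschitz bound $|\Phi(x-t)-\Phi(x)|\le|t|/\sqrt{2\pi}$, controls the error on that event. Averaging over $\mathcal F$ then gives the result, with the master-event failure probability contributing the $O(n^{-c})$ term. The one subtlety here is that $G_{i,\varepsilon}$ is $\mathcal F$-measurable given the latent positions; to handle it, I would condition on $(\Ncal_i^{(-j)},\xi_1,\ldots,\xi_n)$, as is done in Theorem~\ref{thm:finite-ci}.
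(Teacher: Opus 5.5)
Your proposal is correct and follows the paper's proof: the paper likewise invokes the explicit bound \eqref{eq:full-be-bound} from Corollary~\ref{cor:full-be}, uses the geometric regime to get $h_n\to\infty$, and uses the undersmoothing condition to make the $\sqrt{h_n}\,b_n(h_n)$ term vanish. Your fallback argument is essentially the paper's proof of that corollary, except that the paper runs the bias step through the $(\Ncal_i^{(-j)},P)$-measurable event $\{|B_{ij}^{(-j)}|\le Cb_n(h_n)\}$ rather than restricting to $\Ecal_{\mathrm{master}}$ directly, since $\Ecal_{\mathrm{master}}$ also depends on target-column edges.
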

\begin{remark}[Scope of the asymptotic normality result]
Theorem~\ref{thm:full-clt-undersmooth} establishes asymptotic normality using the oracle conditional variance $V_{ij}$. Feasible inference based on a plug-in variance estimator additionally requires ratio consistency of the variance estimator at the selected neighborhood scale. Likewise, for a random data-selected neighborhood size, the theorem applies only when that size lies in a regime where the undersmoothing and variance-consistency conditions continue to hold.
\end{remark}
\begin{remark}[Estimation versus inference bandwidths]
\label{rem:undersmoothing-conflict}
The neighborhood-size-dependent bias bound clarifies the distinction between
estimation and inference. The terms in the conditional risk bound balance at
the scale $h_n\asymp n^{2/3}$. At this scale, however,
\[
\sqrt{h_n}\,\frac{h_n}{n}
\asymp 1,
\]
so the bias need not be negligible relative to the stochastic fluctuation.
Consequently, centered asymptotic normality for $P_{ij}$ requires the inference regime \eqref{eq:inference-bandwidth-regime}. More generally, any deterministic sequence satisfying both \eqref{eq:geometric-regime} and \eqref{eq:inference-bandwidth-regime} is admissible for the centered first-order inference developed here. For
example, polynomial choices $h_n=\lceil n^\gamma\rceil$ with any fixed
$\gamma\in(0,2/3)$ satisfy the geometric lower bound and the required
undersmoothing condition for all sufficiently large $n$.
\end{remark}

For feasible Gaussian normalization, the empirical sample variance in \eqref{eq:empirical-sample-variance} is sufficient. Define
\begin{equation}
\label{eq:feasible-variance}
\widehat V_{ij}^{\mathrm{emp}}:=\frac{s_{ij}^2}{h_n}.
\end{equation}

\begin{lemma}[Consistency of the empirical variance normalization]
\label{lem:emp-var-consistency}
Suppose Assumptions~\ref{as:hybrid}, \ref{as:nondeg}, and
\ref{as:nondeg-block} hold, and fix $\varepsilon$ as in
Assumption~\ref{as:interior}. Fix deterministic indices $i\neq j$ and
condition on $G_{i,\varepsilon}$, with $\PP(G_{i,\varepsilon})>0$. Assume
the geometric regime \eqref{eq:geometric-regime}. Then $h_n\to\infty$ and
$b_n(h_n)\to0$, and, under the conditional law
given $G_{i,\varepsilon}$,
\begin{equation}
\label{eq:variance-ratio-consistency}
\frac{\widehat V_{ij}^{\mathrm{emp}}}{V_{ij}}
=
1+
\mathcal O_{\PP}\!\left(
h_n^{-1/2}+b_n(h_n)^2+h_n^{-1}
\right)
=
1+o_{\PP}(1).
\end{equation}
Moreover,
\[
\PP\!\left(
\widehat V_{ij}^{\mathrm{emp}}=0
\,\middle|\,
\Ncal_i^{(-j)},P
\right)
\le
2(1-\delta)^{h_n},
\]
and hence
\[
\PP\!\left(
\widehat V_{ij}^{\mathrm{emp}}>0
\,\middle|\,
G_{i,\varepsilon}
\right)\to1.
\]
Thus the feasible standardized statistic is well defined with probability
tending to one.
\end{lemma}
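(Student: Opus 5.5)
Since $h_n\ge C_0\log n\to\infty$ and $h_n=o(n)$, we get $\rho_n\to0$ and $r_n\to0$, hence $b_n(h_n)\to0$; this part is immediate. For the ratio, the exact identity \eqref{eq:empirical-sample-variance} gives
\[
\frac{\widehat V^{\mathrm{emp}}_{ij}}{V_{ij}}
=\frac{h_n}{h_n-1}\cdot\frac{\tilde P_{ij}(1-\tilde P_{ij})}{\bar v_{ij}},
\qquad \bar v_{ij}:=\frac1{h_n}\sum_{k\in\Ncal_i^{(-j)}}P_{kj}(1-P_{kj}).
\]
The prefactor is $1+O(h_n^{-1})$. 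Put $\bar p:=h_n^{-1}\sum_{k\in\Ncal_i^{(-j)}}P_{kj}$, so that $\tilde P_{ij}=\bar p+U_{ij}^{(-j)}$. By Lemma~\ref{lem:var-lower}, $\bar v_{ij}\ge\delta(1-\delta)$, so it suffices to bound the numerator deviation $|\tilde P_{ij}(1-\tilde P_{ij})-\bar v_{ij}|$ additively.

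I will split this deviation into two pieces. The first is the stochastic piece. Since $x\mapsto x(1-x)$ is $1$-Lipschitz on $[0,1]$, we have $|\tilde P_{ij}(1-\tilde P_{ij})-\bar p(1-\bar p)|\le|U_{ij}^{(-j)}|$. By Proposition~\ref{prop:independence} and Theorem~\ref{thm:loo-mse}, $\EE[(U_{ij}^{(-j)})^2\mid\Ncal_i^{(-j)},P]\le 1/(4h_n)$, so Chebyshev conditionally, then integration, give $|U_{ij}^{(-j)}|=\mathcal O_{\PP}(h_n^{-1/2})$ under the law given $G_{i,\varepsilon}$.

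The second is the heterogeneity piece. A direct computation gives
\[
\bar p(1-\bar p)-\bar v_{ij}=\frac1{h_n}\sum_{k\in\Ncal_i^{(-j)}}(P_{kj}-\bar p)^2\ge0.
\]
This is bounded by $\max_k(P_{kj}-P_{ij})^2$, since the empirical variance is at most the mean squared deviation about any center, here $P_{ij}$. Bounding this maximum is the main obstacle. Theorem~\ref{thm:bias-goodset} as stated only controls the average $B_{ij}^{(-j)}$, not the maximum. I would therefore go back into its proof in Section~\ref{sec:bias-geometry}. On $\Ecal_{\mathrm{master}}\cap G_{i,\varepsilon}$ that argument localizes every selected neighbor: each $k\in\Ncal_i^{(-j)}$ satisfies $|\xi_k-\xi_i|\le Cb_n(h_n)$ within the same smooth block, or lies in the same identical-profile block. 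The Lipschitz condition of Assumption~\ref{as:hybrid}(1) then gives $\max_k|P_{kj}-P_{ij}|\le CL_m b_n(h_n)$, or $0$ in the identical-profile case. Hence this piece is $\mathcal O(b_n(h_n)^2)$ on the master event. If the geometric proof only delivers the bound in averaged form, the fallback is to use $|P_{kj}-P_{ij}|\le L_m|\xi_k-\xi_i|$ together with the pointwise localization radius established there for each candidate passing the distance threshold. The conditional bound $\PP(\Ecal_{\mathrm{master}}^c\mid G_{i,\varepsilon})=O(n^{-c})$ displayed before Theorem~\ref{thm:bias-goodset} then lets us absorb the complement into the $\mathcal O_{\PP}$ statement.

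Combining the two pieces and dividing by $\bar v_{ij}\ge\delta(1-\delta)$ yields \eqref{eq:variance-ratio-consistency}. For the zero-probability claim, $\widehat V^{\mathrm{emp}}_{ij}=0$ exactly when all selected $A_{kj}$ equal $0$ or all equal $1$. By Proposition~\ref{prop:independence} and Assumption~\ref{as:nondeg}, each of these events has conditional probability at most $\prod_k\max(P_{kj},1-P_{kj})\le(1-\delta)^{h_n}$. A union bound gives $2(1-\delta)^{h_n}$. Integrating over the conditioning and using $h_n\to\infty$ gives $\PP(\widehat V^{\mathrm{emp}}_{ij}>0\mid G_{i,\varepsilon})\to1$.
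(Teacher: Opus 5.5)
Your proposal is correct and follows the paper's proof essentially step for step. It uses the same split into a Lipschitz stochastic piece and the nonnegative heterogeneity term $h_n^{-1}\sum_k(P_{kj}-\bar p)^2$, controls the latter on $\Ecal_{\mathrm{master}}$ through pointwise neighbor localization, absorbs the complement via $\PP(\Ecal_{\mathrm{master}}^c\mid G_{i,\varepsilon})=O(n^{-c})$, and uses the same all-zeros/all-ones bound. The differences are cosmetic: you use Chebyshev where the paper uses Hoeffding, and the maximum bound you identify as the main obstacle is already stated in the paper as Corollary~\ref{cor:all-slice-close}, which is exactly what the paper cites at that step.
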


The proof is given in Appendix~\ref{app:main-proofs}.

\begin{corollary}[Feasible Gaussian normalization and confidence interval]
\label{cor:feasible-normalization}
Suppose the assumptions of Theorem~\ref{thm:full-clt-undersmooth} hold and use the feasible variance estimator \eqref{eq:feasible-variance}. Then, conditional on $G_{i,\varepsilon}$,
\[
\frac{\tilde P_{ij}-P_{ij}}
{\sqrt{\widehat V_{ij}^{\mathrm{emp}}}}
\xrightarrow{d}
N(0,1).
\]
Consequently, for every fixed $\alpha\in(0,1)$, the feasible interval
\begin{equation}
\label{eq:gaussian-ci}
\mathrm{CI}_{ij}^{(N)}(\alpha)
:=
\left[
\tilde P_{ij}
\pm
z_{1-\alpha/2}
\sqrt{\widehat V_{ij}^{\mathrm{emp}}}
\right]
\cap[0,1]
\end{equation}
satisfies
\begin{equation}
\label{eq:gaussian-coverage}
\PP\!\left(
P_{ij}\in\mathrm{CI}_{ij}^{(N)}(\alpha)
\,\middle|\,
G_{i,\varepsilon}
\right)
\longrightarrow
1-\alpha.
\end{equation}
\end{corollary}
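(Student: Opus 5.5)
The argument is a Slutsky-type reduction. We combine the oracle central limit theorem of Theorem~\ref{thm:full-clt-undersmooth} with the ratio consistency of Lemma~\ref{lem:emp-var-consistency}, working throughout under the conditional law given $G_{i,\varepsilon}$.

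On the event $\{\widehat V_{ij}^{\mathrm{emp}}>0\}$ we write
\[
\frac{\tilde P_{ij}-P_{ij}}{\sqrt{\widehat V_{ij}^{\mathrm{emp}}}}
=
\frac{\tilde P_{ij}-P_{ij}}{\sqrt{V_{ij}}}\cdot
\Bigl(\frac{V_{ij}}{\widehat V_{ij}^{\mathrm{emp}}}\Bigr)^{1/2}.
\]
By Lemma~\ref{lem:emp-var-consistency}, this event has conditional probability tending to one. Off the event the statistic may be set to any fixed value, since a set of vanishing probability does not affect convergence in distribution. Theorem~\ref{thm:full-clt-undersmooth} gives uniform Kolmogorov convergence of the first factor to $\Phi$ under $\PP(\cdot\mid G_{i,\varepsilon})$, hence convergence in distribution to $N(0,1)$. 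Lemma~\ref{lem:emp-var-consistency} applies because the undersmoothing assumptions include the geometric regime \eqref{eq:geometric-regime}. It gives $\widehat V_{ij}^{\mathrm{emp}}/V_{ij}=1+o_{\PP}(1)$ under the same conditional law. The map $x\mapsto x^{-1/2}$ is continuous at $1$, so the second factor also converges to $1$ in probability. Slutsky's lemma, applied under the fixed probability measure $\PP(\cdot\mid G_{i,\varepsilon})$, then yields the stated convergence in distribution.

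For coverage, note that $P_{ij}\in[0,1]$, so intersecting with $[0,1]$ does not change whether $P_{ij}$ lies in the interval. Therefore
\[
\{P_{ij}\in\mathrm{CI}_{ij}^{(N)}(\alpha)\}
=\Bigl\{\Bigl|\tfrac{\tilde P_{ij}-P_{ij}}{\sqrt{\widehat V_{ij}^{\mathrm{emp}}}}\Bigr|\le z_{1-\alpha/2}\Bigr\}
\]
on $\{\widehat V_{ij}^{\mathrm{emp}}>0\}$. On the complement, the interval degenerates to the point $\{\tilde P_{ij}\}$, and that complement has vanishing probability. The limit law $N(0,1)$ is continuous, so the points $\pm z_{1-\alpha/2}$ are continuity points and the portmanteau theorem applies. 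We obtain convergence of the probability to $\Phi(z_{1-\alpha/2})-\Phi(-z_{1-\alpha/2})=1-\alpha$.

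The only delicate point is bookkeeping. Every convergence statement must be taken under the same conditional measure $\PP(\cdot\mid G_{i,\varepsilon})$, which has fixed positive mass $\lambda(I_\varepsilon)$. We also have to handle the event where $\widehat V_{ij}^{\mathrm{emp}}=0$, where the statistic is undefined. Both are routine given the earlier results, so no substantive obstacle is expected.
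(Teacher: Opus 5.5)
Your proposal is correct and matches the paper's proof, which is the same Slutsky combination of Theorem~\ref{thm:full-clt-undersmooth} with the ratio consistency of Lemma~\ref{lem:emp-var-consistency} under $\PP(\cdot\mid G_{i,\varepsilon})$. You also handle the event $\{\widehat V_{ij}^{\mathrm{emp}}=0\}$ explicitly and derive the coverage claim through the portmanteau step, details the paper leaves implicit.
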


The proof is given in Appendix~\ref{app:main-proofs}.
\color{black}

\subsection{Practical implementation and computational cost}
In a dense network, forming $A^2$ by standard matrix multiplication costs
$\mathcal O(n^3)$. The reduced two-hop matrices need not be recomputed from
scratch: for every target column $j$,
\begin{equation}
\label{eq:rank-one-update}
\bigl(A^{(-j)}\bigr)^2
=
\bigl(A^2\bigr)_{-j,-j}
-
A_{-j,j}A_{j,-j}.
\end{equation}
Thus, each reduced square is obtained from $A^2$ by a rank-one correction.

This identity substantially reduces the cost of constructing a reduced
two-hop matrix, but it does not make LOO smoothing over all entries as cheap as a
single classical ZLZ fit. After $A^2$ is available, computing the distances
needed for one target pair $(i,j)$ by direct comparison costs
$\mathcal O(n^2)$; computing a fixed target row over all columns costs
$\mathcal O(n^3)$; and a naive implementation for all ordered pairs costs
$\mathcal O(n^4)$. The computations are parallel over target columns and
are particularly attractive when inference is required for a specified
subset of entries or rows. Developing approximate nearest-neighbor or
landmark implementations for full-matrix inference is beyond the scope of
the present paper. We therefore make no claim that the full LOO
procedure has the same computational complexity as a single ZLZ fit.
\color{black}

\section{Geometry and bias control}
\label{sec:bias-geometry}
The conditional bias $B_{ij}^{(-j)}$ is controlled by showing that empirical two-hop proximity yields block containment and, on smooth blocks, latent localization of the selected neighbors. The three subsections below establish these steps in order.

For the geometric analysis, let $\Ecal_{\mathrm{conc}}$ denote the event
on which the uniform bound of Lemma~\ref{lem:twohop-concentration} holds,
let $\Ecal_{\mathrm{int}}$ denote the event of
Lemma~\ref{lem:integral-to-sum}, and let $\Ecal_{\mathrm{fill}}$ denote the
event of Lemma~\ref{lem:max-to-sup}. In addition, with the candidate sets
$S_{ij}$ defined in \eqref{eq:candidate-set} below, let
\[
\Ecal_{\mathrm{cand}}
:=
\left\{
|S_{ij}|\ge h_n
\text{ for every }i\in\Gcal_\varepsilon
\text{ and every }j\neq i
\right\}.
\]
Define the preliminary geometric approximation event
\[
\Ecal_0
:=
\Ecal_{\mathrm{conc}}
\cap
\Ecal_{\mathrm{int}}
\cap
\Ecal_{\mathrm{fill}},
\]
and then define the master event
\begin{equation}
\label{eq:master-event}
\Ecal_{\mathrm{master}}
:=
\Ecal_0\cap\Ecal_{\mathrm{cand}}.
\end{equation}
The across-block separation used later is a deterministic condition in
Assumption~\ref{as:nondeg-block}(2), so no additional random separation
event is required.

All probability statements are taken jointly over
$(\xi_1,\dots,\xi_n)$ and $A$. When analyzing empirical fluctuations, we
condition on the latent variables to isolate the randomness in $A$.

\begin{remark}[Master high-probability event]
\label{rem:master-event}
By Lemmas~\ref{lem:twohop-concentration},
\ref{lem:integral-to-sum}, \ref{lem:max-to-sup}, and
\ref{lem:candidates}, together with a union bound, there exist constants
$C,c>0$ such that
\[
\PP(\Ecal_{\mathrm{master}})
\ge
1-Cn^{-c}.
\]
\end{remark}
Throughout Section~\ref{sec:bias-geometry}, all deterministic inequalities
are understood to hold on $\Ecal_{\mathrm{master}}$.
The trimmed latent set, corresponding vertex set, and admissible target-pair
sets used throughout this section are defined in
Assumption~\ref{as:interior}; see \eqref{eq:interior-sets}.

\subsection{Empirical distance of nearby candidates} \label{subsec:empirical-concentration}
The empirical LOO distance $d^{(-j)}(i,k)$ is defined through the 
row differences of the reduced two-hop matrix $M^{(-j)}$. 
\begin{lemma}[Population analogue of the reduced two-hop matrix]
\label{lem:population-twohop}
Fix distinct indices $i,\ell,j$.

Conditional on the latent variables, there exist constants $C,c>0$ such
that, with conditional probability at least $1-Cn^{-c}$ over the edge
variables,
\[
\left|
M^{(-j)}_{i\ell}
-
\frac{1}{n-1}
\sum_{m\notin\{i,\ell,j\}}
P_{im}P_{m\ell}
\right|
\le
C\sqrt{\frac{\log n}{n}}
\]
uniformly over all admissible triples $(i,\ell,j)$.

Separately, with probability at least $1-Cn^{-c}$ over the latent
variables,
\[
\left|
\frac{1}{n-1}
\sum_{m\notin\{i,\ell,j\}}
P_{im}P_{m\ell}
-
\int_0^1
f(\xi_i,x)f(x,\xi_\ell)\,dx
\right|
\le
C\sqrt{\frac{\log n}{n}}
\]
uniformly over all admissible triples.

Consequently, under the joint law of the latent variables and edges,
\[
M^{(-j)}_{i\ell}
=
\int_0^1
f(\xi_i,x)f(x,\xi_\ell)\,dx
+
\mathcal O_{\PP}\!\left(
\sqrt{\frac{\log n}{n}}
\right)
\]
uniformly over all admissible indices.
\end{lemma}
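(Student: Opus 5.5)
We prove the two displayed bounds separately and then combine them by the triangle inequality and a union bound.

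\emph{First bound (edge randomness).} Fix the latent variables, so $P$ is deterministic. Write
\[
M^{(-j)}_{i\ell}=\frac{1}{n-1}\sum_{m\notin\{i,\ell,j\}}A_{im}A_{m\ell}.
\]
This uses $A_{ii}=A_{\ell\ell}=0$, so the terms $m=i$ and $m=\ell$ vanish, and row/column $j$ are removed.

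For fixed distinct $i,\ell$, the summands $A_{im}A_{m\ell}$ over $m\notin\{i,\ell,j\}$ involve the edge pairs $\{i,m\},\{m,\ell\}$. These pairs are disjoint across different $m$, so conditionally on $P$ the summands are independent. Each is Bernoulli with mean $P_{im}P_{m\ell}$, using independence of the distinct edges $\{i,m\}$ and $\{m,\ell\}$ when $i\neq\ell$.

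Hoeffding's inequality then gives, for $t=C_1\sqrt{\log n/n}$,
\[
\PP\Bigl(\Bigl|\textstyle\sum_m (A_{im}A_{m\ell}-P_{im}P_{m\ell})\Bigr|\ge (n-1)t\,\Big|\,P\Bigr)\le 2\exp\{-2(n-3)^{-1}(n-1)^2t^2\}\le 2n^{-c'C_1^2}.
\]
There are at most $n^3$ admissible triples. Choosing $C_1$ large enough that $c'C_1^2>3+c$ and taking a union bound gives the first claim with conditional probability at least $1-Cn^{-c}$. The bound is uniform in the latent variables because the Hoeffding constant does not depend on $P$.

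\emph{Second bound (latent randomness).} Condition on $(\xi_i,\xi_\ell)$, with $j$ fixed. The variables $\xi_m$ for $m\notin\{i,\ell,j\}$ are i.i.d.\ uniform and independent of $(\xi_i,\xi_\ell)$. Hence $f(\xi_i,\xi_m)f(\xi_m,\xi_\ell)\in[0,1]$ are i.i.d.\ with conditional mean $g(\xi_i,\xi_\ell):=\int_0^1 f(\xi_i,x)f(x,\xi_\ell)\,dx$.

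Hoeffding again controls $\bigl|\frac{1}{n-3}\sum_m(\cdot)-g\bigr|\le C_1\sqrt{\log n/n}$ with failure probability at most $2n^{-c'C_1^2}$. The normalization mismatch is deterministic:
\[
\Bigl|\tfrac{1}{n-1}-\tfrac{1}{n-3}\Bigr|(n-3)\le \tfrac{2}{n-1}.
\]
This is $o(\sqrt{\log n/n})$, so it is absorbed into the constant. Integrating out $(\xi_i,\xi_\ell)$ and taking a union bound over $n^3$ triples finishes the second claim. No regularity of $f$ beyond measurability and boundedness is needed.

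\emph{Combination.} On the intersection of the two events, the triangle inequality gives
\[
\bigl|M^{(-j)}_{i\ell}-g(\xi_i,\xi_\ell)\bigr|\le 2C\sqrt{\log n/n}
\]
uniformly over admissible triples. For the joint probability, the first event has conditional probability at least $1-Cn^{-c}$ for every realization of the latent variables, so integrating over the latent variables preserves the bound. The intersection therefore has joint probability at least $1-2Cn^{-c}$, which yields the $\mathcal O_{\PP}$ statement and indeed a high-probability version.

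The only delicate point is the first step: checking that conditioning on $P$ leaves the summands independent. This holds because the edge pairs for distinct $m$ are disjoint and $i\neq\ell$. If instead $\ell=i$ were allowed, one would get $A_{im}^2=A_{im}$, which is still independent across $m$ but has a different mean. Admissibility excludes this case.
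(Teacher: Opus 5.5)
Your proof is correct and follows essentially the same route as the paper. Both arguments reduce $M^{(-j)}_{i\ell}$ to an average over $m\notin\{i,\ell,j\}$ using the zero diagonal, and use disjointness of the edge pairs $\{i,m\},\{m,\ell\}$ for conditional independence, Hoeffding's inequality, and a union bound over $O(n^3)$ triples. They then apply Hoeffding to the i.i.d.\ products $f(\xi_i,\xi_m)f(\xi_m,\xi_\ell)$ given $(\xi_i,\xi_\ell)$, with an $O(n^{-1})$ normalization correction. Your one addition is to integrate the conditional edge bound over the latent variables explicitly, which gives a joint high-probability statement rather than only the $\mathcal O_{\PP}$ conclusion.
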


The next lemma controls the stochastic part of the empirical-to-population
approximation uniformly over the relevant indices.

\begin{lemma}[Two-hop concentration]
\label{lem:twohop-concentration}
There exist constants $C,c>0$ such that, with probability at least
$1-Cn^{-c}$,
\[
\max_{\substack{i,k,j\ \mathrm{distinct}\\
\ell\notin\{i,k,j\}}}
\left|
\left(M^{(-j)}_{i\ell}-M^{(-j)}_{k\ell}\right)
-
\frac{1}{n-1}
\sum_{m\notin\{i,k,\ell,j\}}
(P_{im}-P_{km})P_{m\ell}
\right|
\le
C\left(\frac{\log n}{n}\right)^{1/2}.
\]
\end{lemma}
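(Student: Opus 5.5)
We will prove Lemma~\ref{lem:twohop-concentration} by conditioning on the latent variables, so that $P$ is fixed and the upper-triangular edges are independent Bernoulli variables. We then establish a pointwise exponential tail bound for each admissible quadruple $(i,k,\ell,j)$ and take a union bound over the at most $n^4$ quadruples. Since the resulting bound does not depend on the latent variables, integrating over $\xi$ gives the unconditional statement.

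\textbf{Step 1: exact expansion.} For $i,\ell\neq j$ with $i\neq\ell$ we have
\[
(n-1)M^{(-j)}_{i\ell}=\sum_{m\notin\{i,\ell,j\}}A_{im}A_{m\ell},
\]
because $A_{ii}=A_{\ell\ell}=0$. Hence
\begin{align*}
(n-1)\bigl(M^{(-j)}_{i\ell}-M^{(-j)}_{k\ell}\bigr)
&=\sum_{m\notin\{i,k,\ell,j\}}(A_{im}-A_{km})A_{m\ell}\\
&\quad+A_{ik}A_{k\ell}-A_{ki}A_{i\ell}.
\end{align*}
The two boundary terms ($m=k$ in the first sum and $m=i$ in the second) are bounded by $1$ in absolute value. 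After dividing by $n-1$ they contribute $O(1/n)$, which is $o(r_n)$. It therefore suffices to control
\[
T:=\sum_{m\in R}\bigl\{(A_{im}-A_{km})A_{m\ell}-(P_{im}-P_{km})P_{m\ell}\bigr\},
\qquad R:=[n]\setminus\{i,k,\ell,j\}.
\]

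\textbf{Step 2: independence of summands.} For $m\in R$, the summand $Y_m:=(A_{im}-A_{km})A_{m\ell}$ depends only on the edges $\{i,m\}$, $\{k,m\}$ and $\{m,\ell\}$. Because $i,k,\ell\notin R$, these edge sets are disjoint for distinct $m,m'\in R$. Edge $\{i,m'\}$ could coincide with $\{k,m\}$ only if $m'=k$ and $m=i$, and both are excluded from $R$; the other possible coincidences are ruled out in the same way. Within a single $m$ the three edges are distinct, so
\[
\EE[Y_m\mid P]=(P_{im}-P_{km})P_{m\ell}.
\]
Thus, conditional on $P$, the variables $\{Y_m\}_{m\in R}$ are independent, bounded in $[-1,1]$, and have the stated means.

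\textbf{Step 3: tail bound and union bound.} Hoeffding's inequality gives
\[
\PP\bigl(|T|\ge t\mid P\bigr)\le 2\exp\{-t^2/(2n)\}.
\]
Choose $t=\sqrt{2(4+c)\,n\log n}$. Each fixed quadruple then fails with probability at most $2n^{-4-c}$. A union bound over at most $n^4$ quadruples gives overall failure probability at most $2n^{-c}$. On the complementary event,
\[
\frac{|T|}{n-1}\le C\sqrt{\frac{\log n}{n}}\qquad\text{uniformly over admissible quadruples}.
\]
Adding the $O(1/n)$ boundary terms from Step~1 completes the argument.

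\textbf{Main obstacle.} The only delicate point is the index bookkeeping: we must ensure that no edge variable appears in two different summands and that the terms with $m\in\{i,k\}$ are handled separately. This is exactly why the population sum in the statement excludes $m\in\{i,k,\ell,j\}$. Once this is settled, the remainder is a standard Hoeffding-plus-union-bound argument.
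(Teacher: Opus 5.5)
Your proof is correct. It differs from the paper's only in where the concentration inequality is applied. The paper does not concentrate the difference directly. It invokes the first part of Lemma~\ref{lem:population-twohop}, where Hoeffding's inequality is applied to the two-edge summands $A_{im}A_{m\ell}-P_{im}P_{m\ell}$ with a union bound over $O(n^3)$ triples. It then writes each of $M^{(-j)}_{i\ell}$ and $M^{(-j)}_{k\ell}$ as its $P$-average plus a remainder bounded by $Cr_n$, and subtracts. In that route the boundary terms $m=k$ and $m=i$ appear as $(P_{ik}P_{k\ell}-P_{ki}P_{i\ell})/(n-1)$, at the level of $P$ rather than $A$.

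You instead apply Hoeffding's inequality to the differenced three-edge summands $(A_{im}-A_{km})A_{m\ell}$. This needs your disjointness check across $m\in R$, which is correct, and a union bound over $n^4$ quadruples. The paper's route is shorter given the earlier lemma and needs only the simpler two-edge independence. Yours is self-contained and makes the polynomial failure probability explicit through the choice of $t$; the paper records its per-entry step in $\mathcal O_{\PP}$ form. Both routes give the same rate $r_n$, and the larger union bound only changes a constant in the exponent.
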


After this concentration step, the remaining finite average approximates
the graphon two-hop difference function
\begin{equation}
\label{eq:twohop-difference-function}
g_{ik}(w):=\int_0^1\{f(\xi_i,x)-f(\xi_k,x)\}f(x,w)\,dx.
\end{equation}
Our next lemma shows that this approximation is uniformly accurate
over all relevant indices.
\begin{lemma}[Population to empirical two-hop approximation]
\label{lem:integral-to-sum}
Under Assumption~\ref{as:hybrid}, there exist constants $C,c>0$ such that,
with probability at least $1-Cn^{-c}$ over the latent variables,
uniformly over all admissible distinct indices $(i,k',\ell,j)$,
\[
\left|
\frac{1}{n-1}
\sum_{m\notin\{i,k',\ell,j\}}
(P_{im}-P_{k'm})P_{m\ell}
-
\int_0^1
\{f(\xi_i,x)-f(\xi_{k'},x)\}f(x,\xi_\ell)\,dx
\right|
\le
C\left(\frac{\log n}{n}\right)^{1/2}.
\]
\end{lemma}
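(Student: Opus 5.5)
The plan is to condition on the pair $(\xi_i,\xi_{k'},\xi_\ell)$, since the summands are functions of the remaining latent positions, and then apply Hoeffding's inequality together with a union bound over the at most $n^4$ admissible index tuples.

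Fix distinct indices $(i,k',\ell,j)$ and condition on $(\xi_i,\xi_{k'},\xi_\ell,\xi_j)$. For each $m\notin\{i,k',\ell,j\}$, set
\[
Z_m:=\{f(\xi_i,\xi_m)-f(\xi_{k'},\xi_m)\}f(\xi_m,\xi_\ell).
\]
The variables $Z_m$ are conditionally i.i.d. because the $\xi_m$ are independent $\operatorname{Uniform}[0,1]$ and independent of the conditioning variables. They take values in $[-1,1]$ because $0\le f\le1$, and
\[
\EE[Z_m\mid\xi_i,\xi_{k'},\xi_\ell]=\int_0^1\{f(\xi_i,x)-f(\xi_{k'},x)\}f(x,\xi_\ell)\,dx=:g_{ik'}(\xi_\ell).
\]
Measurability of $f$ is all that is needed here. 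Assumption~\ref{as:hybrid} is not really used beyond measurability and boundedness, and off the probability-zero boundary event the identity $P_{im}=f(\xi_i,\xi_m)$ holds for all $m$.

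Write $N:=n-4$ for the number of summands. The quantity to control splits as
\[
\frac{1}{n-1}\sum_m Z_m-g_{ik'}(\xi_\ell)
=\frac{N}{n-1}\Bigl(\frac1N\sum_m Z_m-g_{ik'}(\xi_\ell)\Bigr)-\frac{3}{n-1}\,g_{ik'}(\xi_\ell).
\]
The second term is deterministically at most $3/(n-1)$, since $|g_{ik'}|\le1$, and this is $o((\log n/n)^{1/2})$. For the first term, Hoeffding's inequality applied conditionally gives
\[
\PP\Bigl(\Bigl|\frac1N\sum_m Z_m-g_{ik'}(\xi_\ell)\Bigr|>t\Bigm|\xi_i,\xi_{k'},\xi_\ell,\xi_j\Bigr)\le 2\exp(-Nt^2/2).
\]
Choosing $t=C_1\sqrt{\log n/n}$ with $C_1^2/2\cdot N/n\ge 5$, which holds for large $n$ once $C_1$ is large, makes this probability at most $2n^{-5}$. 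Integrating out the conditioning variables keeps the same bound unconditionally. A union bound over at most $n^4$ ordered tuples then gives total failure probability at most $2n^{-1}$. The claimed bound therefore holds with $C=C_1+1$ and $c=1$.

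The only delicate point, though a mild one, is the conditioning. The summands depend on $\xi_i,\xi_{k'},\xi_\ell$, which are themselves random, and the index set excludes exactly those indices together with $j$. This exclusion is precisely what guarantees conditional independence and the correct conditional mean; including $m=\ell$ or $m=i$ would break identical distribution. No uniformity over a continuum of $w$ is required here, because $w$ is evaluated only at the finitely many $\xi_\ell$. The passage to $\sup_w$ is deferred to the max-to-sup lemma, Lemma~\ref{lem:max-to-sup}, which uses the blockwise continuity from Assumption~\ref{as:hybrid}.
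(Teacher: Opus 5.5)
Your proposal is correct and follows essentially the same route as the paper. Both condition on the latent positions of the fixed indices, apply Hoeffding to the conditionally i.i.d. summands bounded in $[-1,1]$, take a union bound over the $O(n^4)$ admissible tuples, and absorb the $3/(n-1)$ normalization error. Your version additionally makes the constants explicit: $t=C_1\sqrt{\log n/n}$, per-tuple failure probability $2n^{-5}$, and overall failure probability $2n^{-1}$.
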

The empirical LOO distance involves a discrete maximum over the sampled latent positions $\{\xi_\ell\}$, whereas the population argument is most naturally expressed through a continuous supremum over each block. We therefore require a uniform discretization bound showing that the sampled blockwise maxima approximate the corresponding continuous suprema. The next lemma provides this bridge.

\begin{lemma}[Discrete maximum to continuous supremum on each block]
\label{lem:max-to-sup}
Under Assumption~\ref{as:hybrid}, for distinct indices $i,k',j$, let
$g_{ik'}$ be the function defined in
\eqref{eq:twohop-difference-function}. Then there exist constants
$C,c>0$ such that, with probability at least $1-Cn^{-c}$, all inner
maxima in the following display are over nonempty index sets and,
uniformly over all admissible triples $(i,k',j)$,
\[
\max_{0\le m\le K-1}
\left|
\max_{\substack{\ell:\,\xi_\ell\in I_m\\
\ell\notin\{i,k',j\}}}
|g_{ik'}(\xi_\ell)|
-
\sup_{w\in I_m}|g_{ik'}(w)|
\right|
\le
C\frac{\log n}{n}.
\]
\end{lemma}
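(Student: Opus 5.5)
The plan is to reduce the statement to two ingredients: a deterministic Lipschitz property of $w\mapsto g_{ik'}(w)$ on each block, and a single high-probability covering event for the latent sample. The covering event does not depend on the triple $(i,k',j)$, so uniformity over triples comes for free.

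\emph{Step 1 (regularity of $g_{ik'}$).} Fix $m$. If $I_m$ is smooth, symmetry of $f$ and Assumption~\ref{as:hybrid}(1) give $|f(x,w)-f(x,w')|=|f(w,x)-f(w',x)|\le L_m|w-w'|$ for $w,w'\in I_m$. Since $|f(\xi_i,x)-f(\xi_{k'},x)|\le1$, this yields
\[
|g_{ik'}(w)-g_{ik'}(w')|\le L_m|w-w'|,\qquad w,w'\in I_m .
\]
The bound holds for every $i,k'$ and every realization of the latent positions. If $I_m$ is an identical-profile block, then $f(x,w)=f(w,x)$ does not depend on $w\in I_m$, so $g_{ik'}$ is constant on $I_m$. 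Set $L:=\max_m L_m$, with the convention that $L_m=0$ for identical-profile blocks.

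\emph{Step 2 (covering event).} Let $a>0$ be a large constant. Partition $[0,1]$ into consecutive bins of length $\Delta_n:=a\log n/n$, at most $n/(a\log n)+1$ of them. The number of latent positions in a fixed bin is $\mathrm{Bin}(n,\Delta_n)$, with mean $a\log n$. A Chernoff lower-tail bound shows that this count is at least $4$ with probability at least $1-n^{-c'}$, where $c'$ grows with $a$. A union bound over bins gives an event $\mathcal F_n$ of probability at least $1-Cn^{-c}$ on which every bin contains at least four sample points. Take $n$ large enough that $3\Delta_n<\min_m|I_m|$. Then for every block $I_m$ and every $w\in I_m$ there is a bin entirely contained in $I_m$ whose points lie within distance $2\Delta_n$ of $w$. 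Because that bin holds at least four points, removing the three indices $i,k',j$ still leaves some $\ell\notin\{i,k',j\}$ with $\xi_\ell\in I_m$ and $|\xi_\ell-w|\le2\Delta_n$. This gives the nonemptiness claim simultaneously for all triples and all $m$.

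\emph{Step 3 (comparison).} The upper inequality $\max_{\ell}|g_{ik'}(\xi_\ell)|\le\sup_{w\in I_m}|g_{ik'}(w)|$ is trivial, since every $\xi_\ell$ in the maximum lies in $I_m$. For the reverse inequality, fix $\eta>0$ and pick $w^\ast\in I_m$ with $|g_{ik'}(w^\ast)|\ge\sup_{I_m}|g_{ik'}|-\eta$. On $\mathcal F_n$, Step 2 supplies an admissible $\ell$ with $|\xi_\ell-w^\ast|\le2\Delta_n$, and Step 1 then gives
\[
|g_{ik'}(\xi_\ell)|\ge\sup_{I_m}|g_{ik'}|-\eta-2aL\log n/n .
\]
Letting $\eta\downarrow0$ yields the stated bound with $C=2aL$, uniformly in $m$ and in all admissible triples.

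The only delicate point is Step 2: bins near block boundaries, which is handled by requiring a whole bin inside $I_m$ using the fixed block lengths, and the need for several points per bin so that the excluded indices cannot empty it. Both are absorbed into the choice of $a$ and the Chernoff bound.
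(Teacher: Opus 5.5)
Your proposal is correct and is essentially the paper's proof. It has the same three ingredients: a blockwise Lipschitz (or constant) bound for $w\mapsto g_{ik'}(w)$ obtained from symmetry; a single Chernoff-plus-union-bound event placing at least four latent points in every cell of width of order $\log n/n$, so that deleting $i,k',j$ still leaves an admissible point near any $w\in I_m$; and the trivial reverse inequality. The only difference is that the paper partitions each block $I_m$ directly into cells of length between $\delta_n$ and $2\delta_n$, which sidesteps your boundary-bin argument and the short remainder bin of a global grid. Your $\mathrm{Bin}(n,\Delta_n)$ Chernoff bound does not literally cover that remainder bin; either take $\Delta_n=1/\lfloor n/(a\log n)\rfloor$, or, for $w$ in the remainder bin, use the preceding full bin, which lies in $I_{K-1}$ within $2\Delta_n$ of $w$.
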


The following corollary combines concentration, integral approximation,
and discretization into a uniform empirical-to-population bound.

\begin{corollary}[Uniform empirical-to-population approximation]
\label{cor:empirical-to-integral}
Under Assumption~\ref{as:hybrid} and the model \eqref{eq:model}, with probability at least $1-Cn^{-c}$, we have uniformly for all admissible indices $j$ and all distinct $i, k', \ell \in V \setminus \{j\}$:
\[
\begin{aligned}
&\left| \left( M^{(-j)}_{i\ell} - M^{(-j)}_{k'\ell} \right) - \int_0^1 (f(\xi_i,x)-f(\xi_{k'},x))f(x,\xi_\ell)\,dx \right| \\
&\quad \le C \left(\frac{\log n}{n}\right)^{1/2}.
\end{aligned}
\]
\end{corollary}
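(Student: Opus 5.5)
The plan is a triangle inequality: insert the finite population average between the empirical two-hop difference and the integral. For admissible $(j,i,k',\ell)$, write
\[
\Bigl(M^{(-j)}_{i\ell}-M^{(-j)}_{k'\ell}\Bigr)-g_{ik'}(\xi_\ell)
=\underbrace{\Bigl(M^{(-j)}_{i\ell}-M^{(-j)}_{k'\ell}\Bigr)-\Delta_{ik'\ell}^{(-j)}}_{\text{(I)}}
+\underbrace{\Delta_{ik'\ell}^{(-j)}-g_{ik'}(\xi_\ell)}_{\text{(II)}},
\]
where
\[
\Delta_{ik'\ell}^{(-j)}:=\frac{1}{n-1}\sum_{m\notin\{i,k',\ell,j\}}(P_{im}-P_{k'm})P_{m\ell}
\]
and $g_{ik'}$ is the two-hop difference function in \eqref{eq:twohop-difference-function}.

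Term (I) is bounded uniformly by Lemma~\ref{lem:twohop-concentration}. Its statement takes the maximum over exactly these distinct $(i,k,j)$ and $\ell\notin\{i,k,j\}$. So on $\Ecal_{\mathrm{conc}}$, which has probability at least $1-Cn^{-c}$, we get $|\text{(I)}|\le C(\log n/n)^{1/2}$.

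Term (II) is bounded uniformly by Lemma~\ref{lem:integral-to-sum}, which applies to admissible distinct $(i,k',\ell,j)$. So on $\Ecal_{\mathrm{int}}$, which has probability at least $1-Cn^{-c}$ over the latent variables and hence also under the joint law, we get $|\text{(II)}|\le C(\log n/n)^{1/2}$.

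A union bound then gives probability at least $1-2Cn^{-c}$ for $\Ecal_{\mathrm{conc}}\cap\Ecal_{\mathrm{int}}$. On this event the claimed bound holds with constant $2C$, after renaming constants.

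Lemma~\ref{lem:max-to-sup} is not needed here. It only enters later, when the empirical distance $d^{(-j)}$, a maximum over sampled $\ell$, is compared with $D_2$.

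The only care point is bookkeeping. One must check that the index sets in the two lemmas agree with those of the corollary: the diagonal terms $m\in\{i,k',\ell,j\}$ are excluded consistently, and the normalization is $n-1$ in both. If Lemma~\ref{lem:twohop-concentration} were instead stated with the sum over $m\notin\{i,\ell,j\}$, the discrepancy is at most $O(1/n)$ per term, because each excluded summand is bounded by $1/(n-1)$. That discrepancy is absorbed into $C(\log n/n)^{1/2}$. I expect no genuine obstacle, since the corollary is a direct combination of the two earlier uniform bounds.
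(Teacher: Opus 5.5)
Your proposal is correct and is essentially the paper's proof. The paper inserts the same intermediate average (called $Q_{ik'\ell j}$ there), bounds the first difference by Lemma~\ref{lem:twohop-concentration} and the second by Lemma~\ref{lem:integral-to-sum}, and intersects the two high-probability events; your remarks that Lemma~\ref{lem:max-to-sup} is not needed here and that the index sets and $n-1$ normalization agree are also accurate.
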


\begin{lemma}[Uniform approximation of the LOO distance by $D_2$]
\label{lem:distance-approximation}
On the event $\Ecal_0$, there exists a constant $C>0$ such that,
uniformly over all distinct indices $i,k,j$,
\begin{equation}
\label{eq:distance-approximation}
\left|
 d^{(-j)}(i,k)-D_2(\xi_i,\xi_k)
\right|
\le
C r_n.
\end{equation}
\end{lemma}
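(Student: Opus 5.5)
The plan is to work on $\Ecal_0$, where Corollary~\ref{cor:empirical-to-integral} (built from Lemmas~\ref{lem:twohop-concentration} and~\ref{lem:integral-to-sum}) and Lemma~\ref{lem:max-to-sup} hold simultaneously. The argument is a two-step sandwich. First, each empirical entry difference is replaced by its population counterpart evaluated at the sampled point $\xi_\ell$. Second, the discrete maximum over sampled $\ell$ is replaced by the essential supremum defining $D_2$.

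\textbf{Step 1 (entrywise replacement).} Fix distinct $i,k,j$. For every $\ell\notin\{i,k,j\}$, Corollary~\ref{cor:empirical-to-integral}, applied with $k'=k$, gives
\[
\bigl|\,(M^{(-j)}_{i\ell}-M^{(-j)}_{k\ell})-g_{ik}(\xi_\ell)\,\bigr|\le C r_n ,
\]
with $g_{ik}$ as in \eqref{eq:twohop-difference-function}. Since $\bigl||a|-|b|\bigr|\le|a-b|$ and a maximum of pointwise perturbations is perturbed by at most the same amount, taking the maximum over $\ell$ yields
\[
\Bigl| d^{(-j)}(i,k)-\max_{\ell\notin\{i,k,j\}}|g_{ik}(\xi_\ell)|\Bigr|\le C r_n .
\]
The constant here is uniform, because the corollary holds uniformly over all admissible quadruples on $\Ecal_0$.

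\textbf{Step 2 (discrete to continuous).} Split the maximum over $\ell$ according to the block containing $\xi_\ell$. On the probability-one event of Assumption~\ref{as:hybrid}, every $\ell$ lies in some block interior $I_m$, and by Lemma~\ref{lem:max-to-sup} each block contains at least one admissible $\ell$. Hence
\[
\max_{\ell}|g_{ik}(\xi_\ell)|=\max_m\max_{\ell:\,\xi_\ell\in I_m}|g_{ik}(\xi_\ell)|.
\]
Lemma~\ref{lem:max-to-sup} says each inner maximum is within $C\log n/n$ of $\sup_{w\in I_m}|g_{ik}(w)|$. Taking a maximum over finitely many blocks preserves this error bound, so the combined quantity is within $C\log n/n$ of $\max_m\sup_{w\in I_m}|g_{ik}(w)|$, which equals $D_2(\xi_i,\xi_k)$ by \eqref{eq:D2-blockwise}. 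Finally, $\log n/n\le r_n$ for $n\ge3$, so the triangle inequality gives \eqref{eq:distance-approximation}.

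\textbf{Main obstacle.} The delicate point is the identification in Step 2 of the blockwise supremum with the essential supremum in \eqref{eq:D2-definition}. This requires $w\mapsto g_{ik}(w)$ to be continuous on each block: on a smooth block this follows from the Lipschitz condition and symmetry, and on an identical-profile block $g_{ik}$ is constant. Under continuity, the essential supremum over an open interval coincides with the supremum, and the finitely many boundary points have measure zero. This is exactly the equivalence asserted after \eqref{eq:D2-blockwise}, so it can be invoked directly. One must also check that the admissibility conventions match: Corollary~\ref{cor:empirical-to-integral} excludes $\ell\in\{i,k,j\}$, as does the maximum in Lemma~\ref{lem:max-to-sup}. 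Since both lemmas use identical index sets, no leftover terms appear.
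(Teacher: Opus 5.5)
Your proposal is correct and follows essentially the same route as the paper. You replace each entry difference by $g_{ik}(\xi_\ell)$ via Corollary~\ref{cor:empirical-to-integral} and the elementary max-perturbation inequality. You then pass from the sampled maximum to $D_2(\xi_i,\xi_k)$ using Lemma~\ref{lem:max-to-sup} and the blockwise representation \eqref{eq:D2-blockwise}, and absorb $\log n/n\le r_n$. Your explicit block-by-block splitting of the maximum and your remark on continuity only spell out what the paper invokes in one line.
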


To bound the empirical radius of the selected leave-one-out neighborhood
$\Ncal_i^{(-j)}$, we construct a theoretical pool of within-block
candidates. Fix $(i,j)\in\Ecal_\varepsilon$ with $\xi_i\in I_m$ and define
\begin{equation}
\label{eq:candidate-radius}
t_n:=a\frac{h_n}{n},
\qquad a>\frac12.
\end{equation}
We define
\begin{equation}
\label{eq:candidate-set}
S_{ij}
:=
\begin{cases}
\left\{
k\in V\setminus\{i,j\}:
|\xi_k-\xi_i|\le t_n
\right\},
& \text{if } I_m \text{ is smooth},\\[6pt]
\left\{
k\in V\setminus\{i,j\}:
\xi_k\in I_m
\right\},
& \text{if } I_m \text{ is an identical-profile block}.
\end{cases}
\end{equation}
For a smooth block, $\xi_i$ lies in its $\varepsilon$-interior and
$t_n=o(1)$, so the interval
$[\xi_i-t_n,\xi_i+t_n]$ is contained in $I_m$ for all sufficiently large
$n$. For an identical-profile block, no localization within the block is needed because
all probability profiles are identical there.

\begin{lemma}[Candidate distance bound]
\label{lem:candidates}
Suppose the geometric regime \eqref{eq:geometric-regime} holds and let
$t_n$ be defined by \eqref{eq:candidate-radius}, with $a>1/2$ fixed.
Suppose further that the constant $C_0$ in
\eqref{eq:geometric-regime} is sufficiently large, depending on $a$ and
the desired polynomial failure exponent. Then there exist constants
$C,c>0$ such that, uniformly over all admissible ordered pairs $(i,j)$
with $i\in\Gcal_\varepsilon$ and $j\neq i$,
\[
|S_{ij}|\ge h_n
\]
with probability at least $1-Cn^{-c}$. Moreover, on $\Ecal_0$, every $k\in S_{ij}$ satisfies
\[
d^{(-j)}(i,k)
\le
C\left\{
\frac{h_n}{n}
+
\left(\frac{\log n}{n}\right)^{1/2}
\right\}
=
C b_n(h_n).
\]
\end{lemma}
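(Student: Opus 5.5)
The proof has two independent parts: a cardinality bound on $S_{ij}$ coming from the latent design, and a distance bound coming from Lemma~\ref{lem:distance-approximation}.

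\emph{Cardinality.} We use only the $\operatorname{Uniform}[0,1]$ latent design.

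Fix a vertex $i$ and condition on $\xi_i$ with $\xi_i\in I_\varepsilon$. Suppose first that $\xi_i$ lies in a smooth block $I_m$. For $n$ large, $t_n<\varepsilon$, so $[\xi_i-t_n,\xi_i+t_n]\subset I_m$. For each fixed $j\neq i$, the count
\[
|S_{ij}|=\#\{k\notin\{i,j\}:|\xi_k-\xi_i|\le t_n\}
\]
is then $\operatorname{Binomial}(n-2,\,2t_n)$, with mean $\mu_n=2a(n-2)h_n/n\ge(2a-o(1))h_n$. Since $2a>1$, a multiplicative Chernoff bound with deviation fraction $\theta=1-1/(2a)+o(1)>0$ gives
\[
\PP(|S_{ij}|<h_n\mid\xi_i)\le \exp(-c_a\mu_n)\le \exp(-c_a' C_0\log n)=n^{-c_a'C_0}.
\]
One could instead avoid the dependence on $j$ by bounding the count over $k\neq i$ and subtracting one. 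Choosing $C_0$ large enough makes this at most $n^{-2-c}$.

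If $\xi_i$ lies in an identical-profile block $I_m$, then $|S_{ij}|\sim\operatorname{Binomial}(n-2,|I_m|)$. Its mean is of order $n\gg h_n$ because $h_n=o(n)$, so Hoeffding's inequality gives an exponentially small failure probability.

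A union bound over the at most $n^2$ ordered pairs $(i,j)$ then yields $|S_{ij}|\ge h_n$ simultaneously for all $i\in\Gcal_\varepsilon$ and $j\neq i$, with probability at least $1-Cn^{-c}$. This step is the only place $C_0$ enters, and handling the $a>1/2$ slack is the main, though mild, obstacle.

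\emph{Distance bound.} Work on $\Ecal_0$. Lemma~\ref{lem:distance-approximation} gives
\[
d^{(-j)}(i,k)\le D_2(\xi_i,\xi_k)+Cr_n .
\]
It therefore suffices to bound $D_2(\xi_i,\xi_k)$ for $k\in S_{ij}$.

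In the identical-profile case, $\xi_k\in I_m$, so $f(\xi_i,\cdot)=f(\xi_k,\cdot)$ and $D_2(\xi_i,\xi_k)=0$.

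In the smooth case, both $\xi_i$ and $\xi_k$ lie in $I_m$. Using $0\le f\le1$, Cauchy--Schwarz, and the Lipschitz bound of Assumption~\ref{as:hybrid}(1), for every $w$,
\[
\left|\int_0^1\{f(\xi_i,x)-f(\xi_k,x)\}f(x,w)\,dx\right|\le \|f(\xi_i,\cdot)-f(\xi_k,\cdot)\|_{L^2}\le L_m|\xi_i-\xi_k|\le L_m a\,h_n/n .
\]
Hence $D_2(\xi_i,\xi_k)\le L_m a\,\rho_n$.

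Combining the two cases with $C:=\max\{1,\max_m L_m a\}$ times the constant from Lemma~\ref{lem:distance-approximation} gives
\[
d^{(-j)}(i,k)\le C(\rho_n+r_n)=Cb_n(h_n)
\]
for every $k\in S_{ij}$, uniformly over admissible pairs.
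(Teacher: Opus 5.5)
Your proposal is correct and follows essentially the same route as the paper. The cardinality bound uses conditional $\operatorname{Binomial}(n-2,2t_n)$ counts with a multiplicative Chernoff bound that exploits $a>1/2$, followed by a union bound over the $O(n^2)$ ordered pairs with $C_0$ large. The distance bound combines Lemma~\ref{lem:distance-approximation} with a Lipschitz bound on $g_{ik}$. The only differences are cosmetic: you bound $g_{ik}$ via Cauchy--Schwarz and the $L^2$ Lipschitz bound where the paper uses the pointwise Lipschitz bound directly, and you use Hoeffding instead of Chernoff in the identical-profile case.
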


\subsection{Functional (slice) closeness of selected neighbors} \label{subsec:neighborhood-containment}
We first bound the population two-hop discrepancy of every selected
neighbor, then use across-block separation for containment and quantitative
identifiability for slice closeness.

\begin{lemma}[Selected neighbors are slice-close]
\label{lem:slice-close}
Fix $(i,j)\in \Ecal_\varepsilon$ and let $\Ncal_i^{(-j)}$ be the LOO
neighborhood of size $h_n$.  
On the master event $\Ecal_{\text{master}}$ defined in
\eqref{eq:master-event}, the following holds:
for every $k\in\Ncal_i^{(-j)}$, the discrepancy in \eqref{eq:D2-definition} satisfies
\begin{equation}
\label{eq:selected-D2-bound}
D_2(\xi_i,\xi_k)\le C b_n(h_n).
\end{equation}
\end{lemma}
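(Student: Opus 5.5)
The argument is a standard nearest-neighbor radius comparison: bound the empirical distance of every selected neighbor by the distance of an explicit candidate, then transfer to the population discrepancy with Lemma~\ref{lem:distance-approximation}. Work on $\Ecal_{\mathrm{master}}=\Ecal_0\cap\Ecal_{\mathrm{cand}}$ and fix $(i,j)\in\Ecal_\varepsilon$, so that $i\in\Gcal_\varepsilon$.

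\textbf{Step 1: a radius for the selected neighbors.} On $\Ecal_{\mathrm{cand}}$ the candidate set $S_{ij}$ of \eqref{eq:candidate-set} satisfies $|S_{ij}|\ge h_n$. By Lemma~\ref{lem:candidates}, on $\Ecal_0$ every $k'\in S_{ij}$ satisfies $d^{(-j)}(i,k')\le C b_n(h_n)$.

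Since $S_{ij}\subset V\setminus\{i,j\}$, at least $h_n$ vertices in the comparison set have LOO distance at most $Cb_n(h_n)$. Hence the $h_n$-th order statistic satisfies $d^{(-j)}_{i,(h_n)}\le Cb_n(h_n)$.

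The neighborhood $\Ncal_i^{(-j)}$ consists of the first $h_n$ vertices under lexicographic ordering of $(d^{(-j)}(i,k),k)$. Therefore every $k\in\Ncal_i^{(-j)}$ satisfies $d^{(-j)}(i,k)\le d^{(-j)}_{i,(h_n)}$. Tie-breaking only changes which vertices are chosen among those at equal distance, so this inequality is unaffected by ties.

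\textbf{Step 2: transfer to $D_2$.} For each selected $k$, Lemma~\ref{lem:distance-approximation} gives, on $\Ecal_0$ and uniformly over distinct $i,k,j$,
\[
D_2(\xi_i,\xi_k)\le d^{(-j)}(i,k)+Cr_n\le Cb_n(h_n)+Cr_n\le C' b_n(h_n),
\]
using $r_n\le b_n(h_n)$. All constants come from the lemmas, which are uniform over indices, so the bound is uniform over $(i,j)\in\Ecal_\varepsilon$ and $k\in\Ncal_i^{(-j)}$.

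\textbf{Main obstacle.} The difficulty sits in the cited lemmas rather than in this combination. Three points need checking.

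\begin{enumerate}
\item The deterministic statement in Lemma~\ref{lem:candidates} relies on bounding $D_2(\xi_i,\xi_{k'})$ by $\|f(\xi_i,\cdot)-f(\xi_{k'},\cdot)\|_{L^2}\le L_m t_n$ in the smooth case, and by $0$ in the identical-profile case. It then uses Lemma~\ref{lem:distance-approximation} to pass to $d^{(-j)}$.
\item Lemma~\ref{lem:distance-approximation} needs the discrete maximum over $\ell\notin\{i,k,j\}$ to match the blockwise supremum in \eqref{eq:D2-blockwise}. This combines Corollary~\ref{cor:empirical-to-integral} with Lemma~\ref{lem:max-to-sup}.
\item The estimate must hold uniformly with no dependence on the particular $j$ removed.
\end{enumerate}

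Since the paper states these lemmas earlier, I take them as given. The only care needed here is to confirm that the candidate set lies inside the same comparison set $V\setminus\{i,j\}$ used in the order statistics.
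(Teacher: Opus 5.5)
Your proposal is correct and follows the paper's proof essentially step for step: the candidate set from Lemma~\ref{lem:candidates} bounds the order statistic $d^{(-j)}_{i,(h_n)}$ by $Cb_n(h_n)$, every selected neighbor inherits this bound, and Lemma~\ref{lem:distance-approximation} together with $r_n\le b_n(h_n)$ transfers it to $D_2(\xi_i,\xi_k)$. Your remarks on tie-breaking and on $S_{ij}\subset V\setminus\{i,j\}$ make explicit two points that the paper's proof uses without stating them.
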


Across-block separation converts Lemma~\ref{lem:slice-close} into block
containment.

\begin{lemma}[Neighborhood containment]
\label{lem:containment}
Fix $(i,j)\in\Ecal_\varepsilon$ and suppose $\xi_i\in I_m$. Under
Assumption~\ref{as:nondeg-block}(2), on $\Ecal_{\mathrm{master}}$ and for
all sufficiently large $n$, every cross-block vertex has empirical LOO
distance bounded below by a positive constant, whereas the $h_n$
within-block candidate vertices of Lemma~\ref{lem:candidates} have distance
$O(b_n(h_n))=o(1)$. Consequently,
\[
\xi_k\in I_m
\qquad\text{for every }k\in\Ncal_i^{(-j)}.
\]
\end{lemma}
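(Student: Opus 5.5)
The plan is to compare two groups of vertices: the within-block candidates of Lemma~\ref{lem:candidates}, whose empirical distance to $i$ is small, and the cross-block vertices, whose empirical distance to $i$ stays bounded below. The $h_n$-nearest-neighbor rule then excludes every cross-block vertex. We work on $\Ecal_{\mathrm{master}}=\Ecal_0\cap\Ecal_{\mathrm{cand}}$ and fix $(i,j)\in\Ecal_\varepsilon$ with $\xi_i\in I_m$. On this event all of Lemma~\ref{lem:distance-approximation}, Lemma~\ref{lem:candidates} and the candidate-count condition hold simultaneously for all admissible pairs. This uniformity is what makes the conclusion hold for every pair at once.

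\emph{Lower bound for cross-block vertices.} Take any $k\in V\setminus\{i,j\}$ with $\xi_k\in I_{m'}$ and $m'\neq m$. Assumption~\ref{as:nondeg-block}(2) gives $D_2(\xi_i,\xi_k)\ge\eta_2$. Lemma~\ref{lem:distance-approximation} then gives
\[
d^{(-j)}(i,k)\ge \eta_2-Cr_n\ge \eta_2/2
\]
for all sufficiently large $n$, since $r_n\to0$. The constant $C$ is uniform over triples, so this lower bound holds for every cross-block $k$ simultaneously.

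\emph{Upper bound for candidates.} On $\Ecal_{\mathrm{cand}}$ we have $|S_{ij}|\ge h_n$. For a smooth block, the interior condition together with $t_n\to0$ places every $k\in S_{ij}$ in $I_m$; for an identical-profile block this holds by definition of $S_{ij}$. Lemma~\ref{lem:candidates} gives $d^{(-j)}(i,k)\le Cb_n(h_n)$ for every $k\in S_{ij}$. Under \eqref{eq:geometric-regime}, $b_n(h_n)=h_n/n+r_n\to0$, so for large $n$ we get $Cb_n(h_n)<\eta_2/2$. Hence at least $h_n$ within-block vertices have distance strictly smaller than the distance of any cross-block vertex.

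\emph{Conclusion and main obstacle.} The neighborhood $\Ncal_i^{(-j)}$ consists of the first $h_n$ vertices in the lexicographic order of $(d^{(-j)}(i,k),k)$. Suppose some cross-block $k$ were selected. Then all $h_n$ candidates in $S_{ij}$ would have to precede it in this order. Together with $k$ itself, that makes at least $h_n+1$ selected vertices, contradicting $|\Ncal_i^{(-j)}|=h_n$. Ties cannot matter here, because the two distance groups are separated by a strict inequality. Therefore $\xi_k\in I_m$ for every $k\in\Ncal_i^{(-j)}$.

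The main point requiring care is the threshold ``sufficiently large $n$''. It must depend only on $\eta_2$, on the uniform constants in Lemmas~\ref{lem:distance-approximation} and~\ref{lem:candidates}, and on the deterministic sequence $h_n$, and not on $(i,j)$ or the realization. This holds because all of these constants are deterministic and uniform on $\Ecal_{\mathrm{master}}$.
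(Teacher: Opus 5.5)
Your proposal is correct and follows essentially the same route as the paper. It combines the cross-block lower bound $d^{(-j)}(i,k)\ge\eta_2-Cr_n$ from Assumption~\ref{as:nondeg-block}(2) and Lemma~\ref{lem:distance-approximation}, the candidate upper bound $Cb_n(h_n)\to0$ from Lemma~\ref{lem:candidates}, and $|S_{ij}|\ge h_n$ on $\Ecal_{\mathrm{cand}}$ to force every selected neighbor into $I_m$, and your counting argument through the lexicographic order just spells out the paper's final sentence in more detail.
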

\begin{lemma}[Two-hop discrepancy implies $L^2$ slice closeness]
\label{lem:twohop-to-slice}
Let $i\in\Gcal_\varepsilon$ lie in a smooth block $I_m$, and let
$k\in\Ncal_i^{(-j)}$. Under Assumption~\ref{as:nondeg-block}, on
$\Ecal_{\mathrm{master}}$,
\[
\|f(\xi_i,\cdot)-f(\xi_k,\cdot)\|_{L^2}
\le
C b_n(h_n).
\]
\end{lemma}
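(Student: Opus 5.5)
The argument chains three earlier results: Lemma~\ref{lem:slice-close}, the block containment of Lemma~\ref{lem:containment}, and the within-block quantitative identifiability in Assumption~\ref{as:nondeg-block}(1). Throughout we work on $\Ecal_{\mathrm{master}}$ and fix $(i,j)\in\Ecal_\varepsilon$ with $\xi_i\in I_m$, where $I_m$ is a smooth block, together with a selected neighbor $k\in\Ncal_i^{(-j)}$.

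\emph{Step 1.} By Lemma~\ref{lem:containment}, on $\Ecal_{\mathrm{master}}$ and for all sufficiently large $n$, we have $\xi_k\in I_m$. Hence $\xi_i$ and $\xi_k$ lie in the same smooth block, and Assumption~\ref{as:nondeg-block}(1) applies with $u=\xi_i$ and $u'=\xi_k$.

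\emph{Step 2.} Combining the blockwise representation \eqref{eq:D2-blockwise} with the identifiability inequality gives
\[
\kappa_m\|f(\xi_i,\cdot)-f(\xi_k,\cdot)\|_{L^2}
\le
\sup_{w\in I_m}\left|\int_0^1\{f(\xi_i,x)-f(\xi_k,x)\}f(x,w)\,dx\right|
\le
D_2(\xi_i,\xi_k).
\]
The second inequality holds because $D_2$ is the maximum over all blocks of the blockwise suprema, so it dominates the supremum over $I_m$ alone.

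\emph{Step 3.} Lemma~\ref{lem:slice-close} gives $D_2(\xi_i,\xi_k)\le C b_n(h_n)$ on $\Ecal_{\mathrm{master}}$. Dividing by $\kappa_m$ yields
\[
\|f(\xi_i,\cdot)-f(\xi_k,\cdot)\|_{L^2}\le (C/\min_m\kappa_m)\, b_n(h_n).
\]
The minimum is over the finitely many smooth blocks, so the resulting constant is independent of $n$, $i$, $j$ and $k$, and the bound holds uniformly.

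\emph{Main obstacle.} The only delicate point is Step 1. Identifiability is stated only for pairs within the same block, so containment must be available before Step 2 can be invoked. Lemma~\ref{lem:containment} supplies it, but only for $n$ large enough that $Cb_n(h_n)<\eta_2$ up to the $O(r_n)$ distance-approximation error of Lemma~\ref{lem:distance-approximation}. The statement should therefore be read as holding for all sufficiently large $n$. For the remaining finitely many $n$, the trivial bound $\|f(\xi_i,\cdot)-f(\xi_k,\cdot)\|_{L^2}\le1$ can be absorbed by enlarging $C$, since $b_n(h_n)\ge r_n$ is bounded below on that finite range.
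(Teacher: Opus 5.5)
Your proposal is correct and follows essentially the same route as the paper. You obtain same-block containment from Lemma~\ref{lem:containment}, bound $D_2(\xi_i,\xi_k)$, and hence the supremum over $I_m$, by $Cb_n(h_n)$ via Lemma~\ref{lem:slice-close}, and invert with $\kappa_m^{-1}$ through Assumption~\ref{as:nondeg-block}(1), which is Lemma~\ref{lem:injectivity}; your remarks that $D_2$ dominates the single-block supremum and that the finitely many small $n$ can be absorbed into $C$ are details the paper leaves implicit, not a different argument.
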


\subsection{From slice closeness to latent closeness on smooth blocks} \label{subsec:bias-inversion}
The goal of this final stage is to translate the functional similarity of connectivity profiles into latent-coordinate proximity in $[0,1]$. This step is vital because the entrywise bias depends on the values of the graphon at specific coordinates, rather than just the average behavior of its slices.
\begin{lemma}[Quantitative inversion on smooth blocks]
\label{lem:injectivity}
Under Assumption~\ref{as:nondeg-block}(1), for every smooth block $I_m$
and every $u,u'\in I_m$,
\[
\|f(u,\cdot)-f(u',\cdot)\|_{L^2}
\le
\kappa_m^{-1}
\sup_{w\in I_m}
\left|
\int_0^1
\{f(u,x)-f(u',x)\}f(x,w)\,dx
\right|.
\]
\end{lemma}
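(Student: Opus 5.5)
The claim follows directly from Assumption~\ref{as:nondeg-block}(1) by a rearrangement; there is no analytic content beyond checking that the quantities involved are finite and well defined. Fix a smooth block $I_m$ and points $u,u'\in I_m$. Write
\[
g:=f(u,\cdot)-f(u',\cdot)\in L^2[0,1],
\qquad
\Delta:=\sup_{w\in I_m}\left|\int_0^1\{f(u,x)-f(u',x)\}f(x,w)\,dx\right|.
\]
First I will check that both sides of the displayed inequality are finite and nonnegative. Since $0\le f\le 1$, we have $|g|\le 1$ pointwise, so $\|g\|_{L^2}\le 1$. By the Cauchy--Schwarz estimate used earlier in the sufficient-condition paragraph, $|(T_fg)(w)|\le\|g\|_{L^2}\le1$ for every $w$. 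Hence $\Delta\in[0,1]$, and the supremum is well defined. It ranges over the open interval $I_m$ itself, matching the form in Assumption~\ref{as:nondeg-block}(1).

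Next I will invoke Assumption~\ref{as:nondeg-block}(1) for this smooth block. It states exactly that $\Delta\ge\kappa_m\|g\|_{L^2}$ for all $u,u'\in I_m$, with a fixed constant $\kappa_m>0$. Dividing both sides by $\kappa_m>0$ preserves the inequality and gives
\[
\|f(u,\cdot)-f(u',\cdot)\|_{L^2}\le\kappa_m^{-1}\Delta,
\]
which is the claim. The degenerate case $u=u'$ needs no separate treatment: both sides are then zero.

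There is no real obstacle here. The only points worth recording are the ones that make the lemma usable later. Lemma~\ref{lem:twohop-to-slice} will apply this bound with $u=\xi_i$ and $u'=\xi_k$. That application is legitimate only once Lemma~\ref{lem:containment} has placed $\xi_k$ in the same block $I_m$. It also needs the blockwise supremum $\sup_{w\in I_m}|g_{ik}(w)|$ to be dominated by $D_2(\xi_i,\xi_k)$, which follows from the representation \eqref{eq:D2-blockwise}. Finally, the inverse constant $\kappa_m^{-1}$ must not depend on $n$, and this holds because Assumption~\ref{as:nondeg-block} fixes $\kappa_m$ independently of $n$.
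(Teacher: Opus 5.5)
Your proposal is correct and matches the paper's proof, which simply records the lemma as an immediate rearrangement of Assumption~\ref{as:nondeg-block}(1), i.e.\ division by the fixed constant $\kappa_m>0$. Your extra checks (finiteness of both sides via $0\le f\le 1$, and the remarks about the later use in Lemma~\ref{lem:twohop-to-slice} through \eqref{eq:D2-blockwise}) are accurate but not needed for the lemma itself.
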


Lemma~\ref{lem:injectivity} records the quantitative inversion in Assumption~\ref{as:nondeg-block}(1). Combined with the lower bi-Lipschitz and pointwise Lipschitz conditions in Assumption~\ref{as:hybrid}(1), it yields latent localization and entrywise bias control.
\begin{lemma}[Latent localization and bias control]
\label{lem:metric-to-slice}
Suppose Assumptions~\ref{as:hybrid} and~\ref{as:nondeg-block} hold. On
$\Ecal_{\mathrm{master}}$, there exists $C>0$ such that the following
holds for every $(i,j)\in\Ecal_\varepsilon$. Let $I_m$ be the block
containing $\xi_i$.

If $I_m$ is a smooth block, then every
$k\in\Ncal_i^{(-j)}$ belongs to $I_m$ and
\[
|\xi_k-\xi_i|
\le
C b_n(h_n).
\]
Consequently,
\[
|B_{ij}^{(-j)}|
\le
C b_n(h_n).
\]

If $I_m$ is an identical-profile block, then every
$k\in\Ncal_i^{(-j)}$ belongs to $I_m$ and
\[
B_{ij}^{(-j)}=0.
\]
\end{lemma}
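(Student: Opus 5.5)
The plan chains the earlier geometric lemmas on $\Ecal_{\mathrm{master}}$. Containment comes first. Fix $(i,j)\in\Ecal_\varepsilon$ with $\xi_i\in I_m$. Lemma~\ref{lem:containment} already gives $\xi_k\in I_m$ for every $k\in\Ncal_i^{(-j)}$, in both the smooth and the identical-profile cases.

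The identical-profile case is then immediate. Since $\xi_k,\xi_i\in I_m$, Assumption~\ref{as:hybrid}(2) gives $f(\xi_k,v)=f(\xi_i,v)$ for all $v$. Taking $v=\xi_j$ yields $P_{kj}=P_{ij}$ for every selected $k$. Hence each summand of $B_{ij}^{(-j)}$ in \eqref{eq:Bij_def} cancels and $B_{ij}^{(-j)}=0$. Note that $P_{kj}=f(\xi_k,\xi_j)$ requires $k\neq j$, which holds because $\Ncal_i^{(-j)}\subset V\setminus\{i,j\}$.

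For a smooth block, I would proceed in three steps.
\begin{enumerate}
\item Lemma~\ref{lem:twohop-to-slice}, which rests on Lemma~\ref{lem:slice-close} combined with the inversion in Lemma~\ref{lem:injectivity}, gives
\[
\|f(\xi_i,\cdot)-f(\xi_k,\cdot)\|_{L^2}\le Cb_n(h_n)
\]
for every $k\in\Ncal_i^{(-j)}$.
\item Both $\xi_i$ and $\xi_k$ lie in $I_m$ by containment. The lower bi-Lipschitz condition $c_m|u-u'|\le\|f(u,\cdot)-f(u',\cdot)\|_{L^2}$ in Assumption~\ref{as:hybrid}(1) therefore yields
\[
|\xi_k-\xi_i|\le c_m^{-1}Cb_n(h_n).
\]
\item The pointwise Lipschitz condition then gives
\[
|P_{kj}-P_{ij}|=|f(\xi_k,\xi_j)-f(\xi_i,\xi_j)|\le L_m|\xi_k-\xi_i|\le L_mc_m^{-1}Cb_n(h_n).
\]
\end{enumerate}
Averaging over the $h_n$ selected neighbors, the triangle inequality bounds $|B_{ij}^{(-j)}|$ by the maximum of $|P_{kj}-P_{ij}|$. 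That maximum is at most $Cb_n(h_n)$ after absorbing $\max_m L_m/c_m$ into $C$; this is legitimate because $K$ is fixed.

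The one point that needs care, though it is not a real obstacle, is uniformity. All constants must be taken as maxima over the finitely many blocks, and each cited lemma must hold simultaneously for all $(i,j)\in\Ecal_\varepsilon$ on the single event $\Ecal_{\mathrm{master}}$, with nothing re-randomized per pair. The interior condition $\xi_i\in I_\varepsilon$ enters only through Lemmas~\ref{lem:candidates} and~\ref{lem:containment}, which are already established on that event. The present lemma is therefore essentially a deterministic consequence of those results.
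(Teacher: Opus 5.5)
Your proposal is correct and follows the paper's proof essentially step for step. You get containment from Lemma~\ref{lem:containment} and zero bias on identical-profile blocks from Assumption~\ref{as:hybrid}(2); on smooth blocks you chain Lemma~\ref{lem:twohop-to-slice}, the lower bi-Lipschitz bound and the pointwise Lipschitz bound, and then average over the neighborhood, with your notes on $k\neq j$ and on taking block constants as maxima over finitely many blocks making explicit what the paper leaves implicit.
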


\begin{corollary}[Pointwise closeness of selected neighbors]
\label{cor:all-slice-close}
On $\Ecal_{\mathrm{master}}$, let $(i,j)\in\Ecal_\varepsilon$ and
$k\in\Ncal_i^{(-j)}$.

If $\xi_i$ belongs to a smooth block $I_m$, then
\[
|f(\xi_k,v)-f(\xi_i,v)|
\le
L_m|\xi_k-\xi_i|
\le
C b_n(h_n)
\qquad
\text{for every }v\in[0,1].
\]

If $\xi_i$ belongs to an identical-profile block, then
\[
f(\xi_k,v)=f(\xi_i,v)
\qquad
\text{for every }v\in[0,1].
\]

Consequently, in either case,
\[
\max_{k\in\Ncal_i^{(-j)}}
|P_{kj}-P_{ij}|
\le
C b_n(h_n),
\]
where the left-hand side is identically zero on an identical-profile block.
\end{corollary}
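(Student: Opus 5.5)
The corollary is essentially a repackaging of Lemma~\ref{lem:metric-to-slice} together with the regularity conditions in Assumption~\ref{as:hybrid}. The plan is to work throughout on $\Ecal_{\mathrm{master}}$, fix $(i,j)\in\Ecal_\varepsilon$ and $k\in\Ncal_i^{(-j)}$, and split into cases according to the type of the block $I_m$ containing $\xi_i$.

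\emph{Smooth block.} First invoke Lemma~\ref{lem:metric-to-slice}, which gives two facts on $\Ecal_{\mathrm{master}}$: $\xi_k\in I_m$ (containment, via Lemma~\ref{lem:containment}), and $|\xi_k-\xi_i|\le C b_n(h_n)$. Because both $\xi_i$ and $\xi_k$ lie in the same smooth block $I_m$, the pointwise Lipschitz condition of Assumption~\ref{as:hybrid}(1) applies and gives, for every $v\in[0,1]$,
\[
|f(\xi_k,v)-f(\xi_i,v)|\le L_m|\xi_k-\xi_i|\le L_m C b_n(h_n).
\]
Since $K$ is fixed, we absorb $\max_m L_m$ into the constant. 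Containment matters here: the Lipschitz bound is only available within a single block, so it cannot be applied without first knowing $\xi_k\in I_m$.

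\emph{Identical-profile block.} Lemma~\ref{lem:metric-to-slice} again yields $\xi_k\in I_m$, and Assumption~\ref{as:hybrid}(2) then gives $f(\xi_k,v)=f(\xi_i,v)$ for all $v$.

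\emph{Conclusion.} Specialize to $v=\xi_j$. Since $k\neq j$ and $i\neq j$, we have $P_{kj}=f(\xi_k,\xi_j)$ and $P_{ij}=f(\xi_i,\xi_j)$ by \eqref{eq:graphon-probability}. Hence $|P_{kj}-P_{ij}|\le Cb_n(h_n)$ in the smooth case and $|P_{kj}-P_{ij}|=0$ in the identical-profile case. Taking the maximum over the $h_n$ elements of $\Ncal_i^{(-j)}$ gives the final display. The constant is uniform over $(i,j)$ because it depends only on the fixed block constants and on the constant in Lemma~\ref{lem:metric-to-slice}.

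There is no genuine obstacle; the substantive work sits in Lemma~\ref{lem:metric-to-slice}. The one point needing care is the membership statement, which holds only on the probability-one event that no $\xi$ hits a block boundary, as stipulated after Assumption~\ref{as:hybrid}.
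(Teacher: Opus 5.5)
Your proof is correct and follows the paper's route. The paper gives no standalone proof of this corollary; the argument is embedded in the proof of Lemma~\ref{lem:metric-to-slice}, and it is exactly yours: containment via Lemma~\ref{lem:containment}, the localization $|\xi_k-\xi_i|\le (C_0/c_m)\,b_n(h_n)$, and then the pointwise Lipschitz bound of Assumption~\ref{as:hybrid}(1) (or, on an identical-profile block, Assumption~\ref{as:hybrid}(2)) evaluated at $v=\xi_j$.
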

\color{black}

\section{Edge-wise no-leakage tuning via cross-validation}
\label{sec:tuning}

The leave-one-out construction admits an edge-held-out cross-validation
(CV) criterion with target-column-deleted neighborhood selection for
choosing a neighborhood size $h$. For each validation edge $A_{ij}$, the
target column $j$ is deleted when constructing the neighborhood, and
$A_{ij}$ is excluded from the subsequent predictor
$\tilde P_{ij}^{(-j)}(h)$. This edge--predictor separation yields the exact
conditional risk identity below, which justifies the criterion in
expectation. It is not an oracle inequality and does not establish
concentration of the selected minimizer or automatic inferential validity
after neighborhood-size selection.
\color{black}
\subsection{Cross-validation}
Fix a row index $i\in[n]$.
Let $\Hcal$ be a finite, nonempty, deterministic \textbf{tuning grid} of
candidate neighborhood sizes:
\begin{equation}
\label{eq:tuning-grid}
\Hcal\subset\{1,2,\dots,n-2\}.
\end{equation}

For estimation-oriented tuning, the grid may be centered around the
value $h\asymp n^{2/3}$ that balances the risk bound. For inference-oriented tuning, the
grid should instead be restricted to a deterministic undersmoothing range,
for example
\[
\Hcal
\subset
\left\{
h:
C_0\log n\le h\le \bar h_n
\right\},
\qquad
\bar h_n=o(n^{2/3}),
\]
so that every candidate neighborhood size satisfies the asymptotic
undersmoothing requirement.
\color{black}
For each $h \in \Hcal$ and each held-out column $j \neq i$, form the
reduced adjacency matrix $A^{(-j)}$ by deleting the $j$-th row and column. This ensures that the neighborhood $\Ncal_i^{(-j)}(h)$ is constructed without using any information from the held-out target column $j$. Compute the corresponding leave-one-out predictor
\begin{equation}
\label{eq:cv-predictor}
\tilde P_{ij}^{(-j)}(h)
:= \frac{1}{h} \sum_{k \in \Ncal_i^{(-j)}(h)} A_{kj},
\qquad j \neq i.
\end{equation}
Define the LOO cross-validation score
\begin{equation}
\label{eq:cv-score}
\widehat R_i(h):=\frac{1}{n-1}\sum_{j\neq i}\bigl(A_{ij}-\tilde P_{ij}^{(-j)}(h)\bigr)^2.
\end{equation}
Conditional on $P$, $\Ncal_i^{(-j)}(h)$ is measurable with respect to the
edges having neither endpoint equal to $j$ and is therefore independent
of the target-column edge vector
\[
\{A_{rj}:r\in[n]\setminus\{j\}\}.
\]
The predictor $\tilde P_{ij}^{(-j)}(h)$ is not measurable with respect to
$A^{(-j)}$ alone because, after neighborhood selection, it uses the
selected target-column observations
\[
\{A_{kj}:k\in\Ncal_i^{(-j)}(h)\}.
\]
Since $i\notin\Ncal_i^{(-j)}(h)$, however, the predictor does not use the
held-out edge $A_{ij}$. More precisely,
$\tilde P_{ij}^{(-j)}(h)$ is measurable with respect to
\[
\sigma\!\left(
A^{(-j)},\{A_{kj}:k\neq i,j\}
\right).
\]
Conditional edge independence therefore gives
\[
A_{ij}
\perp\!\!\!\perp
\left(
\Ncal_i^{(-j)}(h),
\tilde P_{ij}^{(-j)}(h)
\right)
\mid P.
\]
In particular,
\[
A_{ij}
\perp\!\!\!\perp
\tilde P_{ij}^{(-j)}(h)
\mid
\bigl(\Ncal_i^{(-j)}(h),P\bigr).
\]
The resulting zero cross-term gives
\begin{equation}
\label{eq:true-risk}
\begin{aligned}
&\EE\!\left[
\bigl(A_{ij}-\tilde P_{ij}^{(-j)}(h)\bigr)^2
\,\middle|\,
\Ncal_i^{(-j)}(h),P
\right]
\\
&\qquad=
P_{ij}(1-P_{ij})
+
\EE\!\left[
\bigl(\tilde P_{ij}^{(-j)}(h)-P_{ij}\bigr)^2
\,\middle|\,
\Ncal_i^{(-j)}(h),P
\right].
\end{aligned}
\end{equation}
While our inferential guarantees target individual entries $P_{ij}$, a
single binary observation per edge does not provide a direct empirical
criterion for choosing an edge-specific neighborhood size. We therefore aggregate
over columns and define the theoretical criterion as the row-averaged
conditional mean squared error:
\begin{equation}
\label{eq:row-risk}
R_i(h)
:=
\frac{1}{n-1}\sum_{j\neq i}
\EE\!\left[
\bigl(\tilde P_{ij}^{(-j)}(h)-P_{ij}\bigr)^2
\,\middle|\,
P
\right].
\end{equation}
Averaging \eqref{eq:true-risk} over $j\neq i$ and then over the randomness
of the LOO neighborhoods gives the exact identity
\begin{equation}
\label{eq:cv-risk-identity}
\EE[\widehat R_i(h)\mid P]
=
\frac{1}{n-1}\sum_{j\neq i}P_{ij}(1-P_{ij})
+
R_i(h).
\end{equation}
The first term does not depend on $h$, so minimizing the expected CV score
is equivalent to minimizing $R_i(h)$. The empirical score
$\widehat R_i(h)$ is the observable criterion used for tuning.
For deterministic tie-breaking, we select the smallest minimizer of the
empirical score over the grid:
\begin{equation}
\label{eq:cv-selector}
\widehat h_i
:=
\min\left(
\arg\min_{h\in\Hcal}\widehat R_i(h)
\right).
\end{equation}

\begin{remark}[Deterministic neighborhood size in the inferential theory]
All confidence and asymptotic-normality results in
Sections~\ref{sec:main_results}--\ref{sec:bias-geometry} are stated for a
deterministic common neighborhood size $h_n$. The row-specific
cross-validated quantity $\widehat h_i$ is a practical tuning proposal and
is not substituted into those theorems in the present paper. Establishing
uniform validity after random neighborhood-size selection would require an
additional oracle or stochastic-equicontinuity argument. For inference,
one may instead use a prespecified deterministic undersmoothing sequence,
or restrict the tuning grid to an undersmoothing range and treat validity
after selection as a separate question.
\end{remark}
\color{black}

\section{Simulation study}
\label{sec:sim}

The simulation study examines three implications of the theory: whether the
LOO estimator at $h_n\asymp n^{2/3}$ exhibits empirical MSE behavior
compatible with the derived $O(n^{-2/3})$ conditional-risk upper-bound
scaling; whether the feasible variance estimator in
\eqref{eq:feasible-variance} approximates the oracle conditional variance;
and how finite-sample Gaussian inference behaves under an undersmoothed
neighborhood size. We then report an oscillatory stress test for which
quantitative two-hop identifiability has not been verified, followed by three
additional stress-test graphons that violate at least one assumption of the
main theory.

All computations were implemented in \texttt{Python} using \texttt{NumPy}.
Throughout this section, $h_n$ denotes the integer number of selected
neighbors. The tables report a new fixed-seed Monte Carlo run comprising
$2{,}180$ independently generated graphs. The accompanying materials retain
the target-level raw output, graph-cluster summaries, environment metadata,
and the code used to generate the tables. Thus the reported values are fully
traceable finite Monte Carlo estimates rather than deterministic benchmarks.
\color{black}

\subsection{Graphons satisfying the theorem assumptions and simulation design}

We use three graphons satisfying the assumptions of the main theory:
\begin{align}
f_{\mathrm{kink}}(u,v)
&=
0.2+0.25\{\phi(u)+\phi(v)\},
&
\phi(u)
&=
u+0.6|u-1/2|,
\label{eq:sim-kink}\\
f_{\mathrm{lin}}(u,v)
&=
0.2+0.3(u+v).
\label{eq:sim-linear}
\end{align}
The third model is a two-block stochastic block model with equal block
proportions, within-block probability $0.7$, and between-block probability
$0.3$.

The first two models are globally bi-Lipschitz in the latent coordinate and
are uniformly bounded away from zero and one. For $f_{\mathrm{kink}}$,
\[
\phi(u)
=
\begin{cases}
0.3+0.4u, & 0\le u\le1/2,\\
1.6u-0.3, & 1/2\le u\le1.
\end{cases}
\]
Consequently,
\[
f_{\mathrm{kink}}([0,1]^2)=[0.35,0.85],
\qquad
f_{\mathrm{lin}}([0,1]^2)=[0.2,0.8].
\]
The slopes $0.4$ and $1.6$ show that $\phi$ is bi-Lipschitz. Its kink at
$1/2$ provides an in-class example that is Lipschitz but not differentiable,
so the simulation does not rely on global differentiability of the graphon.

For both additive models, quantitative two-hop identifiability can be checked
directly. More generally, if
\[
f(u,v)=a+b\{\psi(u)+\psi(v)\},
\]
then
\[
f(u,x)-f(u',x)=b\{\psi(u)-\psi(u')\}
\]
is constant in $x$, and therefore
\[
\left|
\int_0^1
\{f(u,x)-f(u',x)\}f(x,w)\,dx
\right|
=
|b|\,|\psi(u)-\psi(u')|
\int_0^1 f(x,w)\,dx.
\]
Moreover,
\[
\|f(u,\cdot)-f(u',\cdot)\|_{L^2}
=
|b|\,|\psi(u)-\psi(u')|.
\]
Since both additive simulation graphons are bounded below by a positive
constant $\delta$, Assumption~\ref{as:nondeg-block}(1) holds with a constant
$\kappa_m\ge\delta$.

For the equal two-block model, each latent block is an identical-profile
block. Writing
\[
B=
\begin{pmatrix}
0.7&0.3\\
0.3&0.7
\end{pmatrix},
\qquad
\frac12B^2
=
\begin{pmatrix}
0.29&0.21\\
0.21&0.29
\end{pmatrix},
\]
we obtain $D_2(u,u')=0.08$ whenever $u$ and $u'$ lie in different
blocks. Thus the identical-profile-block and across-block separation
conditions hold.
\color{black}

We consider
\[
n\in\{200,500,1000,2000\}.
\]
The corresponding numbers of independently generated graph replications are
$150,120,100,$ and $60$. Within each graph, we evaluate four target
coordinate pairs:
\[
(0.25,0.75),\quad
(0.50,0.80),\quad
(0.75,0.25),\quad
(0.50,0.20).
\]
For each coordinate pair $(u,v)$, the implementation selects the sampled
vertex nearest to $u$ and then the sampled vertex nearest to $v$ among the
remaining vertices. Ties are broken by original vertex label. This convention
ensures that the resulting target indices satisfy $i\ne j$. The corresponding
numbers of target-edge evaluations per model are $600,480,400,$ and $240$.

All neighborhood orderings use deterministic lexicographic tie-breaking by
distance and then by original vertex label. Because the four target-pair
outcomes within a graph are dependent, Monte Carlo uncertainty is computed at
the graph-replication level. For each graph, the four binary coverage
indicators are first averaged; the graph-cluster Monte Carlo standard error is
the sample standard deviation of these replication-level averages divided by
the square root of the number of independently generated graphs. For the LOO
Gaussian coverage estimates below, these standard errors range from $0.008$
to $0.013$ in the models satisfying the theorem assumptions and from $0.010$
to $0.014$ in the oscillatory stress test.

For the estimation-rate experiment, let $\operatorname{round}(x)$ denote
rounding to the nearest integer and use
\begin{equation}
\label{eq:simulation-balance-bandwidth}
h_n^{\mathrm{bal}}
=
\operatorname{round}(n^{2/3}).
\end{equation}
For $n\in\{200,500,1000,2000\}$, this gives
\[
h_n^{\mathrm{bal}}\in\{34,63,100,159\}.
\]
We compare the one-sided LOO estimator with the fixed-cardinality ZLZ version
defined in Appendix~\ref{app:zlz-comparison}, using $h_n$ neighbors for both
procedures. This equal-cardinality implementation isolates the effect of the
LOO construction from differences in neighborhood size.

For the inference experiment, we use the deterministic polynomial
undersmoothing sequence
\begin{equation}
\label{eq:simulation-inference-bandwidth}
h_n^{\mathrm{inf}}
=
\operatorname{round}(n^{3/5}).
\end{equation}
This choice satisfies $\log n\ll n^{3/5}=o(n^{2/3})$ and is used only as a
concrete member of the admissible inference regime; the theory does not
privilege the exponent $3/5$. For the four sample sizes, it gives
\[
h_n^{\mathrm{inf}}\in\{24,42,63,96\}.
\]
We use the feasible variance estimator in \eqref{eq:feasible-variance} and the
$95\%$ specialization of the Gaussian interval in \eqref{eq:gaussian-ci}.
For descriptive comparison only, we construct the analogous Gaussian interval
for the fixed-cardinality ZLZ smoother using its within-neighborhood sample
variance. As discussed in Appendix~\ref{app:zlz-comparison}, this descriptive
interval does not inherit the conditional Berry--Esseen guarantee established
here for the LOO estimator.

For empirical Bernstein inference, the unknown bias constant multiplying
$b_n(h_n)$ prevents a completely data-driven implementation of the
finite-sample interval for $P_{ij}$. We therefore examine the empirical
Bernstein interval only for its finite-sample stochastic target
\begin{equation}
\label{eq:neighborhood-mean}
\bar P_{ij}^{\,\Ncal}
:=
\frac1{h_n}\sum_{k\in\Ncal_i^{(-j)}}P_{kj}.
\end{equation}
We report its estimated coverage and its truncated width
\[
W_{ij}^{(B)}
:=
\min\{1,\tilde P_{ij}+w_{ij}^{(B)}\}
-
\max\{0,\tilde P_{ij}-w_{ij}^{(B)}\}.
\]
We also report the fraction of empirical Bernstein intervals equal to the
entire parameter space $[0,1]$. This diagnostic is important because estimated
coverage equal to one can be practically uninformative when the interval is
nearly or exactly the full parameter range. All reported Gaussian widths are
likewise computed after intersection with $[0,1]$; this truncation does not
alter coverage for $P_{ij}\in[0,1]$.

Computationally, we form $A^2$ once per generated graph and use the rank-one
update \eqref{eq:rank-one-update} to obtain each reduced two-hop matrix without
an additional matrix multiplication.
\color{black}

Table~\ref{tab:rate-simulation} reports the one-sided entrywise MSE at
$h_n=h_n^{\mathrm{bal}}$. The column $n^{2/3}\times\mathrm{MSE}$ is a
finite-sample diagnostic of compatibility with the derived $O(n^{-2/3})$
conditional-risk upper-bound scaling.

\input{generated_results/regenerated_20260831/tables/rate.tex}

Across all three models satisfying the theorem assumptions,
$n^{2/3}\mathrm{MSE}$ remains of roughly constant magnitude over the reported
range, approximately $0.20$--$0.29$. This behavior is compatible with the
$O(n^{-2/3})$ conditional-risk upper-bound scaling at
$h_n\asymp n^{2/3}$, but is not interpreted as evidence for a sharp
asymptotic constant or a minimax rate. The LOO and fixed-cardinality ZLZ MSEs
are also similar throughout the table, indicating little finite-sample loss
from deleting the target column in these designs.

For the kink model, the regenerated RMS conditional biases at
$n=200,500,1000,2000$ are respectively
$0.02577,0.01891,0.01283,$ and $0.00946$. Multiplication by $n^{1/3}$ keeps
these values bounded, which is compatible with the
$h_n/n\asymp n^{-1/3}$ component of the theoretical bias bound.

\subsection{Feasible variance normalization and undersmoothed inference}

Table~\ref{tab:gaussian-inference} reports estimated feasible Gaussian
coverage for $P_{ij}$ under the deterministic undersmoothing choice
$h_n=\operatorname{round}(n^{3/5})$. The ZLZ column follows the descriptive
comparison convention stated above.

\input{generated_results/regenerated_20260831/tables/gaussian.tex}

The empirical variance normalization is stable: the mean ratio
$\widehat V_{ij}^{\mathrm{emp}}/V_{ij}$ lies between $0.998$ and $1.010$.
This behavior agrees with Lemma~\ref{lem:emp-var-consistency}. Estimated LOO
Gaussian coverage ranges from $0.929$ to $0.975$ and is broadly close to the
nominal $0.95$ level, but is not uniformly exact at these sample sizes. This
is compatible with Theorem~\ref{thm:full-clt-undersmooth}, which is a
first-order asymptotic result rather than an exact finite-sample coverage
theorem. The average LOO Gaussian width decreases from approximately
$0.37$--$0.40$ at $n=200$ to approximately $0.18$--$0.20$ at $n=2000$.

Table~\ref{tab:bernstein-inference} reports estimated empirical Bernstein
coverage for the conditional neighborhood mean $\bar P_{ij}^{\,\Ncal}$.
This is the finite-sample stochastic target for which the empirical Bernstein
bound is fully observable. The bias-aware finite-sample interval for $P_{ij}$
would require the additional unknown margin $Cb_n(h_n)$ and would therefore
be at least as wide.

\input{generated_results/regenerated_20260831/tables/bernstein.tex}

At $n=200$, the average truncated Bernstein widths are $0.997$, $0.997$,
and $0.962$ for the kink, linear, and SBM designs, respectively. The
corresponding fractions of intervals equal to $[0,1]$ are $0.960$, $0.958$,
and $0.615$. The widths decrease to approximately $0.49$--$0.52$ by
$n=2000$ but remain substantially wider than the Gaussian intervals,
illustrating the conservativeness noted in
Remark~\ref{rem:bernstein-vs-gaussian}.
\color{black}

\subsection{Oscillatory stress test with unverified two-hop identifiability}
\label{subsec:wiggly-stress}

To examine behavior when quantitative two-hop identifiability has not been
verified, we consider
\begin{equation}
\label{eq:sim-wiggly}
f_{\mathrm{wig}}(u,v)
=
0.5+0.25\sin(4\pi uv).
\end{equation}
This graphon satisfies
\[
f_{\mathrm{wig}}([0,1]^2)\subseteq[0.25,0.75],
\]
so it remains bounded away from zero and one. However, we have not established
that it satisfies Assumption~\ref{as:nondeg-block}. We therefore treat this
experiment strictly as a stress test and do not use it as evidence for the
main theorems. We use the same target-pair construction, replication counts,
and inference bandwidth as in the preceding experiment.

Table~\ref{tab:wiggly-stress} reports estimated feasible Gaussian coverage,
interval width, the empirical-to-oracle variance ratio, and the mean
standardized conditional bias $|B_{ij}^{(-j)}|/\sqrt{V_{ij}}$.

\input{generated_results/regenerated_20260831/tables/wiggly.tex}

The variance normalization remains stable throughout the stress test, with
$\widehat V_{ij}^{\mathrm{emp}}/V_{ij}$ between $1.003$ and $1.034$. The
mean standardized conditional bias decreases from $0.472$ at $n=200$ to
$0.213$ at $n=2000$. Estimated LOO Gaussian coverage is $0.878$, $0.910$,
$0.955$, and $0.950$ at $n=200,500,1000,2000$, respectively. The substantial
finite-sample shortfall at the two smaller sample sizes occurs despite stable
variance normalization and is accompanied by larger standardized bias. The
near-nominal estimates at $n=1000$ and $n=2000$ are descriptive and are not
interpreted as establishing asymptotic validity outside the verified theorem
class.

\subsection{Additional stress-test graphons}
\label{subsec:additional-stress-tests}

We also consider
\begin{align}
f_{\mathrm{sin}}(u,v)
&=
0.5+0.3\sin(\pi u)\sin(\pi v),
\label{eq:stress-sinusoidal}\\
f_{\mathrm{add}}(u,v)
&=
\frac{u+v}{2},
\label{eq:stress-additive}\\
f_{\mathrm{spike}}(u,v)
&=
0.2+0.8\,
\mathbbm{1}\{|u-0.5|<0.1,\ |v-0.5|<0.1\}.
\label{eq:stress-spiky}
\end{align}

The sinusoidal model satisfies
$f_{\mathrm{sin}}(u,\cdot)=f_{\mathrm{sin}}(1-u,\cdot)$, so mirrored latent
positions have identical profiles. Placing such positions in the same smooth
block is incompatible with the lower $L^2$ slice-distance condition, whereas
placing them in different interval blocks is incompatible with positive
across-block two-hop separation. The additive linear model is not uniformly
bounded away from zero and one because it attains both endpoints. The spiky
model also violates nondegeneracy by attaining one. In addition, under the
natural three-interval partition induced by its central square, the two outer
regions have identical profiles and therefore cannot satisfy positive
two-hop separation when treated as distinct blocks. None of these designs is
used to verify the main inferential theorems.

We use $n=500$, $120$ independently generated graphs, and the same four target
coordinate pairs, giving $480$ target-edge evaluations per model. For point
estimation, we use $h_n=\operatorname{round}(n^{2/3})=63$; for Gaussian and
empirical Bernstein inference, we use
$h_n=\operatorname{round}(n^{3/5})=42$.

\input{generated_results/regenerated_20260831/tables/additional_stress.tex}

The graph-cluster Monte Carlo standard error for each LOO Gaussian coverage
estimate in Table~\ref{tab:additional-stress-tests} is approximately $0.011$.
The corresponding estimated empirical Bernstein coverage for the conditional
neighborhood mean is $1.000$ in all three cases, with average truncated widths
$0.803$, $0.857$, and $0.622$ for the sinusoidal, additive linear, and spiky
designs, respectively. The variance normalization remains close to one in the
first two designs and equals $0.978$ in the spiky design.

\subsection{Why
\texorpdfstring{$h_n\asymp n^{2/3}$}{h-n proportional to n to the
two-thirds power} is not guaranteed to minimize finite-sample MSE}

For the kink graphon at $n=1000$, an independent fixed-seed stream of $100$
graphs varies the integer neighborhood size according to
\begin{equation}
\label{eq:bandwidth-sweep}
h_n
=
\operatorname{round}(c\,n^{2/3}),
\qquad
c\in\{0.5,1,2,4\}.
\end{equation}

\input{generated_results/regenerated_20260831/tables/bandwidth_sweep.tex}

The neighborhood-size sweep and Table~\ref{tab:rate-simulation} use separate
Monte Carlo streams. Consequently, the $c=1$ entry here and the kink entry at
$(n,h_n)=(1000,100)$ in Table~\ref{tab:rate-simulation} estimate the same
population quantity but are not numerically identical.

The finite-sample MSE decreases throughout the displayed range, whereas the
mean standardized bias increases from $0.166$ to $0.436$. This does not
contradict Corollary~\ref{cor:bandwidth-risk}: that corollary balances a
uniform upper bound and does not identify the exact finite-sample minimizer.
The experiment instead reinforces the estimation--inference distinction.
More aggressive smoothing can improve point estimation while making bias
increasingly important relative to the stochastic standard error.
\color{black}

\section{Discussion and extensions}
\label{sec:discussion}

The LOO construction separates neighborhood selection from target-column
observations, enabling finite-sample concentration and a pointwise
Berry--Esseen bound for the centered fluctuation. Bias-aware inference for
$P_{ij}$ additionally requires the stated graphon geometry and latent
interior condition; simultaneous bands and post-selection guarantees for
the proposed no-leakage tuning rule remain open.

The LOO construction avoids repeated bootstrap resampling, but full
LOO inference over all entries is computationally more expensive than a single
classical neighborhood-smoothing fit. The rank-one identity \eqref{eq:rank-one-update} makes targeted entrywise
or row-level inference substantially cheaper than recomputing each reduced matrix from
scratch, and computations can be parallelized over target columns. Thus
the present implementation is most attractive when uncertainty
quantification is required for a specified collection of entries or rows.
Scaling the procedure to all $n(n-1)$ ordered pairs using approximate
nearest-neighbor or landmark constructions is an important computational
direction.
\color{black}
Further extensions include:
\begin{enumerate}[leftmargin=2em]
\item \textbf{Directed and weighted graphons.} An asymmetric distance could incorporate in- and out-neighborhood information; weighted networks would require moment-based distances and appropriate tail assumptions.

\item \textbf{Extension to sparse regimes.} Slowly growing degrees require a density-dependent normalization and concentration bounds adapted to rare edges, although the LOO conditional independence property itself remains available.
\color{black}

\item \textbf{Degree correction and heterogeneous sparsity.} Vertex-specific scale factors would need to enter the distance, variance, and bias analyses to accommodate unequal effective sample sizes.

\item \textbf{Adaptive and simultaneous confidence bands.} While our focus here has been on pointwise intervals for individual edges, the same method could potentially construct simultaneous bands over entire rows or columns. This might involve combining the present Berry--Esseen analysis with
high-dimensional Gaussian approximation results and developing observable
or estimable bounds for the bias term.

\item \textbf{Applications to real networks.} Social, citation, and biological networks would test the practical value of entrywise uncertainty quantification from a single observed graph.
\end{enumerate}

A principal limitation of the present theory is not discontinuity itself:
the identical-profile branch of Assumption~\ref{as:hybrid} explicitly permits stochastic block models and
sharp community boundaries. Rather, the restriction is the requirement
that every region either have identical probability profiles or satisfy the
quantitative bi-Lipschitz conditions, together with a uniform two-hop
identifiability condition and positive across-block two-hop separation.
Graphons with oscillatory, nearly nonidentifiable, or substantially rougher
within-region geometry that fits neither branch of this assumption fall
outside the present theory.

Moreover, the quantitative two-hop identifiability condition used here is
stronger than the conditions required for probability-matrix
estimation alone. In our framework, this condition is essential for translating row-wise closeness into the entrywise bias control needed for inference. It remains an open problem to extend this theory to settings with weaker or purely combinatorial identifiability, such as higher-order block models.
\color{black}

\section*{Acknowledgments}

No funding was received for this work.
\section*{Declaration of interest}

The authors report no conflicts of interest. The authors alone are responsible for the content and writing of the paper.

\appendix

\section{Classical ZLZ smoother and dependence comparison}
\label{app:zlz-comparison}

This appendix recalls the construction of \citet{zhang2017estimating},
abbreviated ZLZ. Unless another document is named, equation numbers in this
section refer to their main \emph{Biometrika} article. Their Supplementary
Material has a separate equation sequence.

Define the classical two-hop distance
$d_{\mathrm{ZLZ}}:V\times V\to\mathbb R_+$ by
\begin{equation}
\label{eq:dij}
d_{\mathrm{ZLZ}}(i,j):=
\begin{cases}
0, & i=j,\\
\max_{k\ne i,j}|M_{ik}-M_{jk}|, & i\ne j.
\end{cases}
\end{equation}
Here $M=A^2/n$ is the normalized two-hop matrix from
\eqref{eq:twohop-matrices}. The ZLZ main article introduces the latent
$L^2$ slice distance before equation~(5), expands its square in
equation~(5), bounds it in equation~(6), defines the empirical quantity
$\widetilde d^{\,2}(i,i')$ in equation~(7), and defines its neighborhood in
equation~(8). Since $A$ is symmetric,
\[
\frac{1}{n}\langle A_{i\cdot}-A_{i'\cdot},A_{k\cdot}\rangle
=M_{ik}-M_{i'k}.
\]
Thus $d_{\mathrm{ZLZ}}$ in \eqref{eq:dij} is the quantity denoted
$\widetilde d^{\,2}$ in equation~(7) of the main article, written using
$M$. The LOO distance in \eqref{eq:d-j} uses the same type of row
comparison after vertex $j$ has been removed.

ZLZ use $h$ for a quantile proportion. In Theorem~1 of their main article,
$h=C(n^{-1}\log n)^{1/2}$. Section~1 of their Supplementary Material reports
a simulation sensitivity study at $n=2000$ for four graphons, with
$C\in\{2^{-3},2^{-2},\ldots,2^3\}$. They use $C=1$ in their remaining
numerical results. To distinguish this proportion from our integer
neighborhood size $h_n$, let $\tau_n\in(0,1)$ denote the ZLZ proportion and
let $Q_i^{\mathrm{ZLZ}}(\tau_n)$ be the $\tau_n$-quantile of
\(
\{d_{\mathrm{ZLZ}}(i,k):k\in V\setminus\{i\}\}.
\)
The ZLZ neighborhood and one-sided smoother are
\begin{equation}
\label{eq:ZLZ-neighborhood}
\Ncal_i^{\mathrm{ZLZ}}
:=
\bigl\{
k\in V\setminus\{i\}:
d_{\mathrm{ZLZ}}(i,k)\le Q_i^{\mathrm{ZLZ}}(\tau_n)
\bigr\},
\end{equation}
and
\[
\widetilde P^{\mathrm{ZLZ}}_{ij}
:=
\frac{1}{|\Ncal_i^{\mathrm{ZLZ}}|}
\sum_{k\in\Ncal_i^{\mathrm{ZLZ}}}A_{kj}.
\]
For the simulations in Section~\ref{sec:sim}, we use a fixed-cardinality
version that selects the $h_n$ smallest values of
$d_{\mathrm{ZLZ}}(i,\cdot)$.

The ZLZ neighborhood is computed from the full adjacency matrix and can
therefore depend on the target-column edges. This is not merely a
sigma-field measurability concern: conditioning on the selected neighborhood
can induce dependence among the target-column terms that are subsequently
averaged. The following finite example makes this obstruction explicit for
the fixed-cardinality ZLZ implementation used in Section~\ref{sec:sim}.

\paragraph{A finite conditional-dependence example.}
Let $n=4$ and suppose that
\[
P_{rs}=\frac12,
\qquad
1\le r<s\le4,
\]
so the six upper-triangular edges are independent
$\operatorname{Bernoulli}(1/2)$ variables. Take the smoothing vertex
$i=1$, the target column $j=4$, and let the fixed-cardinality ZLZ rule select
the two smallest distances, with ties broken by increasing vertex label as
in the simulations. Write
\[
a:=A_{12},\quad b:=A_{13},\quad c:=A_{14},\quad d:=A_{23},
\qquad
X:=A_{24},\quad Y:=A_{34}.
\]
Direct evaluation of $M=A^2/4$ gives
\begin{align*}
4d_{\mathrm{ZLZ}}(1,2)
&=
\max\bigl\{
|ad+cy-ab-XY|,
|aX+bY-ac-dY|
\bigr\},\\
4d_{\mathrm{ZLZ}}(1,3)
&=
\max\bigl\{
|bd+cX-ab-XY|,
|aX+bY-bc-dX|
\bigr\},\\
4d_{\mathrm{ZLZ}}(1,4)
&=
\max\bigl\{
|bd+cX-ac-dY|,
|ad+cY-bc-dX|
\bigr\}.
\end{align*}
For each fixed value of $(X,Y)$, enumeration of the $2^4$ configurations of
$(a,b,c,d)$ gives the following numbers of configurations for which the two
selected vertices are $\{2,3\}$:
\[
\begin{array}{c|cccc}
(X,Y) &(0,0)&(0,1)&(1,0)&(1,1)\\
\hline
\#\{(a,b,c,d):\Ncal_1^{\mathrm{ZLZ}}=\{2,3\}\}
&14&11&11&9.
\end{array}
\]
Let $E:=\{\Ncal_1^{\mathrm{ZLZ}}=\{2,3\}\}$. Since the $2^6$
edge configurations are equiprobable, $\PP(E)=45/64$, and the preceding
counts yield
\[
\PP(X=1\mid E)
=
\PP(Y=1\mid E)
=
\frac{20}{45}
=
\frac49,
\qquad
\PP(X=1,Y=1\mid E)
=
\frac9{45}
=
\frac15.
\]
Consequently,
\[
\PP(X=1,Y=1\mid E)
=
\frac15
\neq
\frac{16}{81}
=
\PP(X=1\mid E)\PP(Y=1\mid E).
\]
Thus, even conditional on the fixed matrix $P$ and the final selected
neighborhood, the selected target-column variables $A_{24}$ and $A_{34}$
are not independent. Centering them by $P_{24}$ and $P_{34}$ does not restore
independence. Hence the conditional empirical-Bernstein and Berry--Esseen
arguments used for the LOO estimator do not automatically apply to this
full-data ZLZ average. This example is not an impossibility result: a
different analysis of the full-data estimator may still be possible.
\color{black}

\section{Proofs for Section 3 (main results)}
\label{app:main-proofs}
\begin{proof}[Proof of Proposition~\ref{prop:independence}]
Condition on $P$. Under the model \eqref{eq:model}, the upper-triangular
edge variables are mutually independent Bernoulli random variables with
the corresponding entries of $P$ as parameters. Define
\[
\mathcal F_{-j}
:=
\sigma\!\left(
A_{ab}:1\le a<b\le n,\ a\neq j,\ b\neq j
\right)
\]
and
\(
\mathcal F_j
:=
\sigma\!\left(A_{kj}:k\neq j\right).
\)
Under the model \eqref{eq:model}, the underlying edge collections that
generate $\mathcal F_{-j}$ and $\mathcal F_j$ are disjoint. Hence,
conditional on $P$,
\[
\mathcal F_{-j}
\perp\!\!\!\perp
\mathcal F_j.
\]
By construction, $A^{(-j)}$ is $\mathcal F_{-j}$-measurable. The
deterministic tie-breaking rule in \eqref{eq:neighborhood} makes
$\Ncal_i^{(-j)}$ a measurable function of $A^{(-j)}$; therefore
\(
\sigma(\Ncal_i^{(-j)})\subseteq\mathcal F_{-j}.
\)
Thus every event of the form $\{\Ncal_i^{(-j)}=S\}$ is conditionally
independent of $\mathcal F_j$. Consequently, for every realization $S$
with positive conditional probability given $P$, conditioning further on
$\{\Ncal_i^{(-j)}=S\}$ does not change the joint conditional law of the
target-column edges $(A_{kj})_{k\in S}$. Since these edges are mutually
independent Bernoulli variables conditional on $P$,
\[
\mathcal L\!\left(
(A_{kj})_{k\in S}
\,\middle|\,
\Ncal_i^{(-j)}=S,P
\right)
=
\bigotimes_{k\in S}\operatorname{Bernoulli}(P_{kj}).
\]
The centered variables therefore have conditional mean zero and are
supported in $[-1,1]$, completing the proof.
\end{proof}

\begin{proof}[Proof of Theorem~\ref{thm:loo-mse}]
Condition on $(\Ncal_i^{(-j)},P)$. Using the error decomposition \eqref{eq:error-decomposition},
we obtain
\[
\EE\!\left[(\tilde P_{ij}-P_{ij})^2 \,\middle|\, \Ncal_i^{(-j)},P\right]
=
\EE\!\left[(U_{ij}^{(-j)})^2 \,\middle|\, \Ncal_i^{(-j)},P\right]
+
2B_{ij}^{(-j)}\,\EE\!\left[U_{ij}^{(-j)} \,\middle|\, \Ncal_i^{(-j)},P\right]
+
(B_{ij}^{(-j)})^2.
\]
By Proposition~\ref{prop:independence},
\(
\EE\!\left[U_{ij}^{(-j)} \,\middle|\, \Ncal_i^{(-j)},P\right]=0,
\)
so the cross-term vanishes. Again by Proposition~\ref{prop:independence}, conditional on $(\Ncal_i^{(-j)},P)$ the summands are independent and mean zero. Therefore
\[
\EE\!\left[(U_{ij}^{(-j)})^2 \,\middle|\, \Ncal_i^{(-j)},P\right]
=
\Var\!\left(U_{ij}^{(-j)} \,\middle|\, \Ncal_i^{(-j)},P\right)
=
\frac{1}{h_n^2}\sum_{k\in\Ncal_i^{(-j)}}P_{kj}(1-P_{kj})
\le
\frac{1}{4h_n}.
\]
For the bias term,
\[
|B_{ij}^{(-j)}|
=
\left|
\frac{1}{h_n}\sum_{k\in\Ncal_i^{(-j)}} (P_{kj}-P_{ij})
\right|
\le
\frac{1}{h_n}\sum_{k\in\Ncal_i^{(-j)}} |P_{kj}-P_{ij}|
\le
\max_{k\in\Ncal_i^{(-j)}} |P_{kj}-P_{ij}|,
\]
hence
\(
(B_{ij}^{(-j)})^2
\le
\max_{k\in\Ncal_i^{(-j)}} (P_{kj}-P_{ij})^2.
\)
Combining the bounds gives
\[
\EE\!\left[(\tilde P_{ij}-P_{ij})^2 \,\middle|\, \Ncal_i^{(-j)},P\right]
\le
\frac{1}{4h_n}
+
\max_{k\in\Ncal_i^{(-j)}} (P_{kj}-P_{ij})^2,
\]
as claimed.
\end{proof}
\begin{proof}[Proof of Lemma~\ref{lem:empirical-bernstein}]
Theorem~11 of
\citet[Section~2, p.~4]{maurer2009empirical} applies to independent, not
necessarily identically distributed, random variables taking values in
$[0,1]$. Apply the theorem with failure probability $\alpha/2$ first to
$X_1,\ldots,X_m$ and then to $1-X_1,\ldots,1-X_m$. The two collections
have the same empirical variance $s^2$. A union bound therefore shows
that the two one-sided inequalities hold simultaneously with probability
at least $1-\alpha$, yielding
\eqref{eq:empirical-bernstein-two-sided}.
\end{proof}
\begin{proof}[Proof of Theorem~\ref{thm:bias-goodset}]
Lemma~\ref{lem:containment} gives same-block containment for every selected neighbor on $\Ecal_{\mathrm{master}}$. On an identical-profile block, Assumption~\ref{as:hybrid}(2) then gives $B_{ij}^{(-j)}=0$. On a smooth block, Lemma~\ref{lem:metric-to-slice} gives $|B_{ij}^{(-j)}|\le Cb_n(h_n)$. These statements hold uniformly over $(i,j)\in\Ecal_\varepsilon$ on the master event, proving the theorem.
\end{proof}
\begin{proof}[Proof of Theorem~\ref{thm:finite-ci}]
Fix deterministic indices $i\neq j$ and work under the conditional law
given the event $G_{i,\varepsilon}$ defined in \eqref{eq:G-interior}.
Use the error decomposition \eqref{eq:error-decomposition}, with the stochastic and bias terms defined in \eqref{eq:Uij_def} and \eqref{eq:Bij_def}.
Conditional on
$(\Ncal_i^{(-j)},\xi_1,\ldots,\xi_n)$, the selected variables
\[
\{A_{kj}:k\in\Ncal_i^{(-j)}\}
\]
are independent Bernoulli variables with respective means
$\{P_{kj}:k\in\Ncal_i^{(-j)}\}$. Hence
Lemma~\ref{lem:empirical-bernstein} gives
\[
\PP\!\left(
|U_{ij}^{(-j)}|
\le
w_{ij}^{(B)}(\alpha)
\,\middle|\,
\Ncal_i^{(-j)},\xi_1,\ldots,\xi_n
\right)
\ge
1-\alpha.
\]
Since
\(
G_{i,\varepsilon}
=
\{\xi_i\in I_\varepsilon\}
\)
is measurable with respect to the latent variables, the tower property
under the conditional law given $G_{i,\varepsilon}$ yields
\(
\PP\!\left(
|U_{ij}^{(-j)}|
\le
w_{ij}^{(B)}(\alpha)
\,\middle|\,
G_{i,\varepsilon}
\right)
\ge
1-\alpha.
\)
Let
\[
E_1:=\left\{|U_{ij}^{(-j)}| \le w_{ij}^{(B)}(\alpha)\right\},
\qquad
E_2:=\left\{|B_{ij}^{(-j)}|\le C b_n(h_n)\right\}.
\]
By Theorem~\ref{thm:bias-goodset},
\(
\PP(E_2^c\mid G_{i,\varepsilon})
\le
\PP(\Ecal_{\mathrm{master}}^c\mid G_{i,\varepsilon})
\le
Cn^{-c},
\)
because $\PP(G_{i,\varepsilon})=\lambda(I_\varepsilon)>0$ is fixed.
On the event $E_1\cap E_2$, we have:
\[
|\tilde P_{ij}-P_{ij}|
\le
|U_{ij}^{(-j)}|+|B_{ij}^{(-j)}|
\le
w_{ij}^{(B)}(\alpha)+C b_n(h_n),
\]
which is equivalent to
\[
P_{ij}
\in
\mathrm{CI}_{ij}^{(B)}(\alpha)
\oplus
[-C b_n(h_n),\,C b_n(h_n)].
\]
Therefore, by the union bound under the conditional law given
$G_{i,\varepsilon}$,
\[
\PP\!\left(
P_{ij}
\in
\left\{
\mathrm{CI}_{ij}^{(B)}(\alpha)
\oplus
[-Cb_n(h_n),Cb_n(h_n)]
\right\}
\cap[0,1]
\,\middle|\,
G_{i,\varepsilon}
\right)
\ge
1-\alpha-Cn^{-c},
\]
after enlarging the constants if necessary.
\end{proof}

\begin{proof}[Proof of Theorem~\ref{thm:be} and Theorem~\ref{thm:full-clt-undersmooth}]
For the proof of Theorem~\ref{thm:be}, fix arbitrary distinct indices
$i\neq j$. Conditional on $(\Ncal_i^{(-j)},P)$, let
\[
X_k:=A_{kj}-P_{kj},
\qquad
k\in\Ncal_i^{(-j)}.
\]

These variables are independent, mean-zero, and bounded. With
$S_{ij,n}$ and $\sigma_{ij,n}$ defined in
\eqref{eq:standardization-identity}, Assumption~\ref{as:nondeg} gives
$\sigma_{ij,n}^2\ge h_n\delta(1-\delta)$. Since $|X_k|\le1$,
\[
\rho_3
:=
\sum_{k\in\Ncal_i^{(-j)}}
\EE\!\left[|X_k|^3\,\middle|\,\Ncal_i^{(-j)},P\right]
\le h_n.
\]
Hence
\[
\frac{\rho_3}{\sigma_{ij,n}^3}
\le
\frac{h_n}{\{h_n\delta(1-\delta)\}^{3/2}}
=
O(h_n^{-1/2}).
\]
\color{black}
The classical Berry--Esseen theorem for independent, not necessarily
identically distributed, triangular arrays \citep{Petrov1975}, together
with the standardization identity \eqref{eq:standardization-identity},
therefore gives the conditional bound \eqref{eq:stochastic-be-bound}.
By the tower property, for each $x\in\R$,
\[
\PP\!\left(\frac{U_{ij}^{(-j)}}{\sqrt{V_{ij}}}\le x\right)
=
\EE\!\left[
\PP\!\left(\frac{U_{ij}^{(-j)}}{\sqrt{V_{ij}}}\le x \,\middle|\, \Ncal_i^{(-j)},P\right)
\right].
\]
Therefore, for each $x\in\R$,
\[
\begin{aligned}
\left|
\PP\!\left(\frac{U_{ij}^{(-j)}}{\sqrt{V_{ij}}}\le x\right)-\Phi(x)
\right|
&=
\left|
\EE\!\left[
\PP\!\left(\frac{U_{ij}^{(-j)}}{\sqrt{V_{ij}}}\le x \,\middle|\, \Ncal_i^{(-j)},P\right)
-\Phi(x)
\right]
\right| \\
&\le
\EE\!\left[
\left|
\PP\!\left(\frac{U_{ij}^{(-j)}}{\sqrt{V_{ij}}}\le x \,\middle|\, \Ncal_i^{(-j)},P\right)
-\Phi(x)
\right|
\right]
\le C h_n^{-1/2}.
\end{aligned}
\]
Taking the supremum over $x\in\R$ yields the unconditional
$Ch_n^{-1/2}$ bound stated in Theorem~\ref{thm:be}.

For Theorem~\ref{thm:full-clt-undersmooth}, use the explicit bound \eqref{eq:full-be-bound}. Under the geometric regime \eqref{eq:geometric-regime}, $h_n\to\infty$, and the undersmoothing condition \eqref{eq:undersmoothing} makes the bias term vanish.
Hence every term on the right-hand side converges to zero, proving the
conditional central limit theorem.
\end{proof}

\begin{proof}[Proof of Corollary~\ref{cor:full-be}]
Conditional on $(\Ncal_i^{(-j)},P)$, set
\[
Z_{ij}:=\frac{U_{ij}^{(-j)}}{\sqrt{V_{ij}}},
\qquad
\mu_{ij}:=\frac{B_{ij}^{(-j)}}{\sqrt{V_{ij}}}.
\]
By \eqref{eq:stochastic-be-bound}, $Z_{ij}$ has conditional Kolmogorov
distance at most $Ch_n^{-1/2}$ from $N(0,1)$. Conditional on $P$, the edge
array does not depend further on the latent variables. Since
$G_{i,\varepsilon}$ depends only on the latent variables and
$\Ncal_i^{(-j)}$ is constructed from $A^{(-j)}$, Proposition~\ref{prop:independence}
continues to hold after conditioning on $G_{i,\varepsilon}$. In particular,
\[
\mathcal L\!\left(
(A_{kj})_{k\in\Ncal_i^{(-j)}}
\,\middle|\,
\Ncal_i^{(-j)},P,G_{i,\varepsilon}
\right)
=
\mathcal L\!\left(
(A_{kj})_{k\in\Ncal_i^{(-j)}}
\,\middle|\,
\Ncal_i^{(-j)},P
\right).
\]
Therefore,
\[
\sup_x
\left|
\PP(Z_{ij}+\mu_{ij}\le x\mid\Ncal_i^{(-j)},P,G_{i,\varepsilon})-\Phi(x)
\right|
\le
Ch_n^{-1/2}+(2\pi)^{-1/2}|\mu_{ij}|.
\]

Define the bias-control event
\(
E_{B,ij}:=\{|B_{ij}^{(-j)}|\le Cb_n(h_n)\}.
\)
The event is measurable with respect to $(\Ncal_i^{(-j)},P)$. On
$G_{i,\varepsilon}$, Theorem~\ref{thm:bias-goodset} gives
$\Ecal_{\mathrm{master}}\subseteq E_{B,ij}$, and hence
\[
\PP(E_{B,ij}^c\mid G_{i,\varepsilon})
\le
\PP(\Ecal_{\mathrm{master}}^c\mid G_{i,\varepsilon})
\le Cn^{-c}.
\]
On $E_{B,ij}$, Lemma~\ref{lem:var-lower} gives
\(
|\mu_{ij}|
\le C\sqrt{h_n}\,b_n(h_n).
\)
On the complement, the Kolmogorov distance is at most one. Averaging the
conditional bound given $G_{i,\varepsilon}$ proves
\eqref{eq:full-be-bound}.
\end{proof}

\begin{proof}[Proof of Lemma~\ref{lem:emp-var-consistency}]
Conditional on $(\Ncal_i^{(-j)},P)$, write
\[
p_k:=P_{kj},
\qquad
\bar p:=\frac1{h_n}\sum_{k\in\Ncal_i^{(-j)}}p_k,
\qquad
\bar v:=\frac1{h_n}\sum_{k\in\Ncal_i^{(-j)}}p_k(1-p_k).
\]
Then $V_{ij}=\bar v/h_n$. Let
\[
\bar A:=\tilde P_{ij}
=\frac1{h_n}\sum_{k\in\Ncal_i^{(-j)}}A_{kj},
\qquad
g(x):=x(1-x).
\]
Conditional independence and Hoeffding's inequality give
$\bar A-\bar p=\mathcal O_{\PP}(h_n^{-1/2})$. Since $g$ is Lipschitz on
$[0,1]$,
\(
g(\bar A)-g(\bar p)=\mathcal O_{\PP}(h_n^{-1/2}).
\)
For binary observations,
\[
s_{ij}^2=\frac{h_n}{h_n-1}g(\bar A),
\qquad
g(\bar p)-\bar v
=\frac1{h_n}\sum_{k\in\Ncal_i^{(-j)}}(p_k-\bar p)^2.
\]
On $\Ecal_{\mathrm{master}}$, Corollary~\ref{cor:all-slice-close} gives
\(
\max_{k\in\Ncal_i^{(-j)}}|p_k-P_{ij}|\le Cb_n(h_n),
\)
and hence
\(
0\le g(\bar p)-\bar v\le4C^2b_n(h_n)^2.
\)
Combining these relations yields
\[
s_{ij}^2-\bar v
=\mathcal O_{\PP}\!\left(
h_n^{-1/2}+b_n(h_n)^2+h_n^{-1}
\right).
\]
Assumption~\ref{as:nondeg} implies $\bar v\ge\delta(1-\delta)$. Since
$\widehat V_{ij}^{\mathrm{emp}}/V_{ij}=s_{ij}^2/\bar v$, division by
$\bar v$ proves \eqref{eq:variance-ratio-consistency}. The geometric bound
holds on $\Ecal_{\mathrm{master}}$, and
\(
\PP(\Ecal_{\mathrm{master}}^c\mid G_{i,\varepsilon})\le Cn^{-c}.
\)
Thus the same stochastic order holds under the conditional law given
$G_{i,\varepsilon}$.

Finally, $\widehat V_{ij}^{\mathrm{emp}}=0$ exactly when all selected binary
observations are zero or all are one. Conditional on
$(\Ncal_i^{(-j)},P)$,
\[
\begin{aligned}
\PP(s_{ij}^2=0\mid\Ncal_i^{(-j)},P)
&=
\prod_{k\in\Ncal_i^{(-j)}}(1-P_{kj})
+
\prod_{k\in\Ncal_i^{(-j)}}P_{kj}\\
&\le2(1-\delta)^{h_n}.
\end{aligned}
\]
This probability tends to zero because $h_n\to\infty$.
\end{proof}

\begin{proof}[Proof of Corollary~\ref{cor:feasible-normalization}]
All convergence statements are under the conditional law given
$G_{i,\varepsilon}$. Theorem~\ref{thm:full-clt-undersmooth} and
Lemma~\ref{lem:emp-var-consistency} give
\[
\frac{\tilde P_{ij}-P_{ij}}{\sqrt{V_{ij}}}\xrightarrow{d}N(0,1),
\qquad
\frac{\widehat V_{ij}^{\mathrm{emp}}}{V_{ij}}
\xrightarrow{\PP}1.
\]
Therefore,
\[
\frac{\tilde P_{ij}-P_{ij}}
{\sqrt{\widehat V_{ij}^{\mathrm{emp}}}}
=
\frac{\tilde P_{ij}-P_{ij}}{\sqrt{V_{ij}}}
\sqrt{\frac{V_{ij}}{\widehat V_{ij}^{\mathrm{emp}}}}
\xrightarrow{d}N(0,1)
\]
by Slutsky's theorem.
\end{proof}

\section{Proofs for Section 4 (geometry and bias control)}
\begin{proof}[Proof of Lemma~\ref{lem:population-twohop}]
Fix distinct $i,\ell,j$ and condition on the latent positions. Since
$A_{ii}=A_{\ell\ell}=0$,
\[
M^{(-j)}_{i\ell}
=
\frac{1}{n-1}
\sum_{m\notin\{i,\ell,j\}}
A_{im}A_{m\ell}.
\]
For $m\notin\{i,\ell,j\}$, define
\[
Y_m
:=
A_{im}A_{m\ell}-P_{im}P_{m\ell}.
\]
For distinct $m\neq m'$, the edge pairs
$\{A_{im},A_{m\ell}\}$ and $\{A_{im'},A_{m'\ell}\}$ are disjoint because
$m,m'\notin\{i,\ell,j\}$. Hence the edge-independence assumption implies
that, conditional on $P$, the variables $\{Y_m\}$ are independent. They
are also mean zero and bounded in $[-1,1]$. Hoeffding's inequality,
followed by a union bound over the $O(n^3)$ admissible triples, gives
\[
\max_{i,\ell,j}
\left|
M^{(-j)}_{i\ell}
-
\frac{1}{n-1}
\sum_{m\notin\{i,\ell,j\}}
P_{im}P_{m\ell}
\right|
=
\mathcal O_{\PP}\!\left(
\sqrt{\frac{\log n}{n}}
\right).
\]

Next, conditional on $(\xi_i,\xi_\ell)$, the random variables
\(
Z_m
:=
f(\xi_i,\xi_m)f(\xi_m,\xi_\ell),\) and \(
m\notin\{i,\ell,j\},
\)
are independent, identically distributed, and bounded in $[0,1]$, with
\[
\EE[Z_m\mid\xi_i,\xi_\ell]
=
\int_0^1 f(\xi_i,x)f(x,\xi_\ell)\,dx.
\]
Therefore,
\[
\frac{1}{n-1}
\sum_{m\notin\{i,\ell,j\}} Z_m
=
\frac{n-3}{n-1}
\int_0^1 f(\xi_i,x)f(x,\xi_\ell)\,dx
+
\mathcal O_{\PP}\!\left(
\sqrt{\frac{\log n}{n}}
\right)
\]
uniformly over $(i,\ell,j)$. Since
\(
\left|
1-\frac{n-3}{n-1}
\right|
=
\frac{2}{n-1},
\)
and the graphon is bounded by one, the deterministic normalization error
is $O(n^{-1})$, which is absorbed into
$\mathcal O_{\PP}(\sqrt{\log n/n})$. Combining the two bounds proves
the result.
\end{proof}

\subsection{Proofs of lemmas in Subsection~\ref{subsec:empirical-concentration}: two-hop concentration}
\subsubsection{Proof of Lemma~\ref{lem:twohop-concentration}}
\begin{proof}[Proof of Lemma~\ref{lem:twohop-concentration}]
By Lemma~\ref{lem:population-twohop}, uniformly over all admissible
indices,
\[
M^{(-j)}_{i\ell}
=
\frac{1}{n-1}
\sum_{m\notin\{i,\ell,j\}}
P_{im}P_{m\ell}
+
R_{i\ell j},
\qquad
|R_{i\ell j}|
\le
Cr_n,
\]
and similarly for $M^{(-j)}_{k\ell}$. Subtracting the two expansions and
extracting the common summation range gives
\[
\begin{aligned}
M^{(-j)}_{i\ell}-M^{(-j)}_{k\ell}
&=
\frac{1}{n-1}
\sum_{m\notin\{i,k,\ell,j\}}
(P_{im}-P_{km})P_{m\ell}\\
&\quad+
\frac{
P_{ik}P_{k\ell}
-
P_{ki}P_{i\ell}
}{n-1}
+
R_{i\ell j}-R_{k\ell j}.
\end{aligned}
\]
Since $0\le P_{ab}\le1$,
\(
\left|
\frac{
P_{ik}P_{k\ell}
-
P_{ki}P_{i\ell}
}{n-1}
\right|
\le
\frac{2}{n-1}
=
o(r_n).
\)
The two remainder terms are uniformly $O(r_n)$, which proves the claimed
bound after enlarging the constant.
\end{proof}
\subsubsection{Proof of Lemma~\ref{lem:integral-to-sum}}
\begin{proof}[Proof of Lemma~\ref{lem:integral-to-sum}]
By conditioning on the latent positions $\xi_i, \xi_{k'},$ and $\xi_\ell$, the summands
\[
Y_m := (f(\xi_i,\xi_m)-f(\xi_{k'},\xi_m)) f(\xi_m,\xi_\ell),
\qquad m \notin \{i,j,k',\ell\},
\]
are independent and identically distributed, with conditional mean
\[
\EE[Y_m \mid \xi_i,\xi_{k'},\xi_\ell]
=
\int_0^1 (f(\xi_i,x)-f(\xi_{k'},x))f(x,\xi_\ell)\,dx.
\] Since each $Y_m$ is bounded in $[-1,1]$, Hoeffding's inequality gives
\[
\left|
\frac{1}{n-1}
\sum_{m\notin\{i,j,k',\ell\}}Y_m
-
\frac{n-4}{n-1}
\int_0^1
\{f(\xi_i,x)-f(\xi_{k'},x)\}f(x,\xi_\ell)\,dx
\right|
=
\mathcal O_{\PP}\!\left(
\sqrt{\frac{\log n}{n}}
\right)
\]
uniformly over all admissible tuples after a union bound over at most
$O(n^4)$ choices. Moreover,
\[
\left|
1-\frac{n-4}{n-1}
\right|
=
\frac{3}{n-1},
\]
and the absolute value of the integral is at most one. Hence the
normalization error is $O(n^{-1})$ and is absorbed into
$\mathcal O_{\PP}(\sqrt{\log n/n})$, proving the claim.
\end{proof}
\subsubsection{Proof of Lemma~\ref{lem:max-to-sup}}
\begin{proof}[Proof of Lemma~\ref{lem:max-to-sup}]
Fix $i,k'$ and let $g_{ik'}$ be defined by \eqref{eq:twohop-difference-function}.

Because $f$ is symmetric, the blockwise regularity in the first argument
also controls the second argument. If $I_m$ is smooth and
$w,w'\in I_m$, then for every $x\in[0,1]$,
\[
|f(x,w)-f(x,w')|
=
|f(w,x)-f(w',x)|
\le
L_m|w-w'|.
\]
If $I_m$ is an identical-profile block, symmetry and Assumption~\ref{as:hybrid}(2) instead give
\[
f(x,w)=f(x,w')
\qquad
\text{for all }x\in[0,1],\quad w,w'\in I_m.
\]
Consequently, writing
\(
L_*:=
\max\Bigl(
\{L_m:I_m\text{ is smooth}\}\cup\{0\}
\Bigr),
\)
we have for $w,w'$ belonging to the same block,
\begin{align*}
|g_{ik'}(w)-g_{ik'}(w')|
&\le
\int_0^1
|f(\xi_i,x)-f(\xi_{k'},x)|
\,|f(x,w)-f(x,w')|
\,dx \\
&\le
L_*|w-w'|.
\end{align*}
Thus $g_{ik'}$ is uniformly Lipschitz on each block $I_m$.
Let
\[
\delta_n:=A\frac{\log n}{n},
\]
where $A>0$ is a sufficiently large constant. Since the number of blocks is
fixed and every block has fixed positive length, for all sufficiently large
$n$ we may partition each block $I_m$ into intervals
$\{J_{m,r}^{(n)}\}_r$ whose lengths lie between $\delta_n$ and
$2\delta_n$.

For each such interval $J$, the number
\(
N_J:=\sum_{\ell=1}^n\mathbbm{1}\{\xi_\ell\in J\}
\)
is binomial with mean at least $A\log n$. A multiplicative Chernoff bound
therefore gives
\[
\PP(N_J<4)
\le
\exp(-cA\log n)
\]
for some universal constant $c>0$. The total number of intervals is
$O(n/\log n)$, so, after choosing $A$ sufficiently large and taking a
union bound,
\[
\PP\!\left(
\min_J N_J\ge4
\right)
\ge
1-Cn^{-c'}.
\]

On this event, deleting any three indices still leaves at least one
admissible sampled latent position in every interval. Hence, uniformly over
all admissible triples $(i,k',j)$, every $w\in I_m$ has some
$\ell\notin\{i,k',j\}$ with $\xi_\ell\in I_m$ such that
\(
|w-\xi_\ell|
\le
2\delta_n.
\)
By the blockwise Lipschitz bound established above,
\(
\bigl|
g_{ik'}(w)-g_{ik'}(\xi_\ell)
\bigr|
\le
2L_*\delta_n.
\)
Therefore,
\(
\sup_{w\in I_m}|g_{ik'}(w)|
\le
\max_{\substack{\ell:\,\xi_\ell\in I_m\\
\ell\notin\{i,k',j\}}}
|g_{ik'}(\xi_\ell)|
+
2L_*\delta_n.
\)
The reverse inequality holds trivially because every sampled
$\xi_\ell\in I_m$ is an admissible point in the continuous supremum.
Consequently,
\[
\left|
\max_{\substack{\ell:\,\xi_\ell\in I_m\\
\ell\notin\{i,k',j\}}}
|g_{ik'}(\xi_\ell)|
-
\sup_{w\in I_m}|g_{ik'}(w)|
\right|
\le
C\frac{\log n}{n}.
\]
Since the number of blocks is fixed, the result holds uniformly over all
blocks and all admissible triples $(i,k',j)$.
\end{proof}
\subsubsection{Proof of Corollary~\ref{cor:empirical-to-integral}}
\begin{proof}[Proof of Corollary~\ref{cor:empirical-to-integral}]
Fix distinct $i,k',\ell\in V\setminus\{j\}$ and define
\[
Q_{ik'\ell j}
:=
\frac{1}{n-1}
\sum_{m\notin\{i,k',\ell,j\}}
(P_{im}-P_{k'm})P_{m\ell}.
\]
Then
\[
\begin{aligned}
&\left|
\bigl(M^{(-j)}_{i\ell}-M^{(-j)}_{k'\ell}\bigr)
-
\int_0^1
\{f(\xi_i,x)-f(\xi_{k'},x)\}f(x,\xi_\ell)\,dx
\right|\\
&\qquad\le
\left|
\bigl(M^{(-j)}_{i\ell}-M^{(-j)}_{k'\ell}\bigr)
-
Q_{ik'\ell j}
\right|\\
&\qquad\quad+
\left|
Q_{ik'\ell j}
-
\int_0^1
\{f(\xi_i,x)-f(\xi_{k'},x)\}f(x,\xi_\ell)\,dx
\right|.
\end{aligned}
\]
The first term is bounded uniformly by
Lemma~\ref{lem:twohop-concentration}, and the second is bounded uniformly
by Lemma~\ref{lem:integral-to-sum}. Their intersection therefore gives
\[
\left|
\bigl(M^{(-j)}_{i\ell}-M^{(-j)}_{k'\ell}\bigr)
-
\int_0^1
\{f(\xi_i,x)-f(\xi_{k'},x)\}f(x,\xi_\ell)\,dx
\right|
\le
Cr_n
\]
uniformly over all admissible indices.
\end{proof}

\begin{proof}[Proof of Lemma~\ref{lem:distance-approximation}]
Fix distinct $i,k,j$ and define the sampled population discrepancy
\[
D^{\mathrm{disc}}_{ikj}
:=
\max_{\ell\notin\{i,k,j\}}
|g_{ik}(\xi_\ell)|,
\]
where $g_{ik}$ is defined in \eqref{eq:twohop-difference-function}.
On $\Ecal_{\mathrm{conc}}\cap\Ecal_{\mathrm{int}}$, Corollary~\ref{cor:empirical-to-integral}
gives, uniformly in $\ell$,
\[
\left|
\bigl(M^{(-j)}_{i\ell}-M^{(-j)}_{k\ell}\bigr)-g_{ik}(\xi_\ell)
\right|
\le Cr_n.
\]
Using the elementary inequality
\[
\left|
\max_{\ell}|a_\ell|-
\max_{\ell}|b_\ell|
\right|
\le
\max_{\ell}|a_\ell-b_\ell|,
\]
we obtain
\(
\left|
 d^{(-j)}(i,k)-D^{\mathrm{disc}}_{ikj}
\right|
\le Cr_n.
\)
On $\Ecal_{\mathrm{fill}}$, Lemma~\ref{lem:max-to-sup} and the blockwise
representation \eqref{eq:D2-blockwise} imply
\[
\left|
D^{\mathrm{disc}}_{ikj}-D_2(\xi_i,\xi_k)
\right|
\le
C\frac{\log n}{n}.
\]
Because $(\log n)/n\le r_n$ for all sufficiently large $n$, the triangle
inequality yields \eqref{eq:distance-approximation} on
$\Ecal_0=\Ecal_{\mathrm{conc}}\cap\Ecal_{\mathrm{int}}\cap\Ecal_{\mathrm{fill}}$.
The bounds are uniform over all admissible triples, completing the proof.
\end{proof}

\subsection{Proofs of Subsection~\ref{subsec:neighborhood-containment}: neighborhood containment}

\subsubsection{Proof of Lemma~\ref{lem:candidates}}
\begin{proof}[Proof of Lemma~\ref{lem:candidates}]
We treat smooth and identical-profile blocks separately.

Suppose first that $I_m$ is smooth. Since
$(i,j)\in\Ecal_\varepsilon$, the latent position $\xi_i$ lies in the
$\varepsilon$-interior of $I_m$. Because $t_n=a h_n/n=o(1)$,
\[
[\xi_i-t_n,\xi_i+t_n]\subset I_m
\]
for all sufficiently large $n$. Conditional on $\xi_i$,
\[
|S_{ij}|
\sim
\mathrm{Binomial}(n-2,p_n),
\qquad
p_n=2t_n=2a\frac{h_n}{n}.
\]
Thus
\[
\mu_n
:=
\EE[|S_{ij}|\mid\xi_i]
=
2a h_n\left(1-\frac{2}{n}\right).
\]
Since $a>1/2$, there exists $\delta_a\in(0,1)$ such that, for all
sufficiently large $n$,
\(
h_n\le(1-\delta_a)\mu_n.
\)
The multiplicative Chernoff bound therefore gives
\[
\PP\!\left(
|S_{ij}|<h_n
\,\middle|\,
\xi_i
\right)
\le
\exp(-c_a h_n)
\]
for some $c_a>0$.

Suppose next that $I_m$ is an identical-profile block, and write
$\lambda_m:=\lambda(I_m)>0$. Then, conditional on $\xi_i\in I_m$,
\[
|S_{ij}|
\sim
\mathrm{Binomial}(n-2,\lambda_m).
\]
Because $h_n=o(n)$ and $\lambda_m$ is fixed, for all sufficiently large
$n$,
\[
h_n\le \frac{\lambda_m}{2}(n-2).
\]
Another Chernoff bound therefore yields
\(
\PP(|S_{ij}|<h_n)
\le
\exp(-c_m n)
\)
for some $c_m>0$.

There are at most $n(n-1)$ admissible ordered pairs $(i,j)$. Since
$h_n\ge C_0\log n$, choosing $C_0$ sufficiently large, depending on
$a$ and the desired polynomial failure exponent, makes the smooth-block
failure probability summable over these pairs. The identical-profile-block
failure probability is exponentially smaller. Hence, a union bound gives
\[
\PP\!\left(
\min_{\substack{i\in\Gcal_\varepsilon\\j\neq i}}
|S_{ij}|<h_n
\right)
\le
Cn^{-c}.
\]
The identical-profile-block contribution is exponentially smaller and is absorbed into
the same bound.

For any candidate $k\in S_{ij}$, consider the two-hop difference function $g_{ik}$ defined in \eqref{eq:twohop-difference-function}.
If $I_m$ is an identical-profile block, then
$f(\xi_i,\cdot)=f(\xi_k,\cdot)$, and hence
\(
g_{ik}(w)=0
\qquad
\text{for every }w\in[0,1].
\)

If $I_m$ is smooth, then
$|\xi_i-\xi_k|\le t_n$ and Assumption~\ref{as:hybrid}(1) gives
\[
|f(\xi_i,x)-f(\xi_k,x)|
\le
L_m t_n
\qquad
\text{for every }x\in[0,1].
\]
Since $0\le f\le1$,
\[
\begin{aligned}
|g_{ik}(w)|
&\le
\int_0^1
|f(\xi_i,x)-f(\xi_k,x)|f(x,w)\,dx\\
&\le
L_m t_n
=
aL_m\frac{h_n}{n}
\end{aligned}
\]
uniformly in $w$.

By Lemma~\ref{lem:distance-approximation}, on $\Ecal_0$,
\(
d^{(-j)}(i,k)
\le
D_2(\xi_i,\xi_k)+Cr_n.
\)
For an identical-profile block, $D_2(\xi_i,\xi_k)=0$. For a smooth block, the preceding
bound on $g_{ik}$ gives
\[
D_2(\xi_i,\xi_k)
=
\operatorname*{ess\,sup}_{w\in[0,1]}|g_{ik}(w)|
\le
C\frac{h_n}{n}.
\]
Therefore, for every $k\in S_{ij}$,
\(
d^{(-j)}(i,k)
\le
C_1\frac{h_n}{n}
+
C_2\left(\frac{\log n}{n}\right)^{1/2}
\le
C b_n(h_n).
\)
\end{proof}

\subsubsection{Proof of Lemma~\ref{lem:slice-close}}
\begin{proof}[Proof of Lemma~\ref{lem:slice-close}]
Because $k\in\Ncal_i^{(-j)}$, it is one of the $h_n$ nearest neighbors of
$i$ under the empirical LOO distance. On $\Ecal_{\mathrm{master}}$,
Lemma~\ref{lem:candidates} provides at least $h_n$ candidate vertices whose
empirical distances from $i$ are bounded by $C b_n(h_n)$. Therefore
\[
d^{(-j)}(i,k)
\le
d^{(-j)}_{i,(h_n)}
\le
C b_n(h_n).
\]
By Lemma~\ref{lem:distance-approximation},
\[
D_2(\xi_i,\xi_k)
\le
d^{(-j)}(i,k)+Cr_n.
\]
Since $r_n\le b_n(h_n)$, the preceding nearest-neighbor bound gives
\(
D_2(\xi_i,\xi_k)
\le
C b_n(h_n),
\)
which proves the claim.
\end{proof}

\subsection{Proofs of Subsection~\ref{subsec:bias-inversion}: latent geometry}

\begin{proof}[Proof of Lemma~\ref{lem:injectivity}]
This is an immediate rearrangement of
Assumption~\ref{as:nondeg-block}(1).
\end{proof}

\begin{proof}[Proof of Lemma~\ref{lem:metric-to-slice}]
Fix an interior edge $(i,j) \in \Ecal_\varepsilon$ with $i$ in block $I_m$. By Lemma~\ref{lem:containment}, all selected neighbors $k \in \Ncal_i^{(-j)}$ belong to the same block $I_m$. We consider two cases:

For a smooth block, Lemma~\ref{lem:twohop-to-slice} and the lower
bi-Lipschitz bound in Assumption~\ref{as:hybrid}(1) give
\[
c_m|\xi_i-\xi_k|
\le
\|f(\xi_i,\cdot)-f(\xi_k,\cdot)\|_{L^2}
\le
C_0 b_n(h_n).
\]
Hence
\(
|\xi_i-\xi_k|
\le
\frac{C_0}{c_m}b_n(h_n).
\)
The pointwise Lipschitz property then gives, uniformly in $j$,
\[
|P_{kj}-P_{ij}|
=
|f(\xi_k,\xi_j)-f(\xi_i,\xi_j)|
\le
L_m|\xi_k-\xi_i|
\le
C b_n(h_n).
\]
Averaging over $k\in\Ncal_i^{(-j)}$ proves the stated smooth-block
bias bound. For an identical-profile block, Lemma~\ref{lem:containment} places all
selected neighbors in the same identical-profile block, and
Assumption~\ref{as:hybrid}(2) gives $P_{kj}=P_{ij}$ for every selected
$k$, proving the stated zero-bias conclusion.
\end{proof}

\subsubsection{Proof of Lemma~\ref{lem:containment}}
\begin{proof}[Proof of Lemma~\ref{lem:containment}]
Let $k'\in V\setminus\{i,j\}$ satisfy $\xi_{k'}\notin I_m$. By the across-block two-hop separation in
Assumption~\ref{as:nondeg-block}(2),
\(
D_2(\xi_i,\xi_{k'})\ge\eta_2.
\)
On $\Ecal_{\mathrm{master}}\subseteq\Ecal_0$,
Lemma~\ref{lem:distance-approximation} therefore implies
\[
d^{(-j)}(i,k')
\ge
D_2(\xi_i,\xi_{k'})-C r_n
\ge
\eta_2-C r_n.
\]
Conversely, every candidate $k\in S_{ij}$ satisfies
\(
d^{(-j)}(i,k)
\le
C b_n(h_n)
\)
by Lemma~\ref{lem:candidates}. Since $h_n/n\to0$ and $r_n\to0$,
\[
b_n(h_n)\to0.
\]
Hence, for all sufficiently large $n$,
\[
C b_n(h_n)
<
\eta_2-C r_n.
\]
Thus every candidate vertex is empirically closer to $i$ than every
vertex whose latent position lies outside $I_m$. Since $|S_{ij}|\ge h_n$,
the $h_n$ selected nearest neighbors must all have latent positions in
$I_m$, which proves the stated containment.
\end{proof}

\subsubsection{Proof of Lemma~\ref{lem:twohop-to-slice}}
\begin{proof}[Proof of Lemma~\ref{lem:twohop-to-slice}]
By Lemma~\ref{lem:containment}, every selected neighbor
$k\in\Ncal_i^{(-j)}$ belongs to the same smooth block $I_m$ as $i$.
Lemma~\ref{lem:slice-close} gives
\[
\sup_{w\in I_m}
\left|
\int_0^1
\{f(\xi_i,x)-f(\xi_k,x)\}f(x,w)\,dx
\right|
\le
C_1 b_n(h_n).
\]
Applying Lemma~\ref{lem:injectivity},
\[
\|f(\xi_i,\cdot)-f(\xi_k,\cdot)\|_{L^2}
\le
\kappa_m^{-1}C_1 b_n(h_n),
\]
which proves the claim after absorbing $\kappa_m^{-1}$ into the constant.
\end{proof}

\bibliographystyle{plainnat}

\bibliography{Reference}

\end{document}

%% file: generated_results/regenerated_20260831/tables/rate.tex
\begin{table}[tb]
\centering
\caption{Regenerated rate experiment at
$h_n=\operatorname{round}(n^{2/3})$. ``Scaled'' denotes
$n^{2/3}$ times the LOO MSE.}
\label{tab:rate-simulation}
\begin{tabular}{lrrrrr}
\toprule
Model & $n$ & $h_n$ & LOO MSE & ZLZ MSE & Scaled \\
\midrule
Kink & 200 & 34 & 0.00779 & 0.00843 & 0.266 \\
Kink & 500 & 63 & 0.00438 & 0.00424 & 0.276 \\
Kink & 1000 & 100 & 0.00248 & 0.00268 & 0.248 \\
Kink & 2000 & 159 & 0.00169 & 0.00168 & 0.269 \\
\addlinespace
Linear & 200 & 34 & 0.00792 & 0.00803 & 0.271 \\
Linear & 500 & 63 & 0.00456 & 0.00474 & 0.288 \\
Linear & 1000 & 100 & 0.00286 & 0.00300 & 0.286 \\
Linear & 2000 & 159 & 0.00171 & 0.00170 & 0.271 \\
\addlinespace
SBM & 200 & 34 & 0.00661 & 0.00646 & 0.226 \\
SBM & 500 & 63 & 0.00337 & 0.00373 & 0.212 \\
SBM & 1000 & 100 & 0.00218 & 0.00228 & 0.218 \\
SBM & 2000 & 159 & 0.00125 & 0.00140 & 0.198 \\
\bottomrule
\end{tabular}
\end{table}

%% file: generated_results/regenerated_20260831/tables/gaussian.tex
\begin{table}[tb]
\centering
\caption{Regenerated feasible Gaussian inference under
$h_n=\operatorname{round}(n^{3/5})$. Widths are truncated to $[0,1]$.
The column $\widehat V/V$ denotes $\widehat V_{ij}^{\mathrm{emp}}/V_{ij}$.
The ZLZ interval is a descriptive plug-in Gaussian comparator.}
\label{tab:gaussian-inference}
\begin{tabular}{lrrrrrr}
\toprule
Model & $n$ & $h_n$ & LOO cov. & ZLZ cov. & LOO width & $\widehat V/V$ \\
\midrule
Kink & 200 & 24 & 0.937 & 0.923 & 0.397 & 1.007 \\
Kink & 500 & 42 & 0.946 & 0.935 & 0.301 & 1.004 \\
Kink & 1000 & 63 & 0.932 & 0.930 & 0.245 & 1.002 \\
Kink & 2000 & 96 & 0.954 & 0.958 & 0.199 & 1.001 \\
\addlinespace
Linear & 200 & 24 & 0.945 & 0.925 & 0.396 & 1.010 \\
Linear & 500 & 42 & 0.935 & 0.929 & 0.299 & 1.002 \\
Linear & 1000 & 63 & 0.945 & 0.940 & 0.245 & 1.001 \\
Linear & 2000 & 96 & 0.958 & 0.975 & 0.199 & 1.004 \\
\addlinespace
SBM & 200 & 24 & 0.957 & 0.948 & 0.366 & 1.008 \\
SBM & 500 & 42 & 0.929 & 0.898 & 0.276 & 0.999 \\
SBM & 1000 & 63 & 0.938 & 0.932 & 0.226 & 0.998 \\
SBM & 2000 & 96 & 0.975 & 0.971 & 0.183 & 1.001 \\
\bottomrule
\end{tabular}
\end{table}

%% file: generated_results/regenerated_20260831/tables/bernstein.tex
\begin{table}[tb]
\centering
\caption{Regenerated empirical Bernstein performance for the conditional
neighborhood mean under $h_n=\operatorname{round}(n^{3/5})$. ``Full'' is
the fraction of truncated intervals equal to $[0,1]$.}
\label{tab:bernstein-inference}
\begin{tabular}{lrrrrr}
\toprule
Model & $n$ & $h_n$ & EB cov. & EB width & Full \\
\midrule
Kink & 200 & 24 & 1.000 & 0.997 & 0.960 \\
Kink & 500 & 42 & 1.000 & 0.899 & 0.000 \\
Kink & 1000 & 63 & 1.000 & 0.699 & 0.000 \\
Kink & 2000 & 96 & 1.000 & 0.515 & 0.000 \\
\addlinespace
Linear & 200 & 24 & 1.000 & 0.997 & 0.958 \\
Linear & 500 & 42 & 1.000 & 0.891 & 0.000 \\
Linear & 1000 & 63 & 1.000 & 0.697 & 0.000 \\
Linear & 2000 & 96 & 1.000 & 0.515 & 0.000 \\
\addlinespace
SBM & 200 & 24 & 1.000 & 0.962 & 0.615 \\
SBM & 500 & 42 & 1.000 & 0.758 & 0.000 \\
SBM & 1000 & 63 & 1.000 & 0.627 & 0.000 \\
SBM & 2000 & 96 & 1.000 & 0.491 & 0.000 \\
\bottomrule
\end{tabular}
\end{table}

%% file: generated_results/regenerated_20260831/tables/wiggly.tex
\begin{table}[tb]
\centering
\caption{Regenerated oscillatory stress test under the inference bandwidth.
The design is not used to validate the main theorem because quantitative
two-hop identifiability has not been verified.}
\label{tab:wiggly-stress}
\begin{tabular}{rrrrrrr}
\toprule
$n$ & $h_n$ & LOO cov. & ZLZ cov. & LOO width & $\widehat V/V$ & Std. bias \\
\midrule
200 & 24 & 0.878 & 0.885 & 0.373 & 1.034 & 0.472 \\
500 & 42 & 0.910 & 0.890 & 0.277 & 1.015 & 0.308 \\
1000 & 63 & 0.955 & 0.955 & 0.226 & 1.022 & 0.251 \\
2000 & 96 & 0.950 & 0.946 & 0.181 & 1.003 & 0.213 \\
\bottomrule
\end{tabular}
\end{table}

%% file: generated_results/regenerated_20260831/tables/additional_stress.tex
\begin{table}[tb]
\centering
\caption{Regenerated additional stress-test results at $n=500$.
Point-estimation MSEs use $h_n=63$ and Gaussian inference uses $h_n=42$.}
\label{tab:additional-stress-tests}
\small
\setlength{\tabcolsep}{4pt}
\begin{tabular}{lrrrrrr}
\toprule
Model & LOO MSE & ZLZ MSE & LOO cov. & ZLZ cov. & LOO width & $\widehat V/V$ \\
\midrule
Sinusoidal & 0.00377 & 0.00379 & 0.948 & 0.935 & 0.286 & 1.001 \\
Additive linear & 0.00463 & 0.00470 & 0.935 & 0.925 & 0.294 & 0.996 \\
Spiky & 0.00249 & 0.00252 & 0.921 & 0.925 & 0.237 & 0.978 \\
\bottomrule
\end{tabular}
\end{table}

%% file: generated_results/regenerated_20260831/tables/bandwidth_sweep.tex
\begin{table}[tb]
\centering
\caption{Regenerated independent neighborhood-size sweep for the kink
graphon at $n=1000$. The final column is the Monte Carlo mean of
$|B_{ij}^{(-j)}|/\sqrt{V_{ij}}$.}
\label{tab:bandwidth-sweep}
\begin{tabular}{rrrr}
\toprule
$c$ & $h_n$ & LOO MSE & Mean standardized bias \\
\midrule
0.5 & 50 & 0.00519 & 0.166 \\
1.0 & 100 & 0.00278 & 0.231 \\
2.0 & 200 & 0.00149 & 0.320 \\
4.0 & 400 & 0.00075 & 0.436 \\
\bottomrule
\end{tabular}
\end{table}